\documentclass[submission,copyright,creativecommons]{eptcs}
\providecommand{\event}{LSFA 2026}

\usepackage{iftex}
\ifpdf
  \usepackage{underscore}         
  \usepackage[T1]{fontenc}        
\else
  \usepackage{breakurl}           
\fi

\usepackage{xcolor}
\definecolor{blue4}{rgb}{0,0,0.545}

\usepackage{float}
\usepackage{graphicx}
\usepackage{subcaption}
\usepackage{tikz}
\usepackage{tikz-cd}
\usepackage{circuitikz}
\usetikzlibrary{petri,positioning, calc,decorations.pathmorphing,shapes,cd,arrows}

\usepackage{caption}
\usepackage{multirow}

\usepackage[inline]{enumitem}

\usepackage{algorithm}
\usepackage{algpseudocode}
\algrenewcommand\algorithmicrequire{\textbf{Input:}}
\algrenewcommand\algorithmicensure{\textbf{Output:}}
\newcommand{\code}[1]{\texttt{#1}}
\usepackage{listings}
\definecolor{IdrisBlue}{RGB}{0,0,180}
\definecolor{IdrisGreen}{RGB}{0,150,0}
\definecolor{IdrisOrange}{RGB}{173,131,43}
\lstdefinelanguage{Idris}{
  sensitive=true,
  alsoletter={'},
  keywords=[1]{data,record,where,constructor,auto,Either,Maybe},
  keywords=[2]{Nat,String,Fin,Vect,List,So,Type,Injective,Elem,LTE,integerLessThanNat,maximum,union},
  keywords=[3]{Span,Coproduct,Pushout,computeCoproduct,uniqueFromCoproduct,computePushout,SEQ,P_ASYNC,BRA_CLOSED,BRA_OPEN},
  keywords=[4]{Computon,Morphism,Subset,Trivial,Primitive,Surjective,Monomorphism,InMarker,OutMarker,vecI,vecO,inports,outports,image,mkTrivial,mkPrimitive,mkComputonMorphism,mkComputonMonomorphism,mkInMarker,mkOutMarker},
  keywordstyle=[1]\color{IdrisBlue}\bfseries,
  keywordstyle=[2]\color{IdrisGreen},
  keywordstyle=[3]\color{IdrisOrange},
  keywordstyle=[4]\color{IdrisOrange},  
  basicstyle=\ttfamily,
  frame=single
}
\usepackage{url}
\usepackage{siunitx}
\usepackage{amssymb,mathrsfs,nccmath,bbm,amsthm}
\usepackage{stmaryrd}
\usepackage{centernot}
\usepackage{mathpartir}
\newtheorem{notation}{Notation}
\newtheorem{definition}{Definition}
\newtheorem{proposition}{Proposition}
\newtheorem{remark}{Remark}
\newtheorem{lemma}{Lemma}
\newtheorem{theorem}{Theorem}
\newtheorem{example}{Example}

\hypersetup{
  colorlinks=true,
  linkcolor=blue4,    
  citecolor=blue4,    
  urlcolor=blue4      
}

\newcommand{\arrowflow}{
\tikz \draw[-{to}] (-1pt,0) -- (1pt,0);
}

\newcommand{\flow}[4]{ 
	\draw[#3] (#1) to [#4] node [pos=0.5] {\arrowflow} (#2);
}
\newcommand{\flowdiag}[5]{ 
	\draw[#3] (#1) to [#4] node [#5] {\arrowflow} (#2);
}

\tikzset{fill fraction/.style n args={1}{path picture={
 \fill[black] (path picture bounding box.south west) rectangle
 ($(path picture bounding box.north west)!0.5!(path picture bounding box.north
 east)$);}}
}
\newcommand{\qmatch}[4]{ 
	\node[draw,fill=white,fill fraction={black}{0.5},inner sep=1.2pt,minimum size=3pt,label=left:{\scriptsize #4}] (#1) at (#2,#3) {};
}
\newcommand{\dmatch}[5]{ 
	\node[circle,draw,fill=white,fill fraction={black}{0.5},inner sep=1.2pt,minimum size=3pt,label=#5:{\scriptsize #4}] (#1) at (#2,#3) {};
}
\newcommand{\qinplain}[4]{ 
\node[draw=black,fill=white,inner sep=0pt,minimum size=3pt,label=\scriptsize #4] (#1) at (#2,#3) {};
}
\newcommand{\qin}[4]{ 
\flow{{#2,#3}}{$(#2,#3)+(0.5,0)$}{densely dashed}{};
\qinplain{#1}{#2}{#3}{#4};
}
\newcommand{\dinplain}[4]{ 
\node[circle,draw=black,fill=white,inner sep=0pt,minimum size=3pt,label=left:\scriptsize #4] (#1) at (#2,#3) {};
}
\newcommand{\din}[4]{ 
\flow{{#2,#3}}{$(#2,#3)+(0.5,0)$}{}{};
\dinplain{#1}{#2}{#3}{#4};
}
\newcommand{\qoutplain}[4]{ 
\node[fill=black,inner sep=0pt,minimum size=3pt,label=\scriptsize #4] (#1) at (#2+0.5,#3) {};
}
\newcommand{\qout}[4]{ 
\flow{{#2,#3}}{$(#2,#3)+(0.5,0)$}{densely dashed}{};
\qoutplain{#1}{#2}{#3}{#4};
}
\newcommand{\doutplain}[4]{ 
\node[circle,fill=black,inner sep=0pt,minimum size=3pt,label=right:\scriptsize #4] (#1) at (#2+0.5,#3) {};
}
\newcommand{\dout}[4]{ 
\flow{{#2,#3}}{$(#2,#3)+(0.5,0)$}{}{};
\doutplain{#1}{#2}{#3}{#4};
}

\newcommand{\primitive}[4]{ 
\draw[draw=black,fill=white] (#1,#2) rectangle ++(#3,#4);
}
\newcommand{\computonComposite}[4]{ 
\draw[draw=black,fill=black!2] (#1,#2) rectangle ++(#3,#4);
}

\title{Colimit-Based Composition of High-Level Computing Devices}
\author{Damian Arellanes
\institute{Lancaster University}
\email{damian.arellanes@lancaster.ac.uk}}
\def\titlerunning{Colimit-Based Composition of High-Level Computing Devices}
\def\authorrunning{D. Arellanes}

\begin{document}
\maketitle         
\vspace{-5pt}
\begin{abstract}
Models of High-level Computation (MHCs) provide effective means to describe complex real-world computing systems because they offer formal foundations for the specification of interacting computing devices, as opposed to describing individual ones, which has been the focus of classical models such as Turing machines or the lambda calculus itself. Despite numerous proposals over the past half century, there is still no canonical MHC akin to Turing machines for (compositionally) reasoning about computation in the large. One of the major drawbacks of state- and data-oriented MHCs is that they extensively neglect control flow, a well-known semantic property that defines computation order. Only control-oriented MHCs treat control explicitly at the expense of ignoring data flow or assuming that data follows control. Mixing data and control within the same framework leads to inefficient methods for formal analysis and verification. To address this, the computon model has recently emerged as a category-theoretic MHC that separates data and control and makes control explicit by supporting composition operators characterised as finite colimit constructions. Such constructions allow the formation of sequential, parallel, branching and iterative computing devices. Unfortunately, the computon model is still a generic reference rather than a concrete realisation. In this paper, we provide a variation of it to enable functional computing devices, introduce a new branching operator, discuss how to define synchronous parallelising out of sequencing and asynchronous parallelising, describe concrete operational semantics for computon execution and provide the first implementation of the model. The implementation yields an open-source programming environment that realises the underlying categorical semantics with partial type-level guarantees. This tool is publicly available for building complex computing devices with a high degree of structural correctness by construction.
\end{abstract}

\section{Introduction}
\label{sec:Introduction}

The Church-Turing thesis states that a function is computable if there is an effective procedure able to produce its values. Such a thesis has been successful for reasoning about computation in the small where a single abstract device realises a well-defined effective procedure. Over the past half century, there has been a collective attempt to move from the small to the large by defining models of computation capable of describing not the behaviour of isolated computing devices, but interactions among a collection of them. This paradigm shift, referred to as Models of High-Level Computation (MHCs) \cite{arellanesmodels2025}, has become increasingly relevant due to the need of describing complex computations in real-world domains (e.g., distributed systems). To date, there is no standard MHC akin to Turing machines for constructing or reasoning about computation in the large, despite the wide variety of MHCs that have emerged over time, e.g., Kahn Process Networks \cite{kahnsemantics1974}, Hierarchical State Machines (HSMs) \cite{harelstatecharts1987}, ONets \cite{baezopen2020} and, more recently, algebras over operads of wiring diagrams \cite{yauoperads2018}.

In the context of MHCs, interaction is the causal effect of composition, an inductive mechanism for gluing together computing devices into more complex ones known as composites. For example, HSMs compose state machines by nesting, ONets compose Petri nets by identifying outputs with inputs via pushout constructions and Kahn networks combine processes by connecting them through (possibly unbounded) FIFO channels. In general, composition can be realised through the combination of states, control flow, data flow or any combination thereof. Although control lies at the heart of computation, because it is ever present in any (low- or even high-level) computing device, it is striking that such a dimension has been treated implicitly by some classes of MHCs \cite{arellanesmodels2025}. For example, in both HSMs and ONets, control is implicit in the activation of state transitions and, in Kahn networks, in data exchanges. Making control explicit in an MHC leads to formal reasoning of computation order, facilitating the verification of common computational properties such as reachability or termination \cite{vanderaalstsoundness2011}.

The computon model \cite{arellanescompositional2026,arellanesboolean2026} provides categorical semantics to formally compose high-level computing devices by control flow in an incremental and bottom-up manner. For this, it provides composition operators, with behaviour characterised as finite colimit constructions, that define explicit control flow for the activation of computing devices in some well-defined order. Data flow is considered, albeit it is a second-class dimension governed by control. Particularly, there are separate operators for forming sequential, (synchronous and asynchronous) parallel, branching and iterative computing devices. As their internal structure is only accessible through an interface, devices are modular black boxes with explicit separation of concerns. Separating data from control has proven relevant for reasoning about such dimensions independently \cite{clarkemodel2008}. For example, one can verify termination by analysing control flow only or data reachability without inspecting control flow at all \cite{vanderaalstsoundness2011}. Beyond verification, separating concerns has been effective for model transformation purposes \cite{arellanescompositional2026}.

Although the separation of control and data is a distinctive property of the computon model, such a model is still a generic formalism that serves as a reference rather than a concrete realisation dictating how devices should compute. Accordingly, it leaves the interpretation of computing devices open by avoiding concrete implementations and giving operational semantics at a higher-level of abstraction via P/T Petri nets. Moreover, the operators it provides do not offer sufficient flexibility for modelling expressive decision-making structures.

In this paper, we move from the abstract to the particular by endowing computons with structural relations between inputs and outputs to enable functional computing devices, and introduce a new operator for branching. We also show that the original operator for forming synchronous parallel composites can be defined out of sequencing and asynchronous parallelising, and study additional algebraic properties in terms of identity, a law that was not originally analysed in \cite{arellanescompositional2026,arellanesboolean2026}. The original definition of the sequential operator is modified to satisfy this law. 

Our ultimate goal is to offer a universal framework to formally reason about high-level computing devices in a compositional manner via categorical semantics. As the proposed framework is sufficiently general, we envision multiple theoretical extensions and tools built out of it. In this paper, we take the first step towards a tool that bridges theory and practice by implementing the computon model in a functional programming language with dependent types. The implementation yields an open-source programming environment, intended for describing high-level computations via colimit constructions. To the best of our knowledge, this is the first implementation of the computon model.

The rest of the paper is structured as follows. Section \ref{sec:category} presents the semantics of a category of computons and their morphisms. Section \ref{sec:operators} describes colimit constructions that capture the behaviour of control-driven composition operators for sequencing, parallelising and branching. Section \ref{sec:operation} defines concrete operational semantics for computon execution. Section \ref{sec:implementation} presents an implementation of the computon model. Section \ref{sec:related-work} discusses related work. Section \ref{sec:conclusions} outlines the conclusions and future work. 

\section{Categories of Computons and Computon Morphisms}
\label{sec:category}

A computon is intuitively a bipartite graph where nodes are ports or computation units, connected through edges that represent control flow or data flow. A port is a buffer for storing a datum or a control signal so it has a type associated to it. To abstract away from concrete types, a finite nonempty set ${\underline{n}:=[0,n)\cap\mathbb{N}}$ is used, capturing the essence of a fixed type system with ${n\geq 1}$ possible port types. In any choice, $0$ represents the type of control signals whereas the other numbers are used for other data types. Having $\underline{1}:=\{0\}$ as the minimal choice entails that control ports are always present whereas data ports are optional, i.e., computation is necessarily driven by control signals (cf. Fig. \ref{fig:examples-intro}c). Although computation units are not mandatory, as shown in Fig. \ref{fig:examples-intro}a, they always have control ports attached when they exist. In general, Fig. \ref{fig:examples-intro} shows diverse computon examples over the set $\underline{7}$.
\vspace{-16pt}
\begin{figure}[H]
\subcaptionbox{}
{
\begin{tikzpicture}[scale=0.76]
\qmatch{0t}{0}{2}{};
\qmatch{1t}{0}{1.6}{};
\qmatch{2t}{0}{1.2}{};
\dmatch{3t}{0}{0.8}{$5$}{left};
\dmatch{4t}{0}{0.4}{$4$}{left};
\dmatch{5t}{0}{0}{$5$}{left};
\end{tikzpicture}
}
\subcaptionbox{}
{
\begin{tikzpicture}[scale=0.76]
\primitive{0.5}{0}{0.6}{1.4}
\qin{i0pc}{0}{1.3}{};
\din{i1pc}{0}{0.9}{$6$};
\din{i1pc}{0}{0.5}{$2$};
\din{i2pc}{0}{0.1}{$3$};

\node at (0.9,0.7){\scriptsize $\nabla_4$};\draw[densely dotted] (0.5,0.9) -- (0.65,0.75);\draw[densely dotted] (0.5,0.5) -- (0.65,0.75);\draw[densely dotted] (0.5,0.1) -- (0.65,0.75);\draw[densely dotted] (0.9,0.8) -- (1.1,0.8);\draw[densely dotted] (0.5,1.3) -- (0.65,0.75);

\qout{o0pc}{1.1}{0.8}{};
\end{tikzpicture}
}
\subcaptionbox{}
{
\begin{tikzpicture}[scale=0.76]
\primitive{0.5}{0.3}{0.5}{1}
\qin{i0pc}{0}{0.8}{};
\qout{o0pc}{1}{0.8}{};

\node at (0.75,0.8){\scriptsize $\epsilon$};\draw[densely dotted] (0.5,0.8) -- (0.65,0.8);\draw[densely dotted] (0.85,0.8) -- (1,0.8);
\node at (0,0){};
\end{tikzpicture}
}
\subcaptionbox{}
{
\begin{tikzpicture}[scale=0.76]
\computonComposite{0.15}{-0.3}{2.75}{1.5};

\primitive{0.5}{0}{0.5}{1}
\qin{i0pc}{0}{0.9}{};
\qin{i1pc}{0}{0.3}{};
\node at (0.75,0.3){\tiny $9$};\draw[densely dotted] (0.5,0.3) -- (0.65,0.3);\draw[densely dotted] (0.87,0.3) -- (1,0.3);
\node at (0.75,0.85){\scriptsize $\epsilon$};\draw[densely dotted] (0.5,0.85) -- (0.65,0.85);\draw[densely dotted] (0.75,0.85) -- (1,0.85);

\qmatch{o0pc}{1.5}{0.85}{};\flow{{1,0.85}}{o0pc}{densely dashed}{};\flow{o0pc}{{2,0.85}}{densely dashed}{};
\doutplain{01pc}{2.5}{-0.15}{$1$};\flowdiag{{1,0.3}}{01pc}{bend right=15}{}{pos=0.3,rotate=-10};

\primitive{2}{0}{0.5}{1}
\qout{o0ac}{2.5}{0.85}{};
\node at (2.25,0.85){\scriptsize $\epsilon$};\draw[densely dotted] (2,0.85) -- (2.15,0.85);\draw[densely dotted] (2.35,0.85) -- (2.5,0.85);

\end{tikzpicture}
}
\subcaptionbox{}
{
\begin{tikzpicture}[scale=0.76]
\computonComposite{0.15}{-0.3}{2.75}{1.8};

\primitive{0.5}{0}{0.5}{1.4}
\din{i1pc}{0}{0.5}{$1$};
\din{i2pc}{0}{0.1}{$1$};
\node at (0.75,0.3){\scriptsize $\times$};\draw[densely dotted] (0.5,0.5) -- (0.65,0.3);\draw[densely dotted] (0.5,0.1) -- (0.65,0.3);\draw[densely dotted] (0.87,0.3) -- (1,0.3);
\node at (0.75,0.9){\tiny $\nabla$};\draw[densely dotted] (0.5,0.9) -- (0.65,0.9);\draw[densely dotted] (0.85,0.9) -- (1,0.9);
\node at (0.75,1.25){\scriptsize $\epsilon$};\draw[densely dotted] (0.5,1.25) -- (0.65,1.25);\draw[densely dotted] (0.85,1.25) -- (1,1.25);
\din{i3pc}{0}{0.9}{$6$};
\qin{i0pc}{0}{1.25}{};

\qmatch{o2pc}{1.5}{1.25}{};\flow{{1,1.25}}{o2pc}{densely dashed}{};\flow{o2pc}{{2,1.25}}{densely dashed}{};
\qmatch{o0pc}{1.5}{0.9}{};\flow{{1,0.9}}{o0pc}{densely dashed}{};\flow{o0pc}{{2,0.9}}{densely dashed}{};
\dmatch{o1pc}{1.5}{0.3}{$1$}{below};\flow{{1,0.3}}{o1pc}{}{};\flow{o1pc}{{2,0.3}}{}{};

\primitive{2}{0}{0.5}{1.4}
\qout{o0ac}{2.5}{1.25}{};
\dout{o1ac}{2.5}{0.3}{$1$};
\dout{o2ac}{2.5}{0.9}{$1$};
\node at (2.25,0.3){\scriptsize $S$};\draw[densely dotted] (2,0.3) -- (2.15,0.3);\draw[densely dotted] (2.37,0.3) -- (2.5,0.3);
\node at (2.25,0.9){\tiny $5$};\draw[densely dotted] (2,0.9) -- (2.15,0.9);\draw[densely dotted] (2.37,0.9) -- (2.5,0.9);
\node at (2.25,1.25){\scriptsize $\epsilon$};\draw[densely dotted] (2,1.25) -- (2.15,1.25);\draw[densely dotted] (2.35,1.25) -- (2.5,1.25);
\end{tikzpicture}
}
\subcaptionbox{}
{
\begin{tikzpicture}[scale=0.84]
\computonComposite{0.1}{-0.3}{1.3}{2.8};
\primitive{0.5}{1.2}{0.5}{1}
\din{i1p}{0}{2.05}{$1$};
\din{i2p}{0}{1.7}{$1$};
\qin{i0p}{0}{1.35}{};
\dout{o1p}{1}{1.9}{$1$};
\qout{o0p}{1}{1.35}{};
\node at (0.75,1.9){\scriptsize $\times$};\draw[densely dotted] (0.5,2.05) -- (0.65,1.9);\draw[densely dotted] (0.5,1.7) -- (0.65,1.85);\draw[densely dotted] (0.87,1.9) -- (1,1.9);
\node at (0.75,1.35){\scriptsize $\epsilon$};\draw[densely dotted] (0.5,1.35) -- (0.65,1.35);\draw[densely dotted] (0.85,1.35) -- (1,1.35);

\primitive{0.5}{0}{0.5}{1}
\qin{i0a}{0}{0.85}{};
\din{i1a}{0}{0.5}{$1$};
\din{i2a}{0}{0.1}{$2$};
\qout{o0a}{1}{0.85}{};
\dout{o1a}{1}{0.3}{$2$};
\node at (0.75,0.3){\scriptsize $+$};\draw[densely dotted] (0.5,0.5) -- (0.65,0.3);\draw[densely dotted] (0.5,0.1) -- (0.65,0.3);\draw[densely dotted] (0.87,0.3) -- (1,0.3);
\node at (0.75,0.85){\scriptsize $\epsilon$};\draw[densely dotted] (0.5,0.85) -- (0.65,0.85);\draw[densely dotted] (0.85,0.85) -- (1,0.85);
\end{tikzpicture}
}
\subcaptionbox{}
{
\begin{tikzpicture}[scale=0.84]
\computonComposite{0.2}{-0.2}{1.1}{2.6};

\primitive{0.5}{1.2}{0.5}{1}
\node at (0.75,1.5){\scriptsize $S$};\draw[densely dotted] (0.5,1.5) -- (0.65,1.5);\draw[densely dotted] (0.87,1.5) -- (1,1.5);
\node at (0.75,2.05){\scriptsize $\epsilon$};\draw[densely dotted] (0.5,2.05) -- (0.65,2.05);\draw[densely dotted] (0.85,2.05) -- (1,2.05);

\primitive{0.5}{0}{0.5}{1}
\node at (0.75,0.3){\scriptsize $!$};\draw[densely dotted] (0.5,0.3) -- (0.65,0.3);\draw[densely dotted] (0.87,0.3) -- (1,0.3);
\node at (0.75,0.85){\scriptsize $\epsilon$};\draw[densely dotted] (0.5,0.85) -- (0.65,0.85);\draw[densely dotted] (0.85,0.85) -- (1,0.85);

\qinplain{i0}{0}{1.4}{};\flowdiag{i0}{{0.5,2.05}}{dashed}{}{pos=0.5,rotate=45};\flowdiag{i0}{{0.5,0.85}}{dashed}{}{pos=0.7,rotate=315};	
\dinplain{i1}{0}{1}{$1$}{left};\flowdiag{i1}{{0.5,1.5}}{}{}{pos=0.7,rotate=45};\flowdiag{i1}{{0.5,0.3}}{}{}{pos=0.5,rotate=315};
\qoutplain{o0}{1}{1.4}{};\flowdiag{{1,2.05}}{o0}{dashed}{}{pos=0.5,rotate=315};\flowdiag{{1,0.85}}{o0}{dashed}{}{pos=0.3,rotate=45};	
\doutplain{o1}{1}{1}{$1$}{right};\flowdiag{{1,1.5}}{o1}{}{}{pos=0.3,rotate=315};\flowdiag{{1,0.3}}{o1}{}{}{pos=0.6,rotate=45};
\end{tikzpicture}
}
{
\begin{center}
\begin{tikzpicture}
\begin{scope}
\draw[draw=black,fill=white] (0,0.4) rectangle ++(0.12,0.25);
\node[inner sep=0pt,minimum size=0pt,label=right:{\scriptsize Comp. unit}] at (0.2,0.5) {};
\draw[densely dashed] (2.1,0.5) to node[pos=0.5]{\arrowflow} (2.6,0.5);
\node[inner sep=0pt,minimum size=3pt,label=right:{\scriptsize Control flow}] at (2.5,0.5) {};
\node[draw=black,fill=white,inner sep=0pt,minimum size=3pt,label=right:{\scriptsize Control inport}] at (4.8,0.5) {};
\node[fill=black,inner sep=0pt,minimum size=3pt,label={right:{\scriptsize Control outport}}] at (7.3,0.5) {};
\node[draw,fill=white,fill fraction={black}{0.5},inner sep=0pt,minimum size=3pt,label=right:{\scriptsize Control inoutport}] at (9.8,0.5) {};
\end{scope}

\begin{scope}
\draw (0,0) to node[pos=0.5]{\arrowflow} (0.4,0);
\node[inner sep=0pt,minimum size=3pt,label=right:{\scriptsize Data flow}] at (0.3,0){};
\node[circle,draw=black,fill=white,inner sep=0pt,minimum size=3pt,label={right:{\scriptsize Data inport}}] at (2.1,0) {};
\node[circle,fill=black,inner sep=0pt,minimum size=3pt,label={right:{\scriptsize Data outport}}] at (4.2,0) {};
\node[circle,draw,fill=white,fill fraction={black}{0.5},inner sep=0pt,minimum size=3pt,label=right:{\scriptsize Data inoutport}] at (6.5,0) {};
\draw[draw=black,fill=black!2] (9.1,-0.05) rectangle ++(0.3,0.2);
\node[inner sep=0pt,minimum size=3pt,label=right:{\scriptsize Composite computon}] at (9.35,0){};
\end{scope}
\end{tikzpicture}
\end{center}
}
\vspace{-6pt}
\caption{Examples of computons over the set $\underline{7}$ of types and the set $\{\nabla_4,\epsilon,9,\nabla,\times,5,S,+,!\}$ of computing devices which, for clarity, are not displayed as strings but as symbols expressing behaviour. Particularly, the devices in this example correspond to programs for discarding 4 values ($\nabla_4$), echoing a single control signal ($\epsilon$), producing the constant $9$, discarding a single value ($\nabla$), computing binary multiplication ($\times$), producing the constant $5$, computing the successor of a number ($S$), adding two numbers ($+$) and computing the factorial of a number ($!$). Computons abstract away from concrete types in order to provide a general implementation-independent framework, e.g., $1$ can express the type of integers, $2$ the type of floats and so on. In Sect. \ref{sec:operation}, we describe operational semantics to map natural numbers to concrete types from a fixed type system. Unlike data ports, control ports are never $0$-labelled for clarity.} 
\label{fig:examples-intro}
\end{figure}
\vspace{-9pt}
Intuitively, ports are attached to units to indicate the type of values a unit can receive/produce. Rather than denoting a single computational behaviour, a unit is a collection of computing devices (i.e., programs) that receive at least one input and produce exactly one output. Particularly, the inputs of a device $d$ in a unit $u$ come from a subset of ports connected to $u$, whereas the output of $d$ is sent a port connected from $u$. The set of all possible computing devices is a nonempty finite set ${B\subset\Sigma^*}$, i.e., a device is a finite string over some finite alphabet $\Sigma$.\footnote{Representing a computing device as a finite sequence of symbols over a finite alphabet is valid since it is well-known that every program (including computable functions) can be encoded thereby \cite{sipserintroduction2013}.} One of the particular properties is that $B$ always contains a device ${\epsilon}$ that echoes a single control signal, as shown in Fig. \ref{fig:examples-intro}c. As there evidently are multiple choices for $B$ and $\underline{n}$, one can form multiple computon categories. For a particular choice, objects are defined as follows.

\begin{definition}[Computon]\label{def:computon}
A computon $\lambda$ is a 13-tuple $(U,P,I,O,\underline{n},B,\sigma,t,\tau,s,c,r,f)$ where:
\begin{itemize}[noitemsep, topsep=0pt]
\item $U$ is a finite set of computation units,
\item $P$ is a finite nonempty set of ports,
\item $I$ is a finite set of inflows,
\item $O$ is a finite set of outflows,
\item ${\sigma\colon O\twoheadrightarrow U}$ is a surjective function that defines the source unit of each outflow,
\item ${t\colon O\to P}$ is a function that specifies the target port of each outflow,
\item ${\tau\colon I\twoheadrightarrow U}$ is a surjective function that specifies the target unit of each inflow,
\item ${s\colon I\to P}$ is a function that specifies the source port of each inflow,
\item ${c\colon P\to \underline{n}}$ is a function that assigns a type to each port, 
\item ${r\colon I\twoheadrightarrow O}$ is a surjective function relating inflows with outflows, and
\item ${f\colon O\to B}$ is a function that attaches each outflow to a computing device
\end{itemize}
such that
\begin{enumerate*}[label=(\roman*)]
\item ${0\in\underline{n}}$,\label{def:computon-1}
\item ${\epsilon\in B}$,\label{def:computon-2}
\item ${\tau\restriction_{(c\circ s)^{-1}(0)}}$ and ${\sigma\restriction_{(c\circ t)^{-1}(0)}}$ are surjective,\label{def:computon-3}
\item ${\sigma\circ r = \tau}$, and \label{def:computon-4}
\item there are ports ${p \in P\setminus t(O)}$ and ${q \in P\setminus s(I)}$ with ${c(p)=0=c(q)}$.\label{def:computon-5}
\end{enumerate*}
\end{definition}

\begin{notation}\label{not:pre-post-sets}
A computon $\lambda$ with ${u \in U}$ and ${p\in P}$ has sets ${s(\tau^{-1}(u))}$, ${t(\sigma^{-1}(u))}$, ${\tau(s^{-1}(p))}$ and ${\sigma(t^{-1}(p))}$ written ${\bullet u}$, ${u\bullet}$, ${p\bullet}$ and ${\bullet p}$, respectively. To distinguish between computons, we use natural numbers to index their components. If a computon symbol has no subindex, its components have no subindex either. 
\end{notation}

Rather than directly specifying a function ${U\to B}$, Definition \ref{def:computon} relies on a span ${U\overset{\sigma}{\twoheadleftarrow}O\xrightarrow{f}B}$ to generalise the functional relation given by the former, i.e., a unit can be related not to just one device, but to multiple ones. Having ${U\to B}$ alone is not sufficient to allow units produce different outputs consistently, as such a function does not encode any means to determine which values go to which outflows without imposing certain order, i.e., the only way of enabling multiple outputs is through a product type. By equipping computons with a function chain ${I\overset{\text{r}}{\twoheadrightarrow}O\xrightarrow{f}B}$, we abstract away from particular orderings by offering a direct relationship between inflows and outflows that makes it possible to extract information flows coming from/into computing devices. Concretely, an outflow ${o\in O}$ takes data from the result of a device ${f(o)}$ which, in turn, operates on the input values from ${r^{-1}(o)}$. By the totality of $f$, it is guaranteed that each outflow reads information from exclusively one device. 

To encapsulate behaviour within units, Restriction \ref{def:computon-4} in Definition \ref{def:computon} enforces devices to read/produce data from/into ports attached to the unit they belong to. Restriction \ref{def:computon-3} enforces units to have control ports connected to and from it, and Restrictions \ref{def:computon-1} and \ref{def:computon-2} respectively specify that the control type and the signal echoing device $\epsilon$ are ever present. Restriction \ref{def:computon-5} ensures the existence of ports where control flow starts and terminates, called control inports and control outports, respectively. The set ${P^+}$ of all control and data inports defines the input interface of a computon $\lambda$, whilst the set ${P^-}$ of all control and data outports give rise to the output interface. 

\begin{definition}[Computon Interface]\label{def:interface}
The interface of a computon $\lambda$ is a tuple ${(P^+,P^-)}$ where ${P^+}$ and ${P^-}$ are the sets ${P\setminus t(O)}$ and ${P\setminus s(I)}$, respectively.
\end{definition}

When there is a sequence of flows from every inport (or from any port with outflows) to some outport, we say that a computon is connected (see Definition \ref{def:connected} and Fig. \ref{fig:examples-intro}e). By Proposition \ref{prop:connected-units}, connected computons always have at least one computation unit. 

\begin{definition}[Connected Computon]\label{def:connected}
Let ${(U \sqcup P, I \sqcup O, s', t')}$ be the bipartite graph $G$ of a computon $\lambda$ with ${s',t'\colon I\sqcup O\to U\sqcup P}$ given as follows: $\begin{aligned}
  s'(e) &=
    \begin{cases}
      s(e) & \text{if } e\in I \\
      \sigma(e) & \text{if } e\in O
    \end{cases}
  \qquad\qquad
  t'(e) &=
    \begin{cases}
      \tau(e) & \text{if } e\in I \\
      t(e) & \text{if } e\in O
    \end{cases}
\end{aligned}$
We say that a computon is connected if, for every port ${p\in P^+\cup s(I)}$, there is a path ${(e_1,e_2,\ldots,e_n)}$ in $G$ of length ${n\geq 2}$ with ${s'(e_1)=p}$ and ${t'(e_n)\in P^-}$.
\end{definition}

\begin{proposition}\label{prop:connected-units}
If $\lambda$ is a connected computon, then ${U\neq\emptyset}$.
\end{proposition}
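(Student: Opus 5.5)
By Restriction \ref{def:computon-5} of Definition \ref{def:computon}, there is a control port $p\in P\setminus t(O)=P^+$ with $c(p)=0$. In particular $P^+\neq\emptyset$, so the connectedness hypothesis of Definition \ref{def:connected} applies to at least one port. I will fix such a port $p\in P^+$ and use connectedness to obtain a path $(e_1,e_2,\ldots,e_n)$ in the bipartite graph $G$ with $n\geq 2$, $s'(e_1)=p$ and $t'(e_n)\in P^-$.

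The key step is to show that a path of length at least $2$ must pass through a unit. Consecutive edges in a path satisfy $t'(e_1)=s'(e_2)$, so the vertex $v:=t'(e_1)$ is both the target of $e_1$ and the source of $e_2$. I split on which component of $I\sqcup O$ the edge $e_1$ lies in.
\begin{itemize}[noitemsep, topsep=0pt]
\item If $e_1\in I$, then $v=t'(e_1)=\tau(e_1)\in U$, so $U\neq\emptyset$ immediately.
\item If $e_1\in O$, then $s'(e_1)=\sigma(e_1)\in U$, so again $U\neq\emptyset$. Alternatively, one could note that $s'(e_1)=p$ is a port, which forces $e_1\in I$; the dichotomy is simply cleaner.
\end{itemize}
In fact the argument is even simpler: any single edge of $G$ has one endpoint in $U$ by the definition of $s'$ and $t'$, since $s'$ and $t'$ send an inflow to a port and a unit respectively, and an outflow to a unit and a port respectively. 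So the existence of even one edge, which $n\geq 2\geq 1$ guarantees, yields an element of $U$. Equivalently, one can argue that $I\neq\emptyset$ and then observe that $\tau\colon I\twoheadrightarrow U$ has a nonempty domain, so $U\neq\emptyset$.

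There is no real obstacle here. The only point needing care is that the definition of connectedness quantifies universally over $P^+\cup s(I)$, which would be vacuous if that set were empty; this is why I first invoke Restriction \ref{def:computon-5} to guarantee $P^+\neq\emptyset$ before extracting a path.
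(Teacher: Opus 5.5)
Your proof is correct and complete: Restriction~\ref{def:computon-5} gives $P^+=P\setminus t(O)\neq\emptyset$, so connectedness supplies a path with at least one edge, and every edge of $G$ has an endpoint in $U$ (namely $\tau(e)$ for $e\in I$, or $\sigma(e)$ for $e\in O$). The paper states this proposition without a written proof, and yours is the natural argument behind it, including the necessary check that the universal quantifier in Definition~\ref{def:connected} is not vacuous.
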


A computon with only interface ports and no units or flows at all is called a \emph{trivial computon} (see Definition \ref{def:trivial-computon} and Fig. \ref{fig:examples-intro}a). A trivial computon with exactly one port is referred to as a \emph{unit computon}.

\begin{definition}\label{def:trivial-computon}
A computon $\lambda$ is trivial if it satisfies ${U=I=O=\emptyset}$. A trivial computon with ${|P|=1}$ is called a unit computon.
\end{definition}

\begin{remark}
By Definitions \ref{def:computon}, \ref{def:interface} and \ref{def:trivial-computon}, it is easy to see that a unit computon is an entity with exactly one port ${p \in P^+\cap P^-}$ with ${c(p)=0}$, i.e., it consists of only one control port that is inport and outport simultaneously. Ports in ${(P^+\cap P^-)\cup(s(I)\cap t(O))}$ are referred to as inoutports.
\end{remark}

When a computon has exactly one computation unit to which all ports are attached, we say it is \emph{primitive} (see Fig. \ref{fig:examples-intro}b). Like trivials, a primitive has all ports at the interface, with the restriction that each inport is connected to the unique unit via a single inflow and outports are linked to the unit through a single outflow. This is enforced by the injectivity condition on the functions $s$ and $t$ in Definition \ref{def:primitive-computon}.

\begin{definition}\label{def:primitive-computon}
A computon $\lambda$ is primitive if ${|U|=1}$, ${P\cong I\sqcup O}$ and $s$ and $t$ are injective.
\end{definition}

Note that Definition \ref{def:computon} enforces primitive computons to always have inflows and outflows due to the surjectivity of $\tau$ and $\sigma$. As $s$ and $t$ are not necessarily onto in that definition, it is possible to have ports with no flows attached. To prevent this, ${P\cong I\sqcup O}$ and injectivity over $s$ and $t$ are required, with the former condition implying that all ports lie at the interface (see Proposition \ref{prop:primitive-interface}).\footnote{In Proposition \ref{prop:primitive-interface}, we use the symbol $\triangle$ to denote symmetric set difference.}

\begin{proposition}\label{prop:primitive-interface}
If $\lambda$ is a primitive computon, $P\cong P^+\triangle P^-$.
\end{proposition}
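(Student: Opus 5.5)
The plan is to compute both interface sets directly from Definition \ref{def:interface} and compare them with the images of $s$ and $t$. By definition, $P^+=P\setminus t(O)$ and $P^-=P\setminus s(I)$. Hence
\[
P^+\setminus P^- = (P\setminus t(O))\cap s(I) = s(I)\setminus t(O),
\qquad
P^-\setminus P^+ = t(O)\setminus s(I),
\]
and therefore $P^+\triangle P^- = s(I)\triangle t(O)$. This identity holds for any computon, so it reduces the statement to showing $P\cong s(I)\triangle t(O)$ for a primitive $\lambda$.

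Next I use Definition \ref{def:primitive-computon}. I will read the condition $P\cong I\sqcup O$ in the sense the surrounding discussion intends, namely that ports are exactly the endpoints of flows, each used once. Formally, this means the copairing $[s,t]\colon I\sqcup O\to P$ is a bijection. Injectivity of this copairing gives three facts:
\begin{itemize}[noitemsep, topsep=0pt]
\item $s$ is injective;
\item $t$ is injective;
\item $s(I)\cap t(O)=\emptyset$.
\end{itemize}
Surjectivity gives $s(I)\cup t(O)=P$. Consequently $s(I)\triangle t(O)=s(I)\cup t(O)=P$, and we obtain even an equality $P=P^+\triangle P^-$, not merely an isomorphism. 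Equivalently, no primitive has inoutports, and every port is either a pure inport or a pure outport.

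The main obstacle is interpretive. If $P\cong I\sqcup O$ is read only as an abstract bijection, i.e.\ $|P|=|I|+|O|$, the argument needs a counting step. Injectivity of $s$ and $t$ gives $|s(I)|+|t(O)|=|I|+|O|=|P|$, and we also have $|s(I)\triangle t(O)|=|s(I)|+|t(O)|-2|s(I)\cap t(O)|$. The conclusion then holds exactly when $s(I)\cap t(O)=\emptyset$. Disjointness does not follow from cardinality alone: a port could be shared between an inflow and an outflow while another port carries no flow at all. So I would state explicitly that the isomorphism in Definition \ref{def:primitive-computon} is the canonical one induced by $s$ and $t$, and then conclude as in the previous paragraph.
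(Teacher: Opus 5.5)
Your proof is correct. The paper states this proposition without proof; it is only invoked, in the sentence before it and in the proof of Theorem~\ref{th:always-sync}, so there is no competing argument to compare with. Your identity $P^+\triangle P^-=s(I)\triangle t(O)$ holds for every computon. Adding that $s(I)$ and $t(O)$ are disjoint and jointly cover $P$ gives exactly what the paper's claim that $P\cong I\sqcup O$ ``implies that all ports lie at the interface'' tacitly needs.

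Your interpretive caveat is not hypothetical, and it can be sharpened.

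\textbf{A concrete counterexample under the cardinality reading.} Take $U=\{u\}$, $I=\{i_1,i_2\}$, $O=\{o_1,o_2\}$ and $P=\{p_1,p_2,p_3,p_4\}$. Set $c\equiv 0$, $f\equiv\epsilon$, $\sigma,\tau$ constant at $u$, $r(i_k)=o_k$, $s(i_1)=p_1$, $s(i_2)=p_2=t(o_1)$ and $t(o_2)=p_3$. Every condition of Definition~\ref{def:computon} holds; for (v), use $p_1$ and $p_3$. Moreover $|U|=1$, $s$ and $t$ are injective, and $|P|=|I|+|O|$. Yet $P^+=\{p_1,p_4\}$, $P^-=\{p_3,p_4\}$, and $P^+\triangle P^-=\{p_1,p_3\}$.

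\textbf{Why the canonical reading is forced.} Since $P^+\triangle P^-\subseteq P$ and $P$ is finite, the conclusion $P\cong P^+\triangle P^-$ is already equivalent to $P=P^+\triangle P^-$. So the equality you obtain is not a strengthening. By your first step, this equality says precisely that $s(I)$ and $t(O)$ partition $P$. Hence, for any computon with $s$ and $t$ injective, the proposition holds if and only if $[s,t]\colon I\sqcup O\to P$ is a bijection.

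\textbf{Consequences for the paper.} The footnote in Section~\ref{sec:implementation} reduces $P\cong I\sqcup O$ to checking \code{p=i+o}. That is the cardinality reading, under which the statement fails. The paper's own use in Theorem~\ref{th:always-sync} is unaffected: there $s_3$ and $t_3$ factor through the two coproduct injections, so their images are disjoint and cover $P_3$.
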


\subsection{Computon Morphisms}

A morphism in a computon category is intuitively an insertion of a computon into another. More precisely, it is a collection of six total functions that map units to units, ports to ports, inflows to inflows, outflows to outflows, types to types and devices to devices such that the two last mappings are inclusions to guarantee consistency in terms of port typing and computational behaviour (see Definition \ref{def:morphism}). 

\begin{definition}[Computon Morphism]\label{def:morphism}
In a computon category over a set $\underline{n}$ of types and a set $B$ of computing devices, a morphism ${\alpha\colon\lambda_1\rightarrow\lambda_2}$ is a 6-tuple $(\alpha_U,\alpha_P,\alpha_I,\alpha_O,\alpha_{\underline{n}},\alpha_B)$ of total functions ${\alpha_U\colon U_1\rightarrow U_2}$, ${\alpha_P\colon P_1\rightarrow P_2}$, ${\alpha_I\colon I_1\rightarrow I_2}$, $\alpha_O\colon O_1\rightarrow O_2$, ${\alpha_{\underline{n}}\colon\underline{n}\hookrightarrow\underline{n}}$ and ${\alpha_B\colon B\hookrightarrow B}$ such that ${\vec{i}(\alpha)\cup\vec{o}(\alpha)\subseteq P_1^+\cup P_1^-}$ and the following diagrams commute:

\hspace*{-20pt}
\begin{tikzcd}
I_1 \arrow[r, twoheadrightarrow, "\tau_1"]\arrow[d, "\alpha_I"']
& U_1 \arrow[d, "\alpha_U"] 
& O_1 \arrow[l, twoheadrightarrow, "\sigma_1"']\arrow[d, "\alpha_O"] \\
I_2 \arrow[r, twoheadrightarrow, "\tau_2"']
& U_2 
& O_2 \arrow[l, twoheadrightarrow, "\sigma_2"]
\end{tikzcd}
\quad
\begin{tikzcd}
I_1 \arrow[r, "s_1"]\arrow[d, "\alpha_I"']
& P_1 \arrow[d, "\alpha_P"] 
& O_1 \arrow[l, "t_1"']\arrow[d, "\alpha_O"] \\
I_2 \arrow[r, "s_2"']
& P_2 
& O_2 \arrow[l, "t_2"]
\end{tikzcd}
\begin{tikzcd}
B_1 \arrow[d, hook, "\alpha_B"'] & O_1 \arrow[d, "\alpha_O"]\arrow[l,"f_1"'] & I_1 \arrow[d, "\alpha_I"]\arrow[l, twoheadrightarrow, "r_1"'] \\
B_2 & O_2 \arrow[l,"f_2"] & I_2 \arrow[l, twoheadrightarrow, "r_2"]
\end{tikzcd}
\quad
\begin{tikzcd}
P_1 \arrow[r, "c_1"]\arrow[d, "\alpha_P"']
& \underline{n} \arrow[d, hook, "\alpha_{\underline{n}}"] \\
P_2 \arrow[r, "c_2"']
& \underline{n}
\end{tikzcd}

Here, ${\vec{i}(\alpha)}$ and ${\vec{o}(\alpha)}$ are given by ${\{p\in P_1\mid\bullet\alpha_P(p)\setminus\alpha_P(\bullet p)\neq\emptyset\}}$ and ${\{p\in P_1\mid\alpha_P(p)\bullet\setminus\alpha_P(p\bullet)\neq\emptyset\}}$, respectively. From now on, we use natural numbers as subindices to distinguish between morphisms and abuse notation by omitting components, e.g., we write ${\alpha(p)}$ for ${\alpha_P(p)}$ when the context is clear. 
\end{definition}

The leftmost diagram in Definition \ref{def:morphism} ensures preservation of inflow and outflow adjacency with respect to computation units. The immediate diagram on the right enforces preservation of the adjacency between ports and their information flows. The isolated square ensures that port typing is retained, and the remaining one keeps the relationship between inflows, outflows and devices. The restriction ${\vec{i}(\alpha)\cup\vec{o}(\alpha)\subseteq P_1^+\cup P_1^-}$ ensures that a computon can be inserted into another only at its interface, i.e., only inports or outports of the source computon can be attached to new computation units in the target computon. As a consequence, we have the following lemma.

\begin{lemma}[\cite{arellanescompositional2026}]\label{prop:interface-preservation}
For any computon morphism ${\alpha\colon\lambda_1\rightarrow\lambda_2}$, ${\alpha^{-1}(P_2^+)\subseteq P_1^+}$ and ${\alpha^{-1}(P_2^-)\subseteq P_1^-}$.
\end{lemma}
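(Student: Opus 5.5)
Both inclusions are proved by contraposition directly from the commuting squares in Definition \ref{def:morphism}; the interface restriction $\vec{i}(\alpha)\cup\vec{o}(\alpha)\subseteq P_1^+\cup P_1^-$ is not even needed. For the first inclusion, take $p\in P_1$ with $\alpha_P(p)\in P_2^+$. By Definition \ref{def:interface}, this means $\alpha_P(p)\notin t_2(O_2)$. Suppose, for contradiction, that $p\notin P_1^+$. Then $p\in t_1(O_1)$, so there is an outflow $o\in O_1$ with $t_1(o)=p$. The second square on the left, $\alpha_P\circ t_1 = t_2\circ\alpha_O$, gives $\alpha_P(p)=\alpha_P(t_1(o))=t_2(\alpha_O(o))\in t_2(O_2)$. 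This contradicts $\alpha_P(p)\in P_2^+$, so $p\in P_1^+$, i.e.\ $\alpha^{-1}(P_2^+)\subseteq P_1^+$.

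The second inclusion is symmetric and uses inflows and the square $\alpha_P\circ s_1=s_2\circ\alpha_I$. If $\alpha_P(p)\in P_2^-$ but $p\notin P_1^-$, then $p=s_1(i)$ for some $i\in I_1$. Hence $\alpha_P(p)=s_2(\alpha_I(i))\in s_2(I_2)$, contradicting $\alpha_P(p)\notin s_2(I_2)$.

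There is no real obstacle here. The only point needing care is to use the correct square: the port/flow adjacency square, not the unit square. It is also worth noting explicitly that the argument relies only on the totality of $\alpha_I$ and $\alpha_O$. Since the lemma holds without the interface restriction, the word "consequence" in the surrounding text is somewhat loose, and I will mention this briefly.
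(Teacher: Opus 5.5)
Your proof is correct. The paper itself gives no proof of this lemma: it imports it from the earlier work and introduces it ``as a consequence'' of the boundary condition $\vec{i}(\alpha)\cup\vec{o}(\alpha)\subseteq P_1^+\cup P_1^-$. Your route does not use that condition at all. From the port/flow square alone you get $\alpha_P(t_1(O_1))=t_2(\alpha_O(O_1))\subseteq t_2(O_2)$ and $\alpha_P(s_1(I_1))=s_2(\alpha_I(I_1))\subseteq s_2(I_2)$. In words, a port that receives an outflow (resp.\ emits an inflow) is sent to a port of the same kind, and the two stated inclusions are exactly the contrapositives of these facts. This makes the argument more elementary and more general: the lemma holds for any tuple of total maps that makes this one square commute. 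The unit square, the $r$/$f$ square, typing and the boundary condition play no role. Your remark that ``consequence'' is loose is justified. The boundary condition governs a different situation: it forbids a port of $\lambda_1$ lying outside $P_1^+\cup P_1^-$ from acquiring new adjacent units in $\lambda_2$, and this lemma never needs that. The paper's framing conveys the intuition that insertion happens only at the interface, but the restriction does not enter into why this particular statement is true.
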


There is a special class of morphisms, called markers, which allow embedding a trivial computon into a computon interface (see Definition \ref{def:markers}). When the embedding covers all the inports, it is called an \emph{in-marker} and, when it covers all the outports, it is called an \emph{out-marker}. 

\begin{definition}[Computon Markers]\label{def:markers}
A marker ${\lambda^\square}$ of a computon $\lambda$ is a computon monomorphism ${\lambda_0\rightarrow\lambda}$ where $\lambda_0$ is a trivial computon and ${\lambda^\square(P_0)=P^\square}$ with ${\square\in\{+,-\}}$. If ${\square=+}$, it is called an \emph{in-marker}; otherwise, an \emph{out-marker}.
\end{definition}

Interestingly, there are markers whose domain is a unit computon, indicating that the codomain does not require (or produce) any data but just receives (or produces) a single control signal. Although a unit computon evidently embodies the minimal structure one can establish from Definition \ref{def:computon}, such an entity is not an initial object in a computon category since there are $k$ morphisms from it to a computon with $k$ control ports, rather than a unique morphism, i.e., a computon category has no initial objects (cf. \cite{arellanescompositional2026}).

\subsection{Colimit Constructions}

In the theory of computons, complex embeddings can be built out of ``elementary'' finite colimits, namely coproduct and pushout. Coproduct is simply the disjoint union of computon components, which gives rise to a composite structure that puts two computons side-by-side (see Definition \ref{def:coproduct}). Pushout is the square complement of a span of computon morphisms, which merges two computons into a composite according to the instructions given by the span (see Definition \ref{def:pushout}). In \cite{arellanescompositional2026}, we showed that pushouts can only be formed if a span satisfies the restrictions from Definition \ref{def:pushable-span} and that a category of computons has all coproducts (see Theorem \ref{th:pushable-pushout-cat}). In the same work, we also showed that both coproduct and pushout satisfy the required universal properties in a computon category.
\begin{definition}[Coproduct]\label{def:coproduct}
In a computon category over a set $\underline{n}$ of types and a set $B$ of computing devices, the coproduct ${\lambda_1+\lambda_2}$ of $\lambda_1$ and $\lambda_2$ is given by the following diagram of finite sets and total functions:
\begin{center}
\begin{tikzcd}[ampersand replacement=\&]
 \& O_1\sqcup O_2 \arrow[dl,twoheadrightarrow,"\sigma"']\arrow[dr, "t"]\arrow[r, "f"] \& B \&  \\
U_1\sqcup U_2 \& I_1\sqcup I_2 \arrow[l,twoheadrightarrow,"\tau"]\arrow[r, "s"']\arrow[u,twoheadrightarrow,"r"'] \& P_1\sqcup P_2 \arrow[r,"c"] \& \underline{n}
\end{tikzcd}
\end{center}
Supposing ${\beta_1\colon\lambda_1\to\lambda_1+\lambda_2}$ and ${\beta_2\colon\lambda_2\to\lambda_1+\lambda_2}$ are the canonical coproduct monomorphisms, the functions $\sigma$, $t$, $\tau$, $s$, $c$, $r$ and $f$ are computed sourcewise. For example, $\sigma$ is given as follows for all ${o\in O_1\sqcup O_2}$: $\sigma(o)=
    \begin{cases}
        \beta_1(\sigma_1(o_1)) & \text{if } o=\beta_1(o_1) \text{ for some }o_1\in O_1\\
        \beta_2(\sigma_2(o_2)) & \text{if } o=\beta_2(o_2) \text{ for some }o_2\in O_2
    \end{cases}$
\end{definition}

\begin{definition}[Pushout]\label{def:pushout}
In a computon category over a set $\underline{n}$ of types and a set $B$ of computing devices, the pushout ${\lambda_1+_{\lambda_0}\lambda_2}$ of a span ${\lambda_1\xleftarrow{\alpha_1}\lambda_0\xrightarrow{\alpha_2}\lambda_2}$ of computon morphisms is given by the following diagram of finite sets and total functions:

\begin{center}
\begin{tikzcd}[ampersand replacement=\&]
 \& O_1\sqcup_{O_0} O_2 \arrow[dl,twoheadrightarrow,"\sigma"']\arrow[dr, "t"]\arrow[r, "f"] \& B \&  \\
U_1\sqcup_{U_0} U_2 \& I_1\sqcup_{I_0} I_2 \arrow[l,twoheadrightarrow,"\tau"]\arrow[r, "s"']\arrow[u,twoheadrightarrow,"r"'] \& P_1\sqcup_{P_0} P_2 \arrow[r,"c"] \& \underline{n}
\end{tikzcd}
\end{center}
Supposing ${\beta_1\colon\lambda_1\to\lambda_1+_{\lambda_0}\lambda_2}$ and ${\beta_2\colon\lambda_2\to\lambda_1+_{\lambda_0}\lambda_2}$ are the pushout-induced computon morphisms, the functions $\sigma$, $t$, $\tau$, $s$, $c$, $r$ and $f$ are computed sourcewise. For example, $\tau$ is given as follows for all ${i\in I_1\sqcup_{I_0} I_2}$: $\tau(i)=
    \begin{cases}
        \beta_1(\tau_1(i_1)) & \text{if } i=\beta_1(i_1) \text{ for some }i_1\in I_1\\
        \beta_2(\tau_2(i_2)) & \text{if } i=\beta_2(i_2) \text{ for some }i_2\in I_2
    \end{cases}$
\end{definition}

\begin{definition} [Pushable Span] \label{def:pushable-span}
A span ${\lambda_1 \xleftarrow{\alpha_1} \lambda_0 \xrightarrow{\alpha_2} \lambda_2}$ is pushable if ${\alpha_1(\vec{i}(\alpha_2)) \cup \alpha_1(\vec{o}(\alpha_2)) \subseteq P_1^+ \cup P_1^-}$ and ${\alpha_2(\vec{i}(\alpha_1)) \cup \alpha_2(\vec{o}(\alpha_1)) \subseteq P_2^+ \cup P_2^-}$.
\end{definition}

\begin{remark}\label{remark:union}
Note that when ${i\in I_1\sqcup_{I_0}I_2}$ is identified with elements ${i_1\in I_1}$ and ${i_2\in I_2}$, it is sufficient for $\tau$ to choose either ${\beta_1(\tau_1(i_1))}$ or ${\beta_2(\tau_2(i_2))}$ due to the commutativity equations from Definition \ref{def:morphism}. The same applies to $\sigma$, $s$, $t$, $c$, $r$ and $f$.
\end{remark}

\begin{theorem}[\cite{arellanescompositional2026}]\label{th:pushable-pushout-cat}
Let $\lambda_1\xleftarrow{\alpha_1}\lambda_0\xrightarrow{\alpha_2}$ be a span $\rho$ of computon morphisms. The pushout of $\rho$ exists if and only if $\rho$ is pushable. If $\alpha_1$ and $\alpha_2$ are computon markers, $\rho$ is pushable.
\end{theorem}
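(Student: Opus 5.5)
The statement is Theorem \ref{th:pushable-pushout-cat}, which has two parts: an ``if and only if'' between existence of the pushout and pushability, and the claim that spans of markers are pushable. I would treat these separately.

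\textbf{Necessity.} Suppose a pushout $\beta_1\colon\lambda_1\to\lambda$, $\beta_2\colon\lambda_2\to\lambda$ exists. Then $\beta_1\circ\alpha_1=\beta_2\circ\alpha_2$ are both computon morphisms. I would take a port $p\in\vec{i}(\alpha_2)$, so that $\alpha_2(p)$ acquires a preceding unit in $\lambda_2$ that does not come from $\bullet p$. Since $\beta_2$ preserves adjacency, $\beta_2\alpha_2(p)=\beta_1\alpha_1(p)$ also has this extra incoming outflow in $\lambda$. If $\alpha_1(p)$ were not in $P_1^+\cup P_1^-$, that outflow would not lie in the image of $\alpha_1(p)$'s preset under $\beta_1$. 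This contradicts $\vec{i}(\beta_1)\cup\vec{o}(\beta_1)\subseteq P_1^+\cup P_1^-$. To make ``not in the image'' precise, I would pass through the canonical comparison to the set-level pushout: any cocone factors through $U_1\sqcup_{U_0}U_2$ and the related sets, and the flow of $\lambda_2$ is not identified with any flow of $\lambda_1$ there. The cases for $\vec{o}$, and for the symmetric condition on $\alpha_2$, follow in the same way.

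\textbf{Sufficiency.} Assuming pushability, I would build the componentwise pushout of finite sets as in Definition \ref{def:pushout}. The structure maps are defined sourcewise, and they are well defined by Remark \ref{remark:union}. For types and devices I would take the pushout of the inclusions $\alpha_{\underline n},\alpha_B$ into $\underline n$ and $B$; in practice I would use the inclusions themselves, since they commute.

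I would then check conditions (i)–(v) of Definition \ref{def:computon}:
\begin{itemize}
\item Surjectivity of $\sigma,\tau,r$ is preserved under quotients of disjoint unions.
\item Condition (iv) holds sourcewise.
\item For (iii), control inflows and outflows of each unit come from either side.
\item For (v), I need a control inport and a control outport in the glued computon. I would take one from $\lambda_1$ or $\lambda_2$ that is not glued to a non-interface port, using Lemma \ref{prop:interface-preservation}.
\end{itemize}
Next, $\beta_1$ and $\beta_2$ must be morphisms. Commutativity is immediate, and the interface condition $\vec{i}(\beta_1)\subseteq P_1^+\cup P_1^-$ is exactly where pushability is used. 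A new predecessor of $\beta_1(p)$ comes from $\lambda_2$ through some $q\in P_0$ with $\alpha_1(q)=p$. That flow is new relative to $\alpha_2(q)$ or is already attached there, and pushability places $p$ in the interface of $\lambda_1$. The universal property then follows from that of sets componentwise, with the mediating map automatically satisfying the interface condition because of Lemma \ref{prop:interface-preservation}.

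\textbf{Markers.} For markers, $\lambda_0$ is trivial, so $\vec{i}(\alpha_k),\vec{o}(\alpha_k)\subseteq P_0$. Moreover, $\alpha_k(P_0)=P_k^\square\subseteq P_k^+\cup P_k^-$, which gives pushability directly.

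\textbf{Main obstacle.} I expect condition (v) and the necessity direction to be the hardest part. Gluing an outport to an inport may destroy every control inport; the case where all control inports of $\lambda_1$ are glued is the delicate one. There I would argue from finiteness and acyclicity of the identifications, and if that fails, fall back on showing that this case violates pushability.
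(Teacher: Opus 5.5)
You have genuine gaps here, and with Definition \ref{def:morphism} as printed not all of them can be closed. Only your marker argument is fine.

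\textbf{Sufficiency: the universal property.} The decisive failure is the last step. Lemma \ref{prop:interface-preservation} only controls preimages of interface ports. The interface condition on the mediating map $h$ says something else: $h$ may not attach new units to \emph{internal} ports of the glued object. Gluing creates internal ports whose preimages are all interface ports, and the cocone legs are free to attach new units there. A concrete case:
\begin{itemize}
\item Take primitives $p_1\to a\to x_1$ and $x_2\to b\to p_2$, with a one-port apex sent to $x_1$ and $x_2$. This is even a span of markers, hence pushable.
\item Let $\lambda'_N$ be the glued chain plus $N$ extra primitives $e_k\to w_k\to x$. The legs into $\lambda'_N$ are morphisms, because $x_1\in P_1^-$ and $x_2\in P_2^+$.
\item The forced mediating map out of your componentwise object adds the predecessors $w_k$ to the internal glued port. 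So it is not a morphism.
\end{itemize}
Worse, no pushout exists at all. In any candidate pushout $\mu$, the image $m$ of $x_1$ is internal: it receives the image of $a$'s outflow and feeds the image of $b$'s inflow. For the mediating $h\colon\mu\to\lambda'_N$, the requirement $m\notin\vec{i}(h)$ gives $\bullet x\subseteq h(\bullet m)$. Hence $|\bullet m|\geq N+1$ for every $N$, which is impossible.

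\textbf{Sufficiency: two further failures.}
\begin{itemize}
\item Your fallback for (v) does not work. Glue the control inport of one $\epsilon$-primitive to the control outport of another, and vice versa, via an apex of two isolated control ports. The span is pushable, yet the componentwise object has no control inport.
\item New predecessors of $\beta_1(p)$ need not come from $\lambda_2$. Let $\alpha_1$ send two apex ports to internal ports of $\lambda_1$ with different presets, and let $\alpha_2$ collapse both onto a unit computon. The span is pushable, but the merged port imports foreign predecessors, so $\beta_1$ is not a morphism.
\end{itemize}

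\textbf{Necessity.} This fails for the dual reason. Cocones factor \emph{out of} the set-level pushout, and the comparison map may identify what the set-level pushout keeps apart. In fact, if $\alpha_1(p)$ is internal, the interface condition on $\beta_1$ \emph{forces} $\beta_2(u_2)\in\beta_1(\bullet\alpha_1(p))$. For example:
\begin{itemize}
\item Take $\lambda_1=(p_{in}\to a\to x\to b\to p_{out})$, $\lambda_2=(q_{in}\to c\to q_{out})$, and a one-port apex sent to $x$ and $q_{out}$. This span is not pushable.
\item It nevertheless has a pushout. Enlarge $\lambda_1$ by a port $q_{in}$, an inflow $q_{in}\to a$, and a second outflow $a\to x$ that is its $r$-image; that is, merge $c$ into $a$.
\item Every cocone merges $c$ with $a$, so the mediating map is well defined. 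It is a morphism at the only internal port $x$ because the leg out of $\lambda_1$ is.
\end{itemize}

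\textbf{What your method does establish.} For a trivial apex and monic legs, and provided (v) holds, the componentwise object of Definition \ref{def:pushout} is a cocone of computon morphisms if and only if the span is pushable. With a non-trivial apex even this breaks, since $u_2$ may already be identified with a unit of $\bullet\alpha_1(p)$ through $U_0$. By the first counterexample, that cocone is still not a pushout in general.
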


\section{Composition Operators}
\label{sec:operators}

Pushouts and coproducts form the basis of the composition operators the theory of computons builds upon. In this section, we describe separate operators to form sequential, parallel and branching composites, which define explicit control flow for the invocation of computons in some precise order. 

\vspace{-10pt}

\subsection{Sequential Computons}

A sequential computon defines a control flow structure for the invocation of two computons in a pipeline. Such a composite is obtained by computing the pushout of a sequentiable span of computon morphisms (see Definition \ref{def:sequentiable-span}). It is total when all the outports of the ``first-executed'' computon are connected to all the inports of the ``secondly-executed'' computon, and partial otherwise (see Definition \ref{def:sequential-computon}). 

\begin{definition}[Sequentiable Span]\label{def:sequentiable-span}
A span ${\lambda_1\xleftarrow{\alpha_1}\lambda_0\xrightarrow{\alpha_2}\lambda_2}$ is sequentiable if $\lambda_0$ is a trivial computon, and $\alpha_1$ and $\alpha_2$ are monomorphisms with ${\alpha_1(P_0)\subseteq P_1^-}$ and ${\alpha_2(P_0)\subseteq P_2^+}$. 
\end{definition}

\begin{theorem}[\cite{arellanescompositional2026}]\label{th:sequentiable-is-pushable}
Every sequentiable span is pushable.
\end{theorem}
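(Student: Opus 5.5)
The plan is to verify Definition \ref{def:pushable-span} directly. For a sequentiable span ${\lambda_1\xleftarrow{\alpha_1}\lambda_0\xrightarrow{\alpha_2}\lambda_2}$ we must show two inclusions:
\[
\alpha_1(\vec{i}(\alpha_2))\cup\alpha_1(\vec{o}(\alpha_2))\subseteq P_1^+\cup P_1^-
\quad\text{and}\quad
\alpha_2(\vec{i}(\alpha_1))\cup\alpha_2(\vec{o}(\alpha_1))\subseteq P_2^+\cup P_2^-.
\]
The argument rests on two easy facts. First, $\vec{i}(\alpha_k)$ and $\vec{o}(\alpha_k)$ are subsets of $P_0$ by definition. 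Second, sequentiability forces $\alpha_1(P_0)\subseteq P_1^-$ and $\alpha_2(P_0)\subseteq P_2^+$.

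For the first inclusion, take any $p\in\vec{i}(\alpha_2)\cup\vec{o}(\alpha_2)\subseteq P_0$. Sequentiability gives $\alpha_1(p)\in\alpha_1(P_0)\subseteq P_1^-\subseteq P_1^+\cup P_1^-$, so the inclusion holds. The second inclusion is symmetric: $\alpha_2(p)\in\alpha_2(P_0)\subseteq P_2^+\subseteq P_2^+\cup P_2^-$.

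Neither step uses any property of $\vec{i}$ or $\vec{o}$ beyond their being subsets of $P_0$. The proof therefore never has to analyse the flows of $\lambda_0$, although they are in fact empty because $\lambda_0$ is trivial. Likewise, the monomorphism hypothesis is not needed for pushability itself; it only matters later, for the shape of the sequential composite.

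The only point needing care is to check that the definition of $\vec{i}(\alpha)$ in Definition \ref{def:morphism} really ranges over ports of the domain $P_0$, so that applying $\alpha_1$ to $\vec{i}(\alpha_2)$ is meaningful. This is immediate from the set-builder definition. I therefore expect no genuine obstacle; the statement is essentially a corollary of the image conditions in Definition \ref{def:sequentiable-span}. Once pushability is established, Theorem \ref{th:pushable-pushout-cat} yields existence of the pushout.
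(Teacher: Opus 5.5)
Your proof is correct and complete. Since $\vec{i}(\alpha_k)\cup\vec{o}(\alpha_k)\subseteq P_0$ by definition, the image conditions $\alpha_1(P_0)\subseteq P_1^-$ and $\alpha_2(P_0)\subseteq P_2^+$ of Definition~\ref{def:sequentiable-span} give both inclusions of Definition~\ref{def:pushable-span} at once. The paper does not reprove this statement; it imports it from \cite{arellanescompositional2026}, so there is no in-paper argument to compare against. Your direct unfolding of the definitions is the natural proof, and you are right that it needs neither the triviality of $\lambda_0$ nor the monicity of $\alpha_1$ and $\alpha_2$.
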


\begin{definition}[Sequential Computon]\label{def:sequential-computon}
The pushout of a sequentiable span ${\rho:=\lambda_1\xleftarrow{\alpha_1}\lambda_0\xrightarrow{\alpha_2}\lambda_2}$ yields a total sequential computon ${\lambda_1\unrhd_{\rho}\lambda_2}$ if ${\alpha_1(P_0)=P_1^-}$ and ${\alpha_2(P_0)=P_2^+}$; otherwise, it yields a partial sequential computon ${\lambda_1\rhd_{\rho}\lambda_2}$.
\end{definition}

\begin{example}
Suppose we have primitive computons $\lambda_1$, $\lambda_2$ and $\lambda_3$ with units that encapsulate devices for binary multiplication ($\times$), binary addition ($+$) and unary successor ($S$), respectively, each also having the device $\epsilon$. With these primitives, we can form composites in diverse ways. For example, Fig. \ref{fig:example-sequencing}a shows the construction of a partial sequential computon ${\lambda_1\rhd_{\rho_1}\lambda_2}$, intended for computing ${(a\times b)+c}$. Partiality occurs because this composite is formed from a sequentiable span ${\rho_1:=(\alpha_1,\alpha_2)}$ that omits the $2$-coloured data inport of $\lambda_2$. An example for the formation of a total sequential computon ${\lambda_1\unrhd_{\rho_2}\lambda_3}$ is shown in Fig. \ref{fig:example-sequencing}b, in which there is a sequentiable span ${\rho_2:=(\alpha_1,\alpha_3)}$ that identifies all the $\lambda_1$-outports with all the $\lambda_3$-inports, yielding a composite intended to compute the successor of ${a\times b}$. If $1$ and $2$ are the respective types of nonnegative integers and float numbers, then ${a,b\geq 0}$, and $c$ would be a float. 
\end{example}
\vspace{-15pt}
\begin{figure}[H]
\centering
\subcaptionbox{Constructing ${\lambda_1\rhd_{\rho_1}\lambda_2}$.}
{
\begin{tikzpicture}[scale=0.76]
\draw[thick,-{Stealth[inset=0pt, length=3pt, angle'=90]},opacity=0.4] (5.1,2.5) to node[above]{\scriptsize$\alpha_1$} (2,2.5);
\draw[thick,-{Stealth[inset=0pt, length=3pt, angle'=90]},opacity=0.4] (5.7,2) to node[right]{\scriptsize$\alpha_2$} (5.7,1.2);
\draw[thick,-{Stealth[inset=0pt, length=3pt, angle'=90]},opacity=0.4] (4.5,0.6) to node[above]{} (3.5,0.6);
\draw[thick,-{Stealth[inset=0pt, length=3pt, angle'=90]},opacity=0.4] (0.75,1.9) to node[left]{} (0.75,1.2);
\begin{scope}[yshift=2cm]
\primitive{0.5}{0}{0.5}{1}
\qin{i0p}{0}{0.85}{};
\din{i1p}{0}{0.5}{$1$};
\din{i2p}{0}{0.1}{$1$};
\qout{o0p}{1}{0.85}{};
\dout{o1p}{1}{0.3}{$1$};
\node at (0.75,0.3){\scriptsize $\times$};\draw[densely dotted] (0.5,0.5) -- (0.65,0.3);\draw[densely dotted] (0.5,0.1) -- (0.65,0.3);\draw[densely dotted] (0.87,0.3) -- (1,0.3);
\node at (0.75,0.85){\scriptsize $\epsilon$};\draw[densely dotted] (0.5,0.85) -- (0.65,0.85);\draw[densely dotted] (0.85,0.85) -- (1,0.85);
\node[circle,fill=blue,inner sep=0pt,minimum size=7pt,opacity=0.2] at (o0p) {};
\node[circle,fill=orange,inner sep=0pt,minimum size=7pt,opacity=0.2] at (o1p) {};
\node[opacity=0.4] at (0.75,1.25){\scriptsize $\lambda_1$};
\end{scope}

\begin{scope}[xshift=5.7cm,yshift=2cm]
\qmatch{0t}{0}{0.7}{};
\dmatch{1t}{0}{0.3}{$1$}{left};
\node[circle,fill=blue,inner sep=0pt,minimum size=7pt,opacity=0.2] at (0t) {};
\node[circle,fill=orange,inner sep=0pt,minimum size=7pt,opacity=0.2] at (1t) {};
\end{scope}

\begin{scope}[xshift=5cm]
\primitive{0.5}{0}{0.5}{1}
\qin{i0a}{0}{0.85}{};
\din{i1a}{0}{0.5}{$1$};
\din{i2a}{0}{0.1}{$2$};
\qout{o0a}{1}{0.85}{};
\dout{o1a}{1}{0.3}{$2$};
\node at (0.75,0.3){\scriptsize $+$};\draw[densely dotted] (0.5,0.5) -- (0.65,0.3);\draw[densely dotted] (0.5,0.1) -- (0.65,0.3);\draw[densely dotted] (0.87,0.3) -- (1,0.3);
\node at (0.75,0.85){\scriptsize $\epsilon$};\draw[densely dotted] (0.5,0.85) -- (0.65,0.85);\draw[densely dotted] (0.85,0.85) -- (1,0.85);
\node[circle,fill=blue,inner sep=0pt,minimum size=7pt,opacity=0.2] at (i0a) {};
\node[circle,fill=orange,inner sep=0pt,minimum size=7pt,opacity=0.2] at (i1a) {};
\node[opacity=0.4] at (0.75,-0.25){\scriptsize $\lambda_2$};
\end{scope}

\begin{scope}
\computonComposite{0.15}{-0.3}{2.75}{1.4};

\primitive{0.5}{0}{0.5}{1}
\qin{i0pc}{0}{0.85}{};
\din{i1pc}{0}{0.5}{$1$};
\din{i2pc}{0}{0.1}{$1$};
\node at (0.75,0.3){\scriptsize $\times$};\draw[densely dotted] (0.5,0.5) -- (0.65,0.3);\draw[densely dotted] (0.5,0.1) -- (0.65,0.3);\draw[densely dotted] (0.87,0.3) -- (1,0.5);
\node at (0.75,0.85){\scriptsize $\epsilon$};\draw[densely dotted] (0.5,0.85) -- (0.65,0.85);\draw[densely dotted] (0.85,0.85) -- (1,0.85);

\qmatch{o0pc}{1.5}{0.85}{};\flow{{1,0.85}}{o0pc}{densely dashed}{};\flow{o0pc}{{2,0.85}}{densely dashed}{};
\dmatch{o1pc}{1.5}{0.5}{$1$}{below};\flow{{1,0.5}}{o1pc}{}{};\flow{o1pc}{{2,0.5}}{}{};

\primitive{2}{0}{0.5}{1}
\qout{o0ac}{2.5}{0.85}{};
\dout{o1ac}{2.5}{0.3}{$2$};
\node at (2.25,0.3){\scriptsize $+$};\draw[densely dotted] (2,0.5) -- (2.15,0.3);\draw[densely dotted] (2,0.1) -- (2.15,0.3);\draw[densely dotted] (2.37,0.3) -- (2.5,0.3);
\node at (2.25,0.85){\scriptsize $\epsilon$};\draw[densely dotted] (2,0.85) -- (2.15,0.85);\draw[densely dotted] (2.35,0.85) -- (2.5,0.85);

\dinplain{i1ac}{0}{-0.2}{$2$};\flowdiag{i1ac}{{2,0.1}}{bend right=10}{}{pos=0.7,rotate=30};
\node[circle,fill=blue,inner sep=0pt,minimum size=7pt,opacity=0.2] at (o0pc) {};
\node[circle,fill=orange,inner sep=0pt,minimum size=7pt,opacity=0.2] at (o1pc) {};
\end{scope}
\end{tikzpicture}
}
\subcaptionbox{Constructing ${\lambda_1\unrhd_{\rho_2}\lambda_3}$.}
{
\begin{tikzpicture}[scale=0.76]
\draw[thick,-{Stealth[inset=0pt, length=3pt, angle'=90]},opacity=0.4] (5.1,2.5) to node[above]{\scriptsize$\alpha_1$} (2,2.5);
\draw[thick,-{Stealth[inset=0pt, length=3pt, angle'=90]},opacity=0.4] (5.7,2) to node[right]{\scriptsize$\alpha_3$} (5.7,1.2);
\draw[thick,-{Stealth[inset=0pt, length=3pt, angle'=90]},opacity=0.4] (4.5,0.6) to node[above]{} (3.5,0.6);
\draw[thick,-{Stealth[inset=0pt, length=3pt, angle'=90]},opacity=0.4] (0.75,1.9) to node[left]{} (0.75,1.2);
\begin{scope}[yshift=2cm]
\primitive{0.5}{0}{0.5}{1}
\qin{i0p}{0}{0.9}{};
\din{i1p}{0}{0.5}{$1$};
\din{i2p}{0}{0.1}{$1$};
\qout{o0p}{1}{0.85}{};
\dout{o1p}{1}{0.3}{$1$};
\node at (0.75,0.3){\scriptsize $\times$};\draw[densely dotted] (0.5,0.5) -- (0.65,0.3);\draw[densely dotted] (0.5,0.1) -- (0.65,0.3);\draw[densely dotted] (0.87,0.3) -- (1,0.3);
\node at (0.75,0.85){\scriptsize $\epsilon$};\draw[densely dotted] (0.5,0.85) -- (0.65,0.85);\draw[densely dotted] (0.85,0.85) -- (1,0.85);
\node[circle,fill=blue,inner sep=0pt,minimum size=7pt,opacity=0.2] at (o0p) {};
\node[circle,fill=orange,inner sep=0pt,minimum size=7pt,opacity=0.2] at (o1p) {};
\node[opacity=0.4] at (0.75,1.25){\scriptsize $\lambda_1$};
\end{scope}

\begin{scope}[xshift=5.7cm,yshift=2cm]
\qmatch{0t}{0}{0.7}{};
\dmatch{1t}{0}{0.3}{$1$}{left};
\node[circle,fill=blue,inner sep=0pt,minimum size=7pt,opacity=0.2] at (0t) {};
\node[circle,fill=orange,inner sep=0pt,minimum size=7pt,opacity=0.2] at (1t) {};
\end{scope}

\begin{scope}[xshift=5cm]
\primitive{0.5}{0}{0.5}{1}
\qin{i0a}{0}{0.85}{};
\din{i1a}{0}{0.3}{$1$};
\qout{o0a}{1}{0.85}{};
\dout{o1a}{1}{0.3}{$1$};
\node at (0.75,0.3){\scriptsize $S$};\draw[densely dotted] (0.5,0.3) -- (0.65,0.3);\draw[densely dotted] (0.87,0.3) -- (1,0.3);
\node at (0.75,0.85){\scriptsize $\epsilon$};\draw[densely dotted] (0.5,0.85) -- (0.65,0.85);\draw[densely dotted] (0.87,0.85) -- (1,0.85);
\node[circle,fill=blue,inner sep=0pt,minimum size=7pt,opacity=0.2] at (i0a) {};
\node[circle,fill=orange,inner sep=0pt,minimum size=7pt,opacity=0.2] at (i1a) {};
\node[opacity=0.4] at (0.75,-0.25){\scriptsize $\lambda_3$};
\end{scope}

\begin{scope}
\computonComposite{0.15}{-0.3}{2.75}{1.5};

\primitive{0.5}{0}{0.5}{1}
\qin{i0pc}{0}{0.9}{};
\din{i1pc}{0}{0.5}{$1$};
\din{i2pc}{0}{0.1}{$1$};
\node at (0.75,0.3){\scriptsize $\times$};\draw[densely dotted] (0.5,0.5) -- (0.65,0.3);\draw[densely dotted] (0.5,0.1) -- (0.65,0.3);\draw[densely dotted] (0.87,0.3) -- (1,0.3);
\node at (0.75,0.85){\scriptsize $\epsilon$};\draw[densely dotted] (0.5,0.85) -- (0.65,0.85);\draw[densely dotted] (0.85,0.85) -- (1,0.85);

\qmatch{o0pc}{1.5}{0.85}{};\flow{{1,0.9}}{o0pc}{densely dashed}{};\flow{o0pc}{{2,0.85}}{densely dashed}{};
\dmatch{o1pc}{1.5}{0.3}{$1$}{below};\flow{{1,0.3}}{o1pc}{}{};\flow{o1pc}{{2,0.3}}{}{};

\primitive{2}{0}{0.5}{1}
\qout{o0ac}{2.5}{0.85}{};
\dout{o1ac}{2.5}{0.3}{$1$};
\node at (2.25,0.3){\scriptsize $S$};\draw[densely dotted] (2,0.3) -- (2.15,0.3);\draw[densely dotted] (2.37,0.3) -- (2.5,0.3);
\node at (2.25,0.85){\scriptsize $\epsilon$};\draw[densely dotted] (2,0.85) -- (2.15,0.85);\draw[densely dotted] (2.35,0.85) -- (2.5,0.85);

\node[circle,fill=blue,inner sep=0pt,minimum size=7pt,opacity=0.2] at (o0pc) {};
\node[circle,fill=orange,inner sep=0pt,minimum size=7pt,opacity=0.2] at (o1pc) {};
\end{scope}
\end{tikzpicture}
}
{
\begin{tikzpicture}
\begin{scope}
\draw[draw=black,fill=white] (0,0.4) rectangle ++(0.12,0.25);
\node[inner sep=0pt,minimum size=0pt,label=right:{\scriptsize Comp. unit}] at (0.2,0.5) {};
\draw[densely dashed] (2.1,0.5) to node[pos=0.5]{\arrowflow} (2.6,0.5);
\node[inner sep=0pt,minimum size=3pt,label=right:{\scriptsize Control flow}] at (2.5,0.5) {};
\node[draw=black,fill=white,inner sep=0pt,minimum size=3pt,label=right:{\scriptsize Control inport}] at (4.8,0.5) {};
\node[fill=black,inner sep=0pt,minimum size=3pt,label={right:{\scriptsize Control outport}}] at (7.3,0.5) {};
\node[draw,fill=white,fill fraction={black}{0.5},inner sep=0pt,minimum size=3pt,label=right:{\scriptsize Control inoutport}] at (9.8,0.5) {};
\end{scope}

\begin{scope}
\draw (0,0) to node[pos=0.5]{\arrowflow} (0.4,0);
\node[inner sep=0pt,minimum size=3pt,label=right:{\scriptsize Data flow}] at (0.3,0){};
\node[circle,draw=black,fill=white,inner sep=0pt,minimum size=3pt,label={right:{\scriptsize Data inport}}] at (2.1,0) {};
\node[circle,fill=black,inner sep=0pt,minimum size=3pt,label={right:{\scriptsize Data outport}}] at (4.2,0) {};
\node[circle,draw,fill=white,fill fraction={black}{0.5},inner sep=0pt,minimum size=3pt,label=right:{\scriptsize Data inoutport}] at (6.5,0) {};
\draw[draw=black,fill=black!2] (9.1,-0.05) rectangle ++(0.3,0.2);
\node[inner sep=0pt,minimum size=3pt,label=right:{\scriptsize Composite computon}] at (9.35,0){};
\end{scope}
\end{tikzpicture}
}
\vspace{-10pt}
\caption{Examples of partial and total sequencing.}
\label{fig:example-sequencing}
\end{figure}
\vspace{-14pt}
As every computon always possesses control ports, it is always possible to connect at least one computon's control outport with the control inport of another, i.e., every two computons are always sequentiable (see Theorem \ref{th:sequential-always}). Lemmas \ref{prop:interface-sequential} and \ref{prop:interface-sequential-total} specify how interfaces are formed for sequential computons. Lemma \ref{prop:sequencing-mono} specifies that the morphisms induced by the pushout of a sequentiable span are mono. By Theorem \ref{th:sequencing-associative}, both total and partial sequencing are not commutative, and only total sequencing is associative. Theorem \ref{th:sequencing-identity} says that a unit computon serves as the left- and right-identity for both classes of sequencing.

\begin{theorem}\label{th:sequential-always}
If $\lambda_1$ and $\lambda_2$ are computons, there exists a sequentiable span $\rho$ whose pushout is either $\lambda_1\rhd_\rho\lambda_2$ or $\lambda_1\unrhd_\rho\lambda_2$.
\end{theorem}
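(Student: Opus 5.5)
By Restriction \ref{def:computon-5} of Definition \ref{def:computon}, $\lambda_1$ has a control outport and $\lambda_2$ has a control inport, and these two ports are all the construction needs. Choose a port ${q\in P_1^-}$ with ${c_1(q)=0}$ and a port ${p\in P_2^+}$ with ${c_2(p)=0}$. Let $\lambda_0$ be the unit computon with single port $x$. Define ${\alpha_1\colon\lambda_0\to\lambda_1}$ by ${x\mapsto q}$ and ${\alpha_2\colon\lambda_0\to\lambda_2}$ by ${x\mapsto p}$. All other components are empty maps, and ${\alpha_{\underline{n}}}$ and ${\alpha_B}$ are identities.

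\textbf{Step 1: the maps are morphisms.} Because $\lambda_0$ has no units or flows, the unit/flow squares and the device square of Definition \ref{def:morphism} commute vacuously. The typing square commutes because ${c_1(q)=0=c_0(x)=c_2(p)}$. The interface condition ${\vec{i}(\alpha)\cup\vec{o}(\alpha)\subseteq P_0^+\cup P_0^-}$ holds trivially, since ${P_0=P_0^+=P_0^-=\{x\}}$. Both maps are injective in every component, because the only nonempty domain, $P_0$, is a singleton. Hence $\alpha_1$ and $\alpha_2$ are monomorphisms; I expect this to follow componentwise in the category.

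\textbf{Step 2: the span is sequentiable.} The span ${\rho:=\lambda_1\xleftarrow{\alpha_1}\lambda_0\xrightarrow{\alpha_2}\lambda_2}$ has trivial apex, consists of monos, and satisfies ${\alpha_1(P_0)=\{q\}\subseteq P_1^-}$ and ${\alpha_2(P_0)=\{p\}\subseteq P_2^+}$. So it is sequentiable in the sense of Definition \ref{def:sequentiable-span}.

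\textbf{Step 3: the pushout exists and is a sequential computon.} By Theorem \ref{th:sequentiable-is-pushable}, $\rho$ is pushable, so by Theorem \ref{th:pushable-pushout-cat} its pushout exists. By Definition \ref{def:sequential-computon} this pushout is either total, if ${\{q\}=P_1^-}$ and ${\{p\}=P_2^+}$, giving ${\lambda_1\unrhd_\rho\lambda_2}$, or otherwise partial, giving ${\lambda_1\rhd_\rho\lambda_2}$. The dichotomy is exhaustive by definition.

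\textbf{Main obstacle.} The only real risk is confirming that the pushout object is genuinely a computon. In particular, the glued port $q\sim p$ must not destroy Restriction \ref{def:computon-5}: the composite needs a remaining control inport and a remaining control outport. I expect Theorem \ref{th:pushable-pushout-cat} to already guarantee this.

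If that is in doubt, I would check it directly. The identified port acquires an inflow from $\lambda_2$ and an outflow from $\lambda_1$. However, the control inport ${p'\in P_1^+}$ of $\lambda_1$ is not identified with anything, since ${\alpha_1(P_0)=\{q\}}$ and ${q\neq p'}$ unless $q$ is an inoutport. In the inoutport case, $q$ lies in ${P_1^+\cap P_1^-}$, and gluing it to $p$ yields a port that is still not the target of any outflow. In either case, Lemma \ref{prop:interface-preservation} combined with the sourcewise definitions of $s$ and $t$ shows that a control inport survives. The symmetric argument gives a control outport.
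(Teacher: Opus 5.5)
Your proof is correct and follows the argument the paper has in mind. The paper gives no separate proof, only the remark that every computon has control ports, so a control outport of $\lambda_1$ can always be glued to a control inport of $\lambda_2$. You carry this out with the same morphism and monomorphism checks the paper applies to the trivial apex in the proof of Theorem \ref{th:always-sync}.

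Your final ``main obstacle'' paragraph is not needed, since Theorem \ref{th:pushable-pushout-cat} already gives the pushout as an object of the computon category. If you keep that direct check, cite Lemma \ref{prop:interface-sequential} ($\beta_1(P_1^+)\subseteq P_3^+$, $\beta_2(P_2^-)\subseteq P_3^-$) rather than Lemma \ref{prop:interface-preservation}. The latter only gives the reverse inclusion on preimages.
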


\begin{lemma}\label{prop:interface-sequential}
Assume ${\rho}$ is a sequentiable span ${\lambda_1\xleftarrow{\alpha_1}\lambda_0\xrightarrow{\alpha_2}\lambda_2}$ of computon morphisms. If ${\lambda_1\xrightarrow{\beta_1}\lambda_3\xleftarrow{\beta_2}\lambda_2}$ is the cospan induced by the pushout of $\rho$, then ${\beta_1(P_1^+)\subseteq P_3^+}$ and ${\beta_2(P_2^-)\subseteq P_3^-}$.
\end{lemma}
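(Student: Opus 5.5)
We argue by contradiction, using the explicit description of the pushout from Definition \ref{def:pushout}: every inflow of $\lambda_3$ is $\beta_1(i_1)$ for some $i_1\in I_1$ or $\beta_2(i_2)$ for some $i_2\in I_2$, and likewise for outflows, with $s_3$ and $t_3$ computed sourcewise. We show $\beta_1(P_1^+)\subseteq P_3^+$; the claim $\beta_2(P_2^-)\subseteq P_3^-$ follows by the dual argument, swapping the roles of $\lambda_1,\lambda_2$, inflows/outflows, $s$/$t$ and $\alpha_1(P_0)\subseteq P_1^-$ with $\alpha_2(P_0)\subseteq P_2^+$.

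Take $p\in P_1^+$ and suppose $\beta_1(p)\notin P_3^+$. By Definition \ref{def:interface}, there is an outflow $o\in O_3$ with $t_3(o)=\beta_1(p)$. We split into two cases according to where $o$ comes from.

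\emph{Case $o=\beta_1(o_1)$.} Commutativity of the port square for $\beta_1$ gives $\beta_1(t_1(o_1))=\beta_1(p)$. Since $p\in P_1^+$, we have $t_1(o_1)\neq p$, so $\beta_1$ identifies two distinct ports of $\lambda_1$. In a pushout of finite sets along a span of monomorphisms, $\beta_1$ is injective, which is the content of Lemma \ref{prop:sequencing-mono}. To avoid circularity, we argue directly: $P_3=P_1\sqcup_{P_0}P_2$ is the quotient generated by $\alpha_1(x)\sim\alpha_2(x)$. Because $\alpha_1$ and $\alpha_2$ are injective, each equivalence class contains at most one element of $P_1$. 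This contradicts $t_1(o_1)\neq p$.

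\emph{Case $o=\beta_2(o_2)$ and $o\notin\beta_1(O_1)$.} Then $\beta_2(t_2(o_2))=\beta_1(p)$, so the port $q:=t_2(o_2)\in P_2$ lies in the same class as $p$. Such a class exists only through $P_0$: there is $x\in P_0$ with $\alpha_1(x)=p$ and $\alpha_2(x)=q$. The sequentiable-span hypotheses give $p=\alpha_1(x)\in P_1^-$ and $q=\alpha_2(x)\in P_2^+$. But $q=t_2(o_2)\in t_2(O_2)$, so $q\notin P_2^+$, a contradiction.

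The main obstacle is making precise the claim that pushout classes meet $P_1$ and $P_2$ in at most one element each, and are nontrivial only via $P_0$. This is where injectivity of both legs is essential; the zig-zag $p\sim\alpha_2$-image$\sim\ldots$ cannot revisit $P_1$ because the $\alpha_i$ are mono. Once this fact is in place, both cases, and their duals, are routine.
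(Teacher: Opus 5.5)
Your proof is correct and complete. The paper states this lemma without proof, so there is no argument of its own to compare against. Your argument is the direct verification the statement calls for. It uses three facts: $O_3=\beta_1(O_1)\sqcup\beta_2(O_2)$ because $O_0=\emptyset$; $t_3$ is computed sourcewise; and the quotient $P_1\sqcup_{P_0}P_2$ only glues $\alpha_1(x)$ to $\alpha_2(x)$. The dual half follows symmetrically from $\alpha_1(P_0)\subseteq P_1^-$.

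Two minor tightenings. In your second case no injectivity is needed: every port of $P_2$ outside $\alpha_2(P_0)$ is its own singleton class, so $q\in\alpha_2(P_0)\subseteq P_2^+$ directly. Your remark that $p\in P_1^-$ is never used. In the first case, only injectivity of $\alpha_2$ on ports matters, because monomorphisms are stable under pushout of finite sets.
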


\begin{lemma}\label{prop:interface-sequential-total}
Assume that the pushout of ${\rho:=\lambda_1\xleftarrow{\alpha_1}\lambda_0\xrightarrow{\alpha_2}\lambda_2}$ is a total sequential computon $\lambda_3$. If ${\lambda_1\xrightarrow{\beta_1}\lambda_3\xleftarrow{\beta_2}\lambda_2}$ is the cospan induced by the pushout of $\rho$, then ${\beta_1(P_1^+)=P_3^+}$ and ${\beta_2(P_2^-)=P_3^-}$.
\end{lemma}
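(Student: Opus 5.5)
We need to prove Lemma \ref{prop:interface-sequential-total}: for a total sequential computon, ${\beta_1(P_1^+)=P_3^+}$ and ${\beta_2(P_2^-)=P_3^-}$. The inclusions ${\beta_1(P_1^+)\subseteq P_3^+}$ and ${\beta_2(P_2^-)\subseteq P_3^-}$ are already given by Lemma \ref{prop:interface-sequential}, since a total sequential computon arises from a sequentiable span. So only the reverse inclusions ${P_3^+\subseteq\beta_1(P_1^+)}$ and ${P_3^-\subseteq\beta_2(P_2^-)}$ remain. By symmetry of the argument (swapping the roles of inports/outports, $s$/$t$ and $\lambda_1$/$\lambda_2$), we treat only the inport case in detail.

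Take ${p\in P_3^+=P_3\setminus t_3(O_3)}$. Since ${P_3=P_1\sqcup_{P_0}P_2}$ is the pushout of finite sets, $p=\beta_1(p_1)$ for some $p_1\in P_1$ or $p=\beta_2(p_2)$ for some $p_2\in P_2$. We proceed by cases.

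\textbf{Case $p=\beta_1(p_1)$.} We show $p_1\in P_1^+$. Otherwise there is $o_1\in O_1$ with $t_1(o_1)=p_1$. Commutativity of the $t$-square for $\beta_1$ gives $t_3(\beta_1(o_1))=\beta_1(p_1)=p$, so $p\in t_3(O_3)$, contradicting $p\in P_3^+$.

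\textbf{Case $p=\beta_2(p_2)$.} If $p_2\notin P_2^+$, the same argument applied to $\beta_2$ puts $p$ in $t_3(O_3)$, again a contradiction. Hence $p_2\in P_2^+$. Totality gives $P_2^+=\alpha_2(P_0)$, so $p_2=\alpha_2(p_0)$ for some $p_0\in P_0$. Commutativity of the pushout square then yields
\[
p=\beta_2(\alpha_2(p_0))=\beta_1(\alpha_1(p_0)),
\]
which reduces this case to the first one and shows $p\in\beta_1(P_1^+)$.

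The outport inclusion follows dually. For ${p\in P_3^-}$ with $p=\beta_1(p_1)$, the $s$-square forces $p_1\in P_1^-=\alpha_1(P_0)$. We then transport along the pushout square to get $p=\beta_2(\alpha_2(p_0))$. Finally, $\alpha_2(p_0)\in P_2^-$ follows from the $s$-square for $\beta_2$, exactly as in the first case.

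The main subtlety is the case-2 transport. We must make sure that, after rewriting $p$ as $\beta_1(\alpha_1(p_0))$, the element $\alpha_1(p_0)$ really is an inport of $\lambda_1$, and not merely some preimage. This does not rely on the pushout's explicit description in Definition \ref{def:pushout}. It follows from the case-1 argument applied to the new preimage $\alpha_1(p_0)$: that argument only uses commutativity of the $t$-square for $\beta_1$ and holds for any preimage of $p$. This is where the totality condition $\alpha_2(P_0)=P_2^+$ is essential; for partial sequencing an inport of $\lambda_2$ outside $\alpha_2(P_0)$ would survive as an inport of $\lambda_3$ not in $\beta_1(P_1^+)$.
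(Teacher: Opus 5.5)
Your proof is correct. The paper states this lemma without giving a proof, so there is nothing to compare against directly; your argument is the one its preceding results set up.

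The inclusions $\beta_1(P_1^+)\subseteq P_3^+$ and $\beta_2(P_2^-)\subseteq P_3^-$ come from Lemma~\ref{prop:interface-sequential}. For the reverse inclusions, your Case~1 is exactly Lemma~\ref{prop:interface-preservation} applied to $\beta_1$ (and, dually, to $\beta_2$), so you could cite it instead of re-deriving it from the $t$- and $s$-squares.

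The only new ingredient is the transport $\beta_2(\alpha_2(p_0))=\beta_1(\alpha_1(p_0))$. It relies on totality together with joint surjectivity of the pushout, and you handle it soundly.

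Two further observations are worth noting. Each equality uses only half of the totality hypothesis: $\alpha_2(P_0)=P_2^+$ for the first and $\alpha_1(P_0)=P_1^-$ for the second. Monicity of $\alpha_1,\alpha_2$ enters only through Lemma~\ref{prop:interface-sequential}, not through your reverse-inclusion argument.
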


\begin{lemma}\label{prop:sequencing-mono}
If ${\lambda_1\xrightarrow{\beta_1}\lambda_3\xleftarrow{\beta_2}\lambda_2}$ is the cospan induced by the pushout of a sequentiable span ${\lambda_1\xleftarrow{\alpha_1}\lambda_0\xrightarrow{\alpha_2}\lambda_2}$, then $\beta_1$ and $\beta_2$ are computon monomorphisms.
\end{lemma}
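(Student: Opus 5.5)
The plan is to work componentwise. The pushout of Definition~\ref{def:pushout} is computed as a pushout of finite sets in each of the components $U,P,I,O$. The type and device components are inclusions $\underline{n}\hookrightarrow\underline{n}$ and $B\hookrightarrow B$, so they are injective in any case. It therefore suffices to show two things. First, each of $\beta_{1,U},\beta_{1,P},\beta_{1,I},\beta_{1,O}$ (and likewise for $\beta_2$) is an injective function. Second, a computon morphism all of whose component functions are injective is a monomorphism in the computon category. Both $\beta_1$ and $\beta_2$ are genuine computon morphisms because the span is sequentiable, hence pushable by Theorem~\ref{th:sequentiable-is-pushable}, so the pushout exists by Theorem~\ref{th:pushable-pushout-cat}.

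For the first part, I use that $\lambda_0$ is trivial, so $U_0=I_0=O_0=\emptyset$. The pushouts $U_1\sqcup_{U_0}U_2$, $I_1\sqcup_{I_0}I_2$ and $O_1\sqcup_{O_0}O_2$ are then plain disjoint unions, and their coprojections are injective. The only real content lies in the port component $P_1\sqcup_{P_0}P_2$. This is the quotient of $P_1\sqcup P_2$ by the equivalence relation generated by $\alpha_1(p)\sim\alpha_2(p)$ for $p\in P_0$.

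Since $\alpha_1$ and $\alpha_2$ are monomorphisms, their port components are injective. I will take this as given, since being a mono implies injective components via test maps from unit computons, or it holds by the paper's convention. Each generating pair then links a unique element of $P_1$ with a unique element of $P_2$, and no two generating pairs share an endpoint. Hence every equivalence class is either a singleton or a two-element set $\{\alpha_1(p),\alpha_2(p)\}$. In particular, no class contains two distinct elements of $P_1$ or two distinct elements of $P_2$. This gives injectivity of $\beta_{1,P}$ and $\beta_{2,P}$. It is just the standard fact that in $\mathbf{Set}$ pushouts preserve monomorphisms. I expect this port-level step, namely making the equivalence-class description precise, to be the main point needing care.

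For the second part, suppose $\gamma,\delta\colon\lambda\to\lambda_1$ are computon morphisms with $\beta_1\circ\gamma=\beta_1\circ\delta$. Composition of computon morphisms is componentwise, so $\beta_{1,X}\circ\gamma_X=\beta_{1,X}\circ\delta_X$ for each component $X\in\{U,P,I,O,\underline{n},B\}$. Injectivity of $\beta_{1,X}$ then yields $\gamma_X=\delta_X$, so $\gamma=\delta$. The argument for $\beta_2$ is symmetric, which concludes the proof.
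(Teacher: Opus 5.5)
The paper states this lemma without proof, so there is no argument of its own to compare against. Your componentwise proof is correct and is the natural one.

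Triviality of $\lambda_0$ turns the $U$-, $I$- and $O$-components into plain disjoint unions. The $P$-component is a pushout in sets of two injections; in fact injectivity of $\alpha_{2,P}$ alone already gives injectivity of $\beta_{1,P}$, and symmetrically. Componentwise injectivity then gives monicity, which is exactly how the paper itself argues in the proof of Theorem~\ref{th:always-sync}.

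One small imprecision concerns your aside that monicity of $\alpha_i$ forces injectivity of $\alpha_{i,P}$ ``via test maps from unit computons''. A unit computon's only port has type $0$. So when the $\underline{n}$-component is the identity, as in all of the paper's constructions, such test maps only reach control ports.

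Two ports identified by $\alpha_{i,P}$ must share a type $k$, because $\alpha_{\underline{n}}$ is injective. To separate two such data ports you should use a trivial test computon with one control port and one $k$-typed port. Send its control port to a fixed control port of $\lambda_0$, and send its data port to the two identified ports in the two test maps. This is a valid computon by Definition~\ref{def:computon}, and the interface condition on the test maps holds trivially.

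Since the paper treats computon monomorphisms as morphisms with injective components anyway (see its implementation section), this does not affect your proof.
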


\begin{theorem}\label{th:sequencing-associative}
Total sequencing is associative up to isomorphism, but partial sequencing is not. Both total and partial sequencing are not commutative.
\end{theorem}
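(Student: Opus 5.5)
The theorem has three parts: associativity of total sequencing, failure of associativity for partial sequencing, and failure of commutativity for both. I would prove the positive part with the pasting lemma for pushouts, and the two negative parts with explicit counterexamples built from primitive computons such as those of Fig.~\ref{fig:example-sequencing}.

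\textbf{Associativity of total sequencing.} Consider total sequential computons $(\lambda_1\unrhd_{\rho_1}\lambda_2)\unrhd_{\rho_2}\lambda_3$, where $\rho_1=\lambda_1\xleftarrow{\alpha_1}\lambda_0\xrightarrow{\alpha_2}\lambda_2$ and $\rho_2$ is a sequentiable span $\lambda_1\unrhd_{\rho_1}\lambda_2\xleftarrow{\gamma_1}\lambda_0'\xrightarrow{\gamma_2}\lambda_3$.
\begin{enumerate}[label=(\arabic*)]
\item By Lemma~\ref{prop:interface-sequential-total}, the outports of $\lambda_1\unrhd_{\rho_1}\lambda_2$ are exactly $\beta_2(P_2^-)$. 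Since $\beta_2$ is mono (Lemma~\ref{prop:sequencing-mono}), $\gamma_1$ factors uniquely as $\beta_2\circ\gamma_1'$ with $\gamma_1'\colon\lambda_0'\to\lambda_2$.
\item The factor $\gamma_1'$ is a monomorphism, and by Lemma~\ref{prop:interface-preservation} it satisfies $\gamma_1'(P_0')=P_2^-$. So $\rho_2':=(\gamma_1',\gamma_2)$ is a total sequentiable span.
\item Apply the pasting lemma twice. The pushout of $\rho_2$ equals the pushout of $\lambda_1\leftarrow\lambda_0\to\lambda_2\unrhd_{\rho_2'}\lambda_3$, where the right leg is $\alpha_2$ followed by the induced mono. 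Consequently $(\lambda_1\unrhd\lambda_2)\unrhd\lambda_3\cong\lambda_1\unrhd(\lambda_2\unrhd\lambda_3)$.
\item Check that the outer span is total. Its legs are mono, and Lemma~\ref{prop:interface-sequential-total} gives that the inports of $\lambda_2\unrhd\lambda_3$ are exactly the image of $P_2^+$.
\end{enumerate}
The converse direction is symmetric. The needed pushouts exist because every sequentiable span is pushable (Theorems~\ref{th:sequentiable-is-pushable} and~\ref{th:pushable-pushout-cat}). Universal properties then yield the isomorphism.

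\textbf{Main obstacle.} The main obstacle is the bookkeeping in the factorisation step. We must ensure $\gamma_1$ lands in the $\lambda_2$-part and not in identified ports, and that the composite morphisms respect the interface condition $\vec{i}\cup\vec{o}\subseteq P^+\cup P^-$. To handle this, I would work componentwise in finite sets, where pasting of pushouts is standard. I would then check that the sourcewise-defined $\sigma,t,\tau,s,c,r,f$ agree, using Remark~\ref{remark:union}.

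\textbf{Partial sequencing is not associative.} I would use a counterexample in which an outport of $\lambda_1$ is left unconnected in $\lambda_1\rhd\lambda_2$ and therefore remains an outport of the composite. A second partial sequencing with $\lambda_3$ can then identify that $\lambda_1$-outport with an inport of $\lambda_3$. The bracketing $\lambda_1\rhd(\lambda_2\rhd\lambda_3)$ cannot express this identification as a single binary sequencing of $\lambda_1$ with the composite, at least with the same pushout shape. Concretely, take $\lambda_1$ to be the multiplication primitive with an extra data outport, and $\lambda_2,\lambda_3$ to be successor primitives. Compare the two bracketings with the natural spans, and exhibit a difference that rules out an isomorphism. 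One candidate invariant is that the graph of one result has a port with flows to two distinct units. Counting ports of each type with given in- and out-degree should suffice.

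\textbf{Non-commutativity.} A single example covers both kinds of sequencing. Take $\lambda_1$ to be the multiplication primitive and $\lambda_3$ the successor primitive of Fig.~\ref{fig:example-sequencing}. In $\lambda_1\unrhd\lambda_3$ the input interface has two data inports of type 1. The pushout of any sequentiable span in the other order has input interface containing $\lambda_3$'s single data inport, at most one type-1 data inport. Interfaces are preserved by isomorphism, so the composites are not isomorphic. The same comparison applies to partial sequencing.
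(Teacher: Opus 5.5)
Your argument for total sequencing is sound. One small correction: $\gamma_1'(P_0')=P_2^-$ follows from $\gamma_1(P_0')=\beta_2(P_2^-)$ and injectivity of $\beta_2$, not from Lemma~\ref{prop:interface-preservation}. The pasting lemma holds in any category once the squares are pushouts.

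\textbf{Partial sequencing.} Here there is a genuine gap: the obstruction you invoke does not exist. A composite $\lambda_2\rhd_\rho\lambda_3$ keeps as inports all inports of $\lambda_3$ not hit by $\rho$; Lemma~\ref{prop:interface-sequential} is only an inclusion. So the outer span of $\lambda_1\rhd(\lambda_2\rhd\lambda_3)$ can glue a leftover $\lambda_1$-outport to one of them. Write $a,m,e$ for the control, product and extra outports of your $\lambda_1$, and $c_j,n_j,d_j$ for the control inport, data inport and control outport of $\lambda_j$ ($j=2,3$). The apex of a sequentiable span is a computon, so it contains a control port (condition~(v) of Definition~\ref{def:computon}). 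Hence every left composite glues $a\sim c_2$ and $d_2\sim c_3$, plus possibly data pairs. Take the left composite gluing $a\sim c_2$, $m\sim n_2$, $d_2\sim c_3$, $e\sim n_3$. The inner span $\{d_2\sim c_3\}$ followed by the outer span $\{a\sim c_2,\,m\sim n_2,\,e\sim n_3\}$ gives the same quotient of the disjoint union of the three computons. So the two bracketings are isomorphic, and your example is not a counterexample.

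If $\rhd$ is read as excluding total spans, that outer span happens to be total. Counting glued ports (four on the left, at most three on any partial right side) would then rescue the example, but that is not your argument. In any case, comparing ``the natural spans'' never suffices: you must exclude every right-bracketed choice of spans. Your invariant ``a port with flows to two distinct units'' cannot arise, since sequencing only glues outports to inports along monos.

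The missing idea is the forced control gluing. Let $\lambda_1$ have two control outports $a,b$, e.g.\ a glue with two control inports and two control outports, and take $\lambda_2,\lambda_3$ to be successor primitives. Form $(\lambda_1\rhd\lambda_2)\rhd\lambda_3$ by gluing $a$ to $c_2$ and then $b$ to $c_3$.
\begin{itemize}
\item Every $\lambda_2\rhd\lambda_3$ must glue $d_2\sim c_3$.
\item The outer span must then glue $a$ or $b$ to $c_2$.
\item So every right-bracketed composite has a flow path through three distinct units; the left composite has none.
\item Such paths are preserved by isomorphisms.
\end{itemize}

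\textbf{Non-commutativity.} Your invariant is false. The input interface of $\lambda_3\rhd_\rho\lambda_1$ is the image of $P_3^+$ together with the inports of $\lambda_1$ not hit by $\rho$. Since $\lambda_3$ has a single data outport, at least one of the two type-$1$ data inports of the multiplication primitive survives. Glue the control and data outports of $\lambda_3$ to the control inport and one data inport of $\lambda_1$. The result has seven ports, one control and two type-$1$ data inports, and one control and one type-$1$ data outport, which are exactly the counts of $\lambda_1\unrhd\lambda_3$.

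What does separate them is where the devices sit. An isomorphism maps inports onto inports (Lemma~\ref{prop:interface-preservation} applied to it and its inverse), commutes with $s$ and $\tau$, and preserves $f$ because its $B$-component is an inclusion. In $\lambda_1\unrhd\lambda_3$, the unit whose inflows all come from inports carries $\times$. In every $\lambda_3\rhd_\rho\lambda_1$ it carries $S$, because the forced control gluing feeds the $\times$-unit from an internal port.

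For total sequencing your example is degenerate: $\lambda_3\unrhd\lambda_1$ does not exist, since $|P_3^-|=2\neq 3=|P_1^+|$. Use primitives with matching interfaces instead, such as the successor and factorial primitives $\lambda_3,\lambda_5$ of Fig.~\ref{fig:example-branching}. Both total composites exist, and the same device argument distinguishes them.
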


\begin{theorem}\label{th:sequencing-identity}
Up to isomorphism, the unit computon is the left- and right-identity for both total and partial sequencing.
\end{theorem}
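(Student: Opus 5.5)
We prove the right-identity case first; the left-identity case is dual. Let $\lambda$ be any computon and $\lambda_u$ a unit computon with single port $p_u$, so $P_u^+=P_u^-=\{p_u\}$ and $c_u(p_u)=0$. By Restriction \ref{def:computon-5} of Definition \ref{def:computon}, $\lambda$ has a control outport $q$. Take $\lambda_0$ to be a unit computon with port $p_0$. Define $\alpha_1\colon\lambda_0\to\lambda$ by $p_0\mapsto q$ and $\alpha_2\colon\lambda_0\to\lambda_u$ by $p_0\mapsto p_u$; the type and device components are identities.

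\textbf{Step 1: the span is sequentiable.} Both maps are trivially injective. The commuting squares of Definition \ref{def:morphism} hold vacuously on $U,I,O$, and on ports because both $p_0$ and its images have type $0$. The condition $\vec{i}\cup\vec{o}\subseteq P_0^+\cup P_0^-$ holds since $P_0=P_0^+$. We also have $\alpha_1(P_0)\subseteq P^-$ and $\alpha_2(P_0)=P_u^+$. So $\rho$ is sequentiable in the sense of Definition \ref{def:sequentiable-span}, and its pushout exists by Theorems \ref{th:sequentiable-is-pushable} and \ref{th:pushable-pushout-cat}.

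\textbf{Step 2: compute the pushout.} Since $U_u=I_u=O_u=U_0=I_0=O_0=\emptyset$, the pushout components of Definition \ref{def:pushout} give $U\sqcup_\emptyset\emptyset\cong U$, and similarly $I$ and $O$ are unchanged. For ports, $P\sqcup_{\{p_0\}}\{p_u\}\cong P$, since $p_u$ is identified with $q$. All structure maps are computed sourcewise from $\lambda$ (Remark \ref{remark:union}). Hence $\beta_1\colon\lambda\to\lambda\rhd_\rho\lambda_u$ is bijective on every component and has identity type and device maps. An inverse tuple is then a morphism: the commuting squares transfer through the bijections, and the interface condition holds because $\beta_1$ is bijective and so does not change pre-sets or post-sets. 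Thus $\beta_1$ is an isomorphism.

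\textbf{Step 3: total versus partial.} The span is total exactly when $P^-=\{q\}$, and partial otherwise. The computation in Step 2 does not depend on which case holds, so $\lambda$ is the identity for both $\unrhd$ and $\rhd$. One subtlety remains. For total sequencing with $|P^-|>1$, a span into the unit computon cannot cover all of $P^-$ monomorphically. So we interpret the claim as ``there exists a sequentiable span of the appropriate kind realising the identity,'' and in the total case we restrict to $\lambda$ with a single outport. Choosing this reading carefully, consistent with how Theorem \ref{th:sequential-always} quantifies spans, is the main obstacle. The algebra itself is routine.

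The left-identity case is symmetric. Use a control inport $q'\in P^+$, which exists by Restriction \ref{def:computon-5}, and the span $\lambda_u\xleftarrow{}\lambda_0\xrightarrow{}\lambda$ with $p_0\mapsto p_u\in P_u^-$ and $p_0\mapsto q'$. The same pushout computation shows $\beta_2\colon\lambda\to\lambda_u\rhd_\rho\lambda$ is an isomorphism.
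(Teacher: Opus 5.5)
Your Step~2 computation is correct and is, in substance, what the paper does. With $U,I,O$ empty in both the apex and the unit computon, the pushout only identifies one port of $\lambda$ with $p_u$, so the induced morphism from $\lambda$ is an isomorphism. The gap is that you prove a weaker statement than the paper does. You build one particular span and then retreat to the reading ``there exists a sequentiable span realising the identity.'' The paper instead shows that \emph{every} sequentiable span with the unit computon as an operand yields a composite isomorphic to the other operand.

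The missing idea is that such a span is forced. Suppose $\lambda\xleftarrow{\alpha_1}\lambda_0\xrightarrow{\alpha_2}\lambda_u$ is sequentiable. Then $\lambda_0$ is trivial, $P_0\neq\emptyset$ by Definition~\ref{def:computon}, and $\alpha_2$ is injective into the singleton $P_u$. Hence $|P_0|=1$, $\lambda_0$ is itself a unit computon, and $\alpha_1$ merely selects one outport $q$ of $\lambda$. The paper finishes from $\lambda_0\cong\lambda_u$. Your Step~2 finishes equally well, because it uses nothing about $q$ beyond its being the image of $p_0$. Adding this observation turns your existential claim into the universal one with no extra work.

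It also removes what you call the main obstacle, since there is no span to choose. Your remark about when each composite exists is correct and worth keeping. With $\lambda_u$ on the right, a total composite exists only when $|P^-|=1$ and a partial one only when $|P^-|>1$; dually, with $\lambda_u$ on the left, the condition is on $|P^+|$. This means the law holds whenever the relevant composite is defined, which the paper leaves implicit in ``the proof of the other is analogous.'' But that is a restriction on the domain of the operator, not a reason to switch to an existential quantifier over spans. Minor: at the end of Step~3, ``$\lambda$ is the identity'' should read $\lambda_u$.
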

\vspace{-7pt}
\begin{proof}
It suffices to prove for partial sequencing since the proof of the other is analogous. If we consider that the unit computon $\lambda$ is the left- and right-identity for partial sequencing, we need to show ${\lambda\rhd_{\rho_1}\lambda_1\cong\lambda_1}$ and ${\lambda_1\rhd_{\rho_2}\lambda\cong\lambda_1}$ for some arbitrary computon $\lambda_1$ and sequentiable spans $\rho_1$ and $\rho_2$. We just prove ${\lambda\rhd_{\rho_1}\lambda_1\cong\lambda_1}$ since the other is symmetric. 

By Definition \ref{def:sequentiable-span}, we know that ${\rho_1}$ is a sequentiable span of the form ${\lambda\xleftarrow{\alpha}\lambda_0\xrightarrow{\alpha_1}\lambda_1}$. Since $\lambda_0$ must be a trivial computon, ${U_0\cong\emptyset\cong U}$, ${I_0\cong\emptyset\cong I}$ and ${O_0\cong\emptyset\cong O}$ must hold for the functions ${\alpha_U}$, ${\alpha_I}$ and ${\alpha_O}$ to be well-defined. Given that $\alpha$ must be a monomorphism to satisfy Definition \ref{def:sequentiable-span}, we have ${P_0\cong P}$ in addition. Consequently, ${\lambda_0\cong\lambda}$ holds which directly implies ${\lambda\rhd_{\rho_1}\lambda_1\cong\lambda+_{\lambda_0}\lambda_1\cong\lambda_1}$, as required. 
\end{proof}
\vspace{-10pt}
\subsection{Parallel Computons}
\label{sec:operators-parallel}

In any computon category, it is possible to form composites to encapsulate control flow for the asynchronous invocation of two computons, as formalised in Definition \ref{def:async-computon}.

\begin{definition}[Async]\label{def:async-computon}
An async computon ${\lambda_1+\lambda_2}$ is the coproduct of $\lambda_1$ and $\lambda_2$.
\end{definition}

Given that a computon category has all coproducts \cite{arellanescompositional2026}, an async can always be formed (see Theorem \ref{th:async-exists}). By Theorems \ref{th:async-commutativity} and \ref{th:async-identity}, asynchronous parallelising (i.e., the operation to form an async) is associative and commutative, but has no identity.

\begin{theorem}\label{th:async-exists}
An async computon ${\lambda_1+\lambda_2}$ can always be constructed in a category of computons.
\end{theorem}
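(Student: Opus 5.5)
The plan is to show that the coproduct of Definition \ref{def:coproduct} is always a well-defined computon. The remark that a computon category has all coproducts \cite{arellanescompositional2026} essentially gives this, but I would verify it directly so the statement is self-contained. Given computons $\lambda_1$ and $\lambda_2$ over the same $\underline{n}$ and $B$, I take the disjoint unions $U_1\sqcup U_2$, $P_1\sqcup P_2$, $I_1\sqcup I_2$, $O_1\sqcup O_2$. The functions $\sigma,t,\tau,s,c,r,f$ are defined sourcewise as in Definition \ref{def:coproduct}. Each function is total and well-defined because every element of a disjoint union lies in exactly one summand. Finiteness is immediate.

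Next I check the conditions of Definition \ref{def:computon}. Nonemptiness of $P_1\sqcup P_2$ follows from that of $P_1$. Surjectivity of $\sigma$, $\tau$ and $r$ holds because a sum of surjections $g_1\sqcup g_2$ is surjective: any element of the codomain lies in some $\beta_k(X_k)$ and has a preimage there. Conditions (i) and (ii) concern only $\underline{n}$ and $B$, which are shared, so they hold trivially.

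For (iii), I note that $(c\circ s)^{-1}(0)$ in the coproduct is the disjoint union of the corresponding sets in $\lambda_1$ and $\lambda_2$. The restricted map is therefore again a sum of surjections, and the same argument applies to $\sigma$. For (iv), $\sigma\circ r=\tau$ holds summandwise: on $\beta_k(i_k)$ both sides equal $\beta_k(\tau_k(i_k))$ by the corresponding identity in $\lambda_k$. For (v), the control inport $p\in P_1\setminus t_1(O_1)$ maps to $\beta_1(p)$, which is not in $t(O_1\sqcup O_2)$ because $t$ preserves summands. The same reasoning applies to the control outport $q$.

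Finally, I would observe that the canonical injections $\beta_1,\beta_2$ are computon morphisms. Take $\alpha_{\underline{n}}$ and $\alpha_B$ to be identities. Commutativity of the squares holds by construction. The interface condition $\vec{i}(\beta_k)\cup\vec{o}(\beta_k)=\emptyset$ holds because no new flows are attached to the images of ports. The universal property is the one shown in \cite{arellanescompositional2026}.

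I expect no real obstacle here. The only point needing care is condition (v), specifically ensuring that the disjointness of the union prevents flows of one summand from landing on ports of the other. This follows directly from the sourcewise definition.
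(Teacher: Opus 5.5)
Your proof is correct, but it takes a more hands-on route than the paper. The paper gives no separate argument for this theorem. It simply appeals to the result of \cite{arellanescompositional2026} that a computon category has all coproducts; since an async is by Definition \ref{def:async-computon} just the coproduct, existence is immediate.

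You instead verify directly that the sourcewise disjoint-union construction of Definition \ref{def:coproduct} meets every clause of Definition \ref{def:computon}:
\begin{itemize}
\item surjectivity of $\sigma$, $\tau$, $r$ as a sum of surjections;
\item restriction (iii), because $(c\circ s)^{-1}(0)$ and $(c\circ t)^{-1}(0)$ split along the summands;
\item $\sigma\circ r=\tau$, checked summandwise;
\item the control ports required by (v), which survive because $s$ and $t$ preserve summands.
\end{itemize}
You also show that the injections $\beta_k$ are morphisms with $\vec{i}(\beta_k)=\vec{o}(\beta_k)=\emptyset$, and you defer only universality to the citation.

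The paper's route is a one-liner, but the cited result concerns the original notion of computon, which had no $B$, $r$, $f$ or condition $\sigma\circ r=\tau$. Your check is precisely what confirms that the construction still lands in the modified category used here, so it makes the statement self-contained at the level of the object and its canonical injections. Since Definition \ref{def:coproduct} defines the coproduct by this explicit construction, your verification already suffices for ``can always be constructed''. The universal property is only needed to justify calling the result a coproduct.
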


\begin{theorem}\label{th:async-commutativity}
Asynchronous parallelising is associative and commutative up to isomorphism.
\end{theorem}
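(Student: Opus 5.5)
The statement reduces to the corresponding facts about coproducts in a category, which Definition \ref{def:coproduct} computes componentwise. The plan is to work at the level of the underlying finite sets and structure functions. I will build explicit computon isomorphisms ${\lambda_1+\lambda_2\cong\lambda_2+\lambda_1}$ and ${(\lambda_1+\lambda_2)+\lambda_3\cong\lambda_1+(\lambda_2+\lambda_3)}$. These come from the canonical symmetry and associativity bijections of disjoint union on each component set $U$, $P$, $I$, $O$, with $\alpha_{\underline{n}}$ and $\alpha_B$ taken to be identities.

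For commutativity, I define $\gamma_X\colon X_1\sqcup X_2\to X_2\sqcup X_1$ for each ${X\in\{U,P,I,O\}}$ as the swap map, so that $\gamma_X(\beta_1(x_1))=\beta'_2(x_1)$ and $\gamma_X(\beta_2(x_2))=\beta'_1(x_2)$. I then check the four commuting squares of Definition \ref{def:morphism} for $\sigma,\tau,s,t,r,f,c$. Each is verified by a case split on which summand an element lies in, using the sourcewise description of these maps in Definition \ref{def:coproduct}. For example, for $o=\beta_1(o_1)$ we get $\gamma_U(\sigma(o))=\gamma_U(\beta_1(\sigma_1(o_1)))=\beta'_2(\sigma_1(o_1))=\sigma'(\beta'_2(o_1))=\sigma'(\gamma_O(o))$. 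The $f$ and $c$ squares hold because $\alpha_B$ and $\alpha_{\underline{n}}$ are identities and $f$, $c$ are computed sourcewise.

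Next I check the interface condition ${\vec{i}(\gamma)\cup\vec{o}(\gamma)\subseteq P^+\cup P^-}$. Since $\gamma_P$ is a bijection commuting with all structure maps, $\bullet\gamma(p)=\gamma(\bullet p)$ and $\gamma(p)\bullet=\gamma(p\bullet)$, so $\vec{i}(\gamma)=\vec{o}(\gamma)=\emptyset$. The inverse swap is a morphism by the same argument, and the two composites are identities componentwise, so $\gamma$ is an isomorphism.

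Associativity follows the same pattern, using the associator bijection $(X_1\sqcup X_2)\sqcup X_3\to X_1\sqcup(X_2\sqcup X_3)$, with a three-way case split. Alternatively, one can invoke the universal property of coproducts established in \cite{arellanescompositional2026} together with Theorem \ref{th:async-exists}. Both iterated coproducts then satisfy the universal property of a ternary coproduct, and uniqueness up to unique isomorphism gives the result immediately. Commutativity follows likewise, since $\lambda_2+\lambda_1$ with swapped injections is a coproduct of $\lambda_1,\lambda_2$. I would present this abstract argument as the main proof and keep the componentwise one as a sanity check.

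The only potential obstacle is that the paper's morphism notion carries the extra interface side condition. One must therefore confirm that the isomorphism supplied by the universal property is genuinely a computon morphism in both directions. Because it is componentwise bijective and structure-preserving, the pre- and post-sets are preserved exactly, so the condition holds vacuously.
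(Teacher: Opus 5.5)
Your proposal is correct and matches the paper's implicit justification. The paper states this theorem without proof and leaves it as an immediate consequence of an async being the categorical coproduct, whose existence and universal property it cites from \cite{arellanescompositional2026}; that is exactly the uniqueness-up-to-isomorphism argument you give as your main proof. Your componentwise swap/associator check, including the observation that $\vec{i}$ and $\vec{o}$ are empty for a structure-preserving bijection, is a sound self-contained supplement that does not depend on the cited universal property.
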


\begin{theorem}\label{th:async-identity}
Asynchronous parallelising does not satisfy the identity law.
\end{theorem}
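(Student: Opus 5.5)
The plan is to show that no computon $\lambda_e$ can act as an identity for the coproduct up to isomorphism. The argument counts components: for every computon $\lambda_e$, the coproduct $\lambda_e+\lambda_1$ is strictly larger than $\lambda_1$ in its set of ports.

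First, fix an arbitrary candidate identity $\lambda_e$. By Definition \ref{def:computon}, $P_e$ is nonempty; Restriction \ref{def:computon-5} even gives a control port. By Definition \ref{def:coproduct}, the port set of $\lambda_e+\lambda_1$ is $P_e\sqcup P_1$. Since all sets involved are finite, this gives
\[
|P_e\sqcup P_1| = |P_e|+|P_1| \geq |P_1|+1 > |P_1|.
\]

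Next, observe that a computon isomorphism has a two-sided inverse morphism. Composition of morphisms is componentwise, so its port component $\alpha_P$ must be a bijection. An isomorphism $\lambda_e+\lambda_1\cong\lambda_1$ would therefore force $|P_e|+|P_1|=|P_1|$, which is a contradiction. The same argument applies to $\lambda_1+\lambda_e$, either directly or via commutativity (Theorem \ref{th:async-commutativity}). Hence no computon is a left or right identity for asynchronous parallelising. To make this concrete, one can take $\lambda_1$ to be a unit computon, which has $|P_1|=1$.

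The only point needing care is whether an identity could be an "empty" computon. This is excluded because $P$ is required to be nonempty: even the minimal candidate, the unit computon, has one port. That is the step the proof hinges on. Beyond it, the argument only requires checking that isomorphisms in the computon category restrict to bijections on port sets, which follows componentwise from Definition \ref{def:morphism}.
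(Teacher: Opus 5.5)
Your proof is correct and is essentially the paper's argument: the paper also notes that $\lambda+\lambda_1\cong\lambda_1$ would force $P\sqcup P_1\cong P_1$, which is impossible because $P\neq\emptyset$ by Definition \ref{def:computon}, and it handles the right-identity case symmetrically. You also make explicit two points the paper uses only implicitly: the port sets are finite, so the cardinality count is valid, and an isomorphism of computons has a bijective port component.
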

\vspace{-7pt}
\begin{proof}
Suppose for contradiction that there is a computon $\lambda$ serving as the left- and right-identity for asynchronous parallelising so that ${\lambda+\lambda_1\cong\lambda_1\cong\lambda_1+\lambda}$ for any computon $\lambda_1$. As ${\lambda+\lambda_1}$ is the coproduct of $\lambda$ and $\lambda_1$ (see Definition \ref{def:async-computon}), Definition \ref{def:coproduct} states ${P\sqcup P_1\cong P_1}$, which is only true if $P=\emptyset$. But Definition \ref{def:computon} says that the set of ports cannot be empty so $\lambda$ cannot be a left-identity. Disproving the existence of a right-identity follows analogously. 
\end{proof}
\vspace{-5pt}
In \cite{arellanescompositional2026}, an operator for synchronous parallelising was described, relying on a special class of computons referred to as joins, which are primitive computons with exactly two control inports and one control outport. In the present work, we generalise such a notion in the form of a so-called \emph{glue}.

\begin{definition}[Glue]\label{def:glue}
A glue is a primitive computon $\lambda$ with ${c(p)=0}$ for all ${p \in P}$.
\end{definition}

As a join is just a special glue, glues can be used to build \emph{sync composites} to synchronise the execution of multiple computons (see Definition \ref{def:sync-computon}). By Theorem \ref{th:always-sync}, such structures can always be formed.

\begin{definition}[Sync]\label{def:sync-computon}
Given computons $\lambda_1$ and $\lambda_2$, a sync computon is the composite ${(\lambda_1+\lambda_2)\square\lambda_3}$ where $\lambda_3$ is a glue, ${\square\in\{\rhd_{\rho},\unrhd_{\rho}\}}$ and $\rho$ is a sequentiable span that identifies all the control outports of ${\lambda_1+\lambda_2}$ with all the inports of $\lambda_3$.
\end{definition}

\begin{remark}
Although every chosen glue computon $\lambda_3$ is required to satisfy ${P_3^+\cong C_1^-\sqcup C_2^-}$ with $C_1^-:=\{p\in P_1^-\mid c_1(p)=0\}$ and ${C_2^-:=\{p\in P_2^-\mid c_2(p)=0\}}$, there are no specific requirements for $P_3^-$. So, the choice of $\lambda_3$ is not uniquely defined. Similarly, the choice of $\rho$ is not unique because the pushout can yield a total or a partial sequential computon. Therefore, a sync computon is not uniquely defined.
\end{remark}

\begin{theorem}\label{th:always-sync}
A sync computon can always be constructed in a category of computons.
\end{theorem}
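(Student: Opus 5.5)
The plan is to build the sync computon explicitly and then check that each ingredient exists, using the earlier existence results. Fix arbitrary computons $\lambda_1$ and $\lambda_2$. By Theorem \ref{th:async-exists}, the async $\lambda_1+\lambda_2$ exists. By Definition \ref{def:coproduct}, its ports are $P_1\sqcup P_2$ with types computed sourcewise. Restriction \ref{def:computon-5} of Definition \ref{def:computon} gives each $\lambda_i$ at least one control outport. Together with Lemma \ref{prop:interface-preservation} (or a direct check that coproduct injections create no new flows), this shows that the set $C^-:=C_1^-\sqcup C_2^-$ of control outports of $\lambda_1+\lambda_2$ is nonempty, indeed has at least two elements.

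Next I construct a glue $\lambda_3$ with $P_3^+\cong C^-$. I take one unit $u$, one inflow $i_p$ for each $p\in C^-$, a single outflow $o$ and one further port $q$. I set $P_3:=C^-\sqcup\{q\}$, $s(i_p)=p$, $t(o)=q$, $\tau,\sigma$ constant at $u$, $r$ constant at $o$, $f(o)=\epsilon$, and $c\equiv 0$. I then verify Definition \ref{def:computon}. Restrictions \ref{def:computon-1} and \ref{def:computon-2} are part of the category data. Surjectivity of $\tau,\sigma,r$ holds because $I\neq\emptyset$ and $O\neq\emptyset$, and since every port has type $0$ this also gives Restriction \ref{def:computon-3}. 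Restriction \ref{def:computon-4} holds because everything maps to $u$. Restriction \ref{def:computon-5} holds because $C^-\subseteq P_3\setminus t(O)$ and $q\in P_3\setminus s(I)$. Primitiveness (Definition \ref{def:primitive-computon}) follows since $|U|=1$, $P_3\cong I\sqcup O$, and $s,t$ are injective. So $\lambda_3$ is a glue whose inport set is exactly $C^-$.

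Finally, I form the span $\rho:=(\lambda_1+\lambda_2)\xleftarrow{\alpha_1}\lambda_0\xrightarrow{\alpha_2}\lambda_3$. Here $\lambda_0$ is the trivial computon with $P_0:=C^-$, every port typed $0$, and empty $U,I,O$. The map $\alpha_1$ is the inclusion of $C^-$ into $P_1\sqcup P_2$, and $\alpha_2$ is the bijection $C^-\to P_3^+$. Both $\alpha_{\underline{n}}$ and $\alpha_B$ are identities. I need $\lambda_0$ to be a computon, and this is the step I expect to be the main (though mild) obstacle. $\lambda_0$ is trivial, so Restriction \ref{def:computon-5} requires a type-$0$ port that is both an inport and an outport, which holds because every port of a trivial computon lies in $P_0^+\cap P_0^-$. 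The surjectivity conditions are vacuous since $U=\emptyset$.

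For the morphism conditions of Definition \ref{def:morphism}, the commuting squares are vacuous apart from typing, which is preserved. The condition $\vec i(\alpha)\cup\vec o(\alpha)\subseteq P_0^+\cup P_0^-=P_0$ holds trivially. Both maps are injective on every component, so they are monomorphisms. Moreover $\alpha_1(P_0)\subseteq(P_1\sqcup P_2)^-$ and $\alpha_2(P_0)=P_3^+$. Hence $\rho$ is sequentiable in the sense of Definition \ref{def:sequentiable-span} and identifies all control outports of $\lambda_1+\lambda_2$ with all inports of $\lambda_3$.

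By Theorem \ref{th:sequentiable-is-pushable} the span $\rho$ is pushable, so by Theorem \ref{th:pushable-pushout-cat} its pushout exists. By Definition \ref{def:sequential-computon} this pushout is $(\lambda_1+\lambda_2)\unrhd_\rho\lambda_3$ when $\lambda_1+\lambda_2$ has no data outports, and $(\lambda_1+\lambda_2)\rhd_\rho\lambda_3$ otherwise. In either case it is a sync computon by Definition \ref{def:sync-computon}, which proves the theorem.
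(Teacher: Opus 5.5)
Your proof is correct and follows the paper's route: an explicit glue on the control outports of $\lambda_1+\lambda_2$, a trivial all-control apex, two monomorphisms, a sequentiability check, then existence of the pushout. The only difference is the witness glue: yours has a single outport fed by one $\epsilon$-outflow that reads every inflow, like the glue in Fig.~\ref{fig:example-parallelising}b, whereas the paper takes $P_3\cong(C_1^-\sqcup C_2^-)\sqcup(C_1^-\sqcup C_2^-)$ with bijective $r_3$, so each $\epsilon$ echoes exactly one signal. One small correction: Lemma~\ref{prop:interface-preservation} only says outports are reflected ($\alpha^{-1}(P_2^-)\subseteq P_1^-$), not preserved by the coproduct injections, so it is your direct check that establishes $C^-=C_1^-\sqcup C_2^-\neq\emptyset$; like the validity of the apex $\lambda_0$, this is a point the paper leaves implicit.
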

\begin{proof}
Given objects $\lambda_1$ and $\lambda_2$ in a computon category over a set $\underline{n}$ of types and a set $B$ of computing devices, we define ${C_j^-:=\{p\in P_j^-\mid c_j(p)=0\}}$ for ${j\in\{1,2\}}$ and the coproduct ${(C_1^-\sqcup C_2^-)\sqcup (C_1^-\sqcup C_2^-)}$ with canonical injections ${\iota_1,\iota_2\colon C_1^-\sqcup C_2^-\to (C_1^-\sqcup C_2^-)\sqcup (C_1^-\sqcup C_2^-)}$. With this, we construct a glue $\lambda_3$: ${U_3=\{u\}}$, ${P_3\cong (C_1^-\sqcup C_2^-)\sqcup (C_1^-\sqcup C_2^-)}$, ${I_3\cong C_1^-\sqcup C_2^-}$ and ${O_3\cong C_1^-\sqcup C_2^-}$. If $\phi\colon P_3\to (C_1^-\sqcup C_2^-)\sqcup (C_1^-\sqcup C_2^-)$, ${\gamma\colon I_3\to C_1^-\sqcup C_2^-}$ and ${\zeta\colon O_3\to C_1^-\sqcup C_2^-}$ are the isomorphisms, we have: $s_3=\phi^{-1}\circ\iota_1\circ\gamma$, $t_3=\phi^{-1}\circ\iota_2\circ\zeta$, ${\sigma_3(o)=u=\tau_3(i)}$, ${c_3(p)=0}$, ${f_3(o)=\epsilon}$ and ${r_3=\zeta^{-1}\circ\gamma}$ for all ${i\in I_3}$, ${o\in O_3}$ and ${p\in P_3}$.

Since $\phi^{-1}$, $\gamma$ and $\zeta$ are isomorphisms and $\iota_1$ and $\iota_2$ are necessarily injective, $s_3$ and $t_3$ are injective too. The functions $\sigma_3$ and $\tau_3$ are surjective because they map flows to the unique ${u\in U_3}$. The function $r_3$ is surjective because $\gamma$ and $\zeta^{-1}$ are. Since ${|U_3|=1}$ and ${P_3\cong (C_1^-\sqcup C_2^-)\sqcup (C_1^-\sqcup C_2^-)\cong I_3 \sqcup O_3}$, $\lambda_3$ is primitive as per Definition \ref{def:primitive-computon}. It also is a glue because all its ports are zero-coloured through $c_3$. 

Now, consider the trivial computon $\lambda_4$ given by ${P_4\cong I_3}$ with ${c_4(p)=0}$ for all ${p\in P_4}$. Assuming ${\psi\colon P_4\to I_3}$ is the corresponding isomorphism, we construct the computon morphism ${\alpha_1\colon\lambda_4\to\lambda_1+\lambda_2}$ by taking ${\gamma\circ\psi}$ as the $P$-component, and the identity functions on $\underline{n}$ and $B$ as the $\underline{n}$- and $B$-components, respectively. Here, the function ${\gamma\circ\psi\colon P_4\to C_1^-\sqcup C_2^-}$ is well-defined because ${C_1^-\sqcup C_2^-\subseteq P_1\sqcup P_2}$. Having the inclusions ${\vec{i}(\alpha_1)\cup\vec{o}(\alpha_1)\subseteq P_4=P_4^+\cap P_4^-\subseteq P_4^+\cup P_4^-}$, $\alpha_1$ must be a valid computon morphism as per Definition \ref{def:morphism}. In fact, it is a monomorphism because all its components, including ${\gamma\circ\psi}$, are injective.

Now, construct ${\alpha_2\colon\lambda_4\to\lambda_3}$ by taking ${s_3\circ\psi}$ and the identities on ${\underline{n}}$ and ${B}$ as the corresponding $P$-, $\underline{n}$- and $B$-components. Checking that ${\alpha_2}$ satisfies Definition \ref{def:morphism} is analogous to $\alpha_1$, so $\alpha_2$ is also a computon morphism. Given that all the ${\alpha_2}$-components are injective, including ${s_3\circ\psi}$, ${\alpha_2}$ is also a monomorphism.

As Definition \ref{def:interface} entails ${\alpha_2(p)=s_3(\psi(p))\notin P_3^-}$ for all ${p\in P_4}$, we simply apply Proposition \ref{prop:primitive-interface} to deduce ${\alpha_2(p)\in P_3^+}$, i.e., ${\alpha_2(P_4)\subseteq P_3^+}$. Having the fact ${(\gamma\circ\psi)(P_4)\subseteq C_1^-\sqcup C_2^-\subseteq P_1^-\sqcup P_2^-}$ in addition and considering that $\alpha_1$ and $\alpha_2$ are mono, the span ${\rho:=(\lambda_1+\lambda_2)\xleftarrow{\alpha_1}\lambda_4\xrightarrow{\alpha_2}\lambda_3}$ must be sequentiable (see Definition \ref{def:sequentiable-span}). Therefore, ${(\lambda_1+\lambda_2)\rhd_{\rho}\lambda_3}$ or ${(\lambda_1+\lambda_2)\unrhd_{\rho}\lambda_3}$ can be formed. 
\end{proof}

Basically, the proof of Theorem \ref{th:always-sync} yields a 2-stage construction to form a sync composite out of computons $\lambda_1$ and $\lambda_2$. The idea is to first define ${\lambda_1+\lambda_2}$ (for parallel execution) and then use sequencing to connect all the control outports of ${\lambda_1+\lambda_2}$ with all the inports of a glue (which waits for the async's termination). Evidently, the apex of the sequentiable span must be a trivial computon with control ports only. 

As we have been discussing composition semantics only, it might not seem entirely obvious how a sync behaves. To elucidate this, let us consider an example.

\begin{example}
Figs. \ref{fig:example-parallelising}a and \ref{fig:example-parallelising}b use the same primitive computons as Fig. \ref{fig:example-sequencing} to respectively form ${\lambda_1+\lambda_2}$ and ${(\lambda_1+\lambda_2)\rhd_{\rho}\lambda}$, in order to perform binary product and binary addition in parallel. The only difference is that \ref{fig:example-parallelising}b awaits termination via the glue $\lambda$. Although the sequentiable span $\rho$ is not shown due to space constraints, it should be clear that it meets Definition \ref{def:sync-computon} as it identifies all the control outports of ${\lambda_1+\lambda_2}$ with all the inports of $\lambda$. In this case, the identification corresponds to partial sequencing because data ports are omitted. Consequently, there is an effect in which data ports traverse several composite layers.
\end{example}

\begin{figure}[H]
\centering
\subcaptionbox{${\lambda_1+\lambda_2}$.}[0.34\textwidth]
{
\begin{tikzpicture}[scale=0.76]
\computonComposite{0.1}{-0.3}{1.3}{2.8};
\primitive{0.5}{1.2}{0.5}{1}
\din{i1p}{0}{2.05}{$1$};
\din{i2p}{0}{1.7}{$1$};
\qin{i0p}{0}{1.35}{};
\dout{o1p}{1}{1.9}{$1$};
\qout{o0p}{1}{1.35}{};
\node at (0.75,1.9){\scriptsize $\times$};\draw[densely dotted] (0.5,2.05) -- (0.65,1.9);\draw[densely dotted] (0.5,1.7) -- (0.65,1.85);\draw[densely dotted] (0.87,1.9) -- (1,1.9);
\node at (0.75,1.35){\scriptsize $\epsilon$};\draw[densely dotted] (0.5,1.35) -- (0.65,1.35);\draw[densely dotted] (0.85,1.35) -- (1,1.35);
\node[opacity=0.4] at (0.75,2.35){\scriptsize $\lambda_1$};

\primitive{0.5}{0}{0.5}{1}
\qin{i0a}{0}{0.85}{};
\din{i1a}{0}{0.5}{$1$};
\din{i2a}{0}{0.1}{$2$};
\qout{o0a}{1}{0.85}{};
\dout{o1a}{1}{0.3}{$2$};
\node at (0.75,0.3){\scriptsize $+$};\draw[densely dotted] (0.5,0.5) -- (0.65,0.3);\draw[densely dotted] (0.5,0.1) -- (0.65,0.3);\draw[densely dotted] (0.87,0.3) -- (1,0.3);
\node at (0.75,0.85){\scriptsize $\epsilon$};\draw[densely dotted] (0.5,0.85) -- (0.65,0.85);\draw[densely dotted] (0.85,0.85) -- (1,0.85);
\node[opacity=0.4] at (0.75,-0.15){\scriptsize $\lambda_2$};
\end{tikzpicture}
}
\subcaptionbox{${(\lambda_1+\lambda_2)\rhd_{\rho}\lambda}$.}[0.5\textwidth]
{
\begin{tikzpicture}[scale=0.76]
\computonComposite{1.65}{-0.4}{2.7}{3};
\computonComposite{1.85}{-0.3}{0.95}{2.8};

\begin{scope}[xshift=1.5cm]
\primitive{0.5}{1.2}{0.5}{1}
\qout{o0p}{1}{1.4}{};
\node at (0.75,1.9){\scriptsize $\times$};\draw[densely dotted] (0.5,2.05) -- (0.65,1.85);\draw[densely dotted] (0.5,1.65) -- (0.65,1.85);\draw[densely dotted] (0.87,1.85) -- (1,1.85);
\node at (0.75,1.4){\scriptsize $\epsilon$};\draw[densely dotted] (0.5,1.4) -- (0.65,1.4);\draw[densely dotted] (0.85,1.4) -- (1,1.4);
\node[opacity=0.4] at (0.75,2.35){\scriptsize $\lambda_1$};

\primitive{0.5}{0}{0.5}{1}
\qout{o0a}{1}{0.85}{};
\node at (0.75,0.3){\scriptsize $+$};\draw[densely dotted] (0.5,0.5) -- (0.65,0.3);\draw[densely dotted] (0.5,0.1) -- (0.65,0.3);\draw[densely dotted] (0.87,0.3) -- (1,0.3);
\node at (0.75,0.85){\scriptsize $\epsilon$};\draw[densely dotted] (0.5,0.85) -- (0.65,0.85);\draw[densely dotted] (0.85,0.85) -- (1,0.85);
\node[opacity=0.4] at (0.75,-0.15){\scriptsize $\lambda_2$};

\dinplain{i1p}{0}{2.05}{$1$}{left};\flow{i1p}{{0.5,2.05}}{}{};
\dinplain{i2p}{0}{1.7}{$1$}{left};\flow{i2p}{{0.5,1.7}}{}{};
\qinplain{i0p}{0}{1.4}{};\flow{i0p}{{0.5,1.4}}{densely dashed}{};
\qinplain{i0a}{0}{0.85}{};\flow{i0a}{{0.5,0.85}}{densely dashed}{};
\dinplain{i1a}{0}{0.5}{$1$}{left};\flow{i1a}{{0.5,0.5}}{}{};
\dinplain{i2a}{0}{0.1}{$2$}{left};\flow{i2a}{{0.5,0.1}}{}{};
\end{scope}

\begin{scope}[xshift=3cm]
\qmatch{i0j}{0}{1.4}{};\flow{i0j}{{0.5,1.4}}{densely dashed}{};
\qmatch{i1j}{0}{0.85}{};\flow{i1j}{{0.5,0.85}}{densely dashed}{};

\primitive{0.5}{0.6}{0.5}{1}
\qout{o0j}{1}{1.1}{};

\node at (0.75,1.1){\scriptsize $\epsilon$};\draw[densely dotted] (0.5,0.85) -- (0.7,1.05);\draw[densely dotted] (0.5,1.4) -- (0.7,1.15);\draw[densely dotted] (0.81,1.1) -- (1,1.1);
\doutplain{o1p}{1}{1.85}{$1$}{right};\flow{{-0.5,1.85}}{o1p}{}{};
\doutplain{o1a}{1}{0.3}{$2$}{right};\flow{{-0.5,0.3}}{o1a}{}{};
\node[opacity=0.4] at (0.75,1.75){\scriptsize $\lambda$};
\end{scope}
\end{tikzpicture}
}
{
\begin{tikzpicture}
\begin{scope}
\draw[draw=black,fill=white] (0,0.4) rectangle ++(0.12,0.25);
\node[inner sep=0pt,minimum size=0pt,label=right:{\scriptsize Comp. unit}] at (0.2,0.5) {};
\draw[densely dashed] (2.1,0.5) to node[pos=0.5]{\arrowflow} (2.6,0.5);
\node[inner sep=0pt,minimum size=3pt,label=right:{\scriptsize Control flow}] at (2.5,0.5) {};
\node[draw=black,fill=white,inner sep=0pt,minimum size=3pt,label=right:{\scriptsize Control inport}] at (4.8,0.5) {};
\node[fill=black,inner sep=0pt,minimum size=3pt,label={right:{\scriptsize Control outport}}] at (7.3,0.5) {};
\node[draw,fill=white,fill fraction={black}{0.5},inner sep=0pt,minimum size=3pt,label=right:{\scriptsize Control inoutport}] at (9.8,0.5) {};
\end{scope}
\begin{scope}
\draw (0,0) to node[pos=0.5]{\arrowflow} (0.4,0);
\node[inner sep=0pt,minimum size=3pt,label=right:{\scriptsize Data flow}] at (0.3,0){};
\node[circle,draw=black,fill=white,inner sep=0pt,minimum size=3pt,label={right:{\scriptsize Data inport}}] at (2.1,0) {};
\node[circle,fill=black,inner sep=0pt,minimum size=3pt,label={right:{\scriptsize Data outport}}] at (4.2,0) {};
\node[circle,draw,fill=white,fill fraction={black}{0.5},inner sep=0pt,minimum size=3pt,label=right:{\scriptsize Data inoutport}] at (6.5,0) {};
\draw[draw=black,fill=black!2] (9.1,-0.05) rectangle ++(0.3,0.2);
\node[inner sep=0pt,minimum size=3pt,label=right:{\scriptsize Composite computon}] at (9.35,0){};
\end{scope}
\end{tikzpicture}
}
\vspace{-5pt}
\caption{Examples of asynchronous and synchronous parallelising.}
\label{fig:example-parallelising}
\end{figure}

\subsection{Branching Computons}

A branching computon embodies an essential structure for non-deterministic decision-making, allowing to choose a computon out of two alternatives. Such a class of composites can be either open or closed. The former is simply the pushout of a span of in-markers of the operands (see Definition \ref{def:branched-open}), whereas the latter is the colimit of a so-called \emph{b-diagram} which basically is a span of in-markers together with a span of out-markers (see Definition \ref{def:branched-closed}). In both cases, the in-markers are embedded into every operand inport so a branching structure can only be formed when inports fully match. In the case of open branching, the outports of the operands remain untouched because there are no out-markers; thus, contrasting with the full outport identification enforced by closed branching.

\begin{definition}[Open Branching Computon]\label{def:branched-open}
An open branching computon ${\lambda_2?_{\rho}\lambda_3}$ is the pushout of a span ${\rho:=\lambda_2\xleftarrow{\lambda_2^+}\lambda_0\xrightarrow{\lambda_3^+}\lambda_3}$.
\end{definition}

\begin{definition}[Closed Branching Computon]\label{def:branched-closed}
A b-diagram $\rho$ is a pair of spans, ${\lambda_2\xleftarrow{\lambda_2^+}\lambda_0\xrightarrow{\lambda_3^+}\lambda_3}$ and ${\lambda_2\xleftarrow{\lambda_2^-}\lambda_1\xrightarrow{\lambda_3^-}\lambda_3}$, where $\lambda_2$ and $\lambda_3$ are connected computons. A closed branching computon $\lambda_2??_{\rho}\lambda_3$ is the colimit of $\rho$, computed as ${\lambda_2+_{\lambda_0+\lambda_1}\lambda_3}$.
\end{definition}

\begin{example}
To elucidate Definitions \ref{def:branched-open} and \ref{def:branched-closed}, suppose that in addition to the computon $\lambda_3$ from Fig. \ref{fig:example-sequencing}, we have primitives $\lambda_4$ and $\lambda_5$ for computing the predecessor of a natural number ($P$) and the factorial function ($!$), respectively. With them, we can form the branching computons displayed in Fig. \ref{fig:example-branching}.

\begin{figure}[H]
\subcaptionbox{Constructing ${\lambda_3?_{\rho_1}\lambda_4}$.}[0.4\textwidth]
{
\begin{tikzpicture}
\draw[thick,-{Stealth[inset=0pt, length=3pt, angle'=90]},opacity=0.4] (0,1.7) to node[above]{\scriptsize $\lambda_3^+$} (0.6,1.9);
\draw[thick,-{Stealth[inset=0pt, length=3pt, angle'=90]},opacity=0.4] (0,0.8) to node[below]{\scriptsize $\lambda_4^+$} (0.6,0.5);
\draw[thick,-{Stealth[inset=0pt, length=3pt, angle'=90]},opacity=0.4] (2.8,2) -- (3.3,1.8);
\draw[thick,-{Stealth[inset=0pt, length=3pt, angle'=90]},opacity=0.4] (2.8,0.6) -- (3.3,0.8);

\begin{scope}[yshift=1cm]
\qmatch{0t}{0}{0.4}{};
\dmatch{1t}{0}{0}{$1$}{left};
\node[circle,fill=blue,inner sep=0pt,minimum size=7pt,opacity=0.2] at (0t) {};
\node[circle,fill=orange,inner sep=0pt,minimum size=7pt,opacity=0.2] at (1t) {};
\end{scope}

\begin{scope}[xshift=0.7cm,yshift=0.1cm]
\primitive{0.8}{1.2}{0.5}{1}
\node at (1.05,1.5){\scriptsize $S$};\draw[densely dotted] (0.8,1.5) -- (0.95,1.5);\draw[densely dotted] (1.17,1.5) -- (1.3,1.5);
\node at (1.05,2.05){\scriptsize $\epsilon$};\draw[densely dotted] (0.8,2.05) -- (0.95,2.05);\draw[densely dotted] (1.15,2.05) -- (1.3,2.05);
\qin{i0s}{0.3}{2.05}{}
\din{i1s}{0.3}{1.5}{$1$};
\qout{o0s}{1.3}{2.05}{};
\dout{o1s}{1.3}{1.5}{$1$};
\node[opacity=0.4] at (1.05,2.35){\scriptsize $\lambda_3$};
\node[circle,fill=blue,inner sep=0pt,minimum size=7pt,opacity=0.2] at (i0s) {};
\node[circle,fill=orange,inner sep=0pt,minimum size=7pt,opacity=0.2] at (i1s) {};
\end{scope}

\begin{scope}[xshift=0.7cm]
\primitive{0.8}{0}{0.5}{1}
\node at (1.05,0.3){\scriptsize $P$};\draw[densely dotted] (0.8,0.3) -- (0.95,0.3);\draw[densely dotted] (1.17,0.3) -- (1.3,0.3);
\node at (1.05,0.85){\scriptsize $\epsilon$};\draw[densely dotted] (0.8,0.85) -- (0.95,0.85);\draw[densely dotted] (1.15,0.85) -- (1.3,0.85);
\qin{i0f}{0.3}{0.85}{}
\din{i1f}{0.3}{0.3}{$1$};
\qout{o0f}{1.3}{0.85}{};
\dout{o1f}{1.3}{0.3}{$2$};
\node[opacity=0.4] at (1.05,-0.15){\scriptsize $\lambda_4$};
\node[circle,fill=blue,inner sep=0pt,minimum size=7pt,opacity=0.2] at (i0f) {};
\node[circle,fill=orange,inner sep=0pt,minimum size=7pt,opacity=0.2] at (i1f) {};
\end{scope}

\begin{scope}[xshift=3.3cm]
\computonComposite{0.15}{-0.2}{1.2}{2.6};
\primitive{0.5}{1.2}{0.5}{1}
\node at (0.75,1.5){\scriptsize $S$};\draw[densely dotted] (0.5,1.5) -- (0.65,1.5);\draw[densely dotted] (0.87,1.5) -- (1,1.5);
\node at (0.75,2.05){\scriptsize $\epsilon$};\draw[densely dotted] (0.5,2.05) -- (0.65,2.05);\draw[densely dotted] (0.85,2.05) -- (1,2.05);
\qout{o0s}{1}{2.05}{};
\dout{o1s}{1}{1.5}{$1$};

\primitive{0.5}{0}{0.5}{1}
\node at (0.75,0.3){\scriptsize $P$};\draw[densely dotted] (0.5,0.3) -- (0.65,0.3);\draw[densely dotted] (0.87,0.3) -- (1,0.3);
\node at (0.75,0.85){\scriptsize $\epsilon$};\draw[densely dotted] (0.5,0.85) -- (0.65,0.85);\draw[densely dotted] (0.85,0.85) -- (1,0.85);
\qout{o0f}{1}{0.85}{};
\dout{o1f}{1}{0.3}{$2$};

\qinplain{i0}{0}{1.4}{};\flowdiag{i0}{{0.5,2.05}}{dashed}{}{pos=0.5,rotate=45};\flowdiag{i0}{{0.5,0.85}}{dashed}{}{pos=0.7,rotate=315};	
\dinplain{i1}{0}{1}{$1$}{left};\flowdiag{i1}{{0.5,1.5}}{}{}{pos=0.7,rotate=45};\flowdiag{i1}{{0.5,0.3}}{}{}{pos=0.5,rotate=315};

\node[circle,fill=blue,inner sep=0pt,minimum size=7pt,opacity=0.2] at (i0) {};
\node[circle,fill=orange,inner sep=0pt,minimum size=7pt,opacity=0.2] at (i1) {};
\end{scope}
\end{tikzpicture}
}
\subcaptionbox{Constructing ${\lambda_3??_{\rho_2}\lambda_5}$.}[0.5\textwidth]
{
\begin{tikzpicture}
\draw[thick,-{Stealth[inset=0pt, length=3pt, angle'=90]},opacity=0.4] (0.2,2.2) to node[above]{\scriptsize $\lambda_3^+$} (0.6,2.2);
\draw[thick,-{Stealth[inset=0pt, length=3pt, angle'=90]},opacity=0.4] (0.2,2) to node[left]{\scriptsize $\lambda_5^+$} (1.1,1.1);
\draw[thick,-{Stealth[inset=0pt, length=3pt, angle'=90]},opacity=0.4] (0.2,0.6) to node[left]{\scriptsize $\lambda_3^-$} (1,1.5);
\draw[thick,-{Stealth[inset=0pt, length=3pt, angle'=90]},opacity=0.4] (0.2,0.4) to node[below]{\scriptsize $\lambda_5^-$} (0.6,0.4);
\draw[decorate, decoration={snake, amplitude=0.8mm, segment length=4mm}, ->,opacity=0.4] (2,1.2) to node[above]{\scalebox{0.9}{\tiny ${\lambda_3+_{\lambda_0+\lambda_1}\lambda_5}$}} (3.7,1.2);

\begin{scope}[yshift=2cm]
\node[opacity=0.4] at (-0.4,0.3){\scriptsize $\lambda_0$};
\qmatch{0t}{0}{0.4}{};
\dmatch{1t}{0}{0}{$1$}{left};
\node[circle,fill=blue,inner sep=0pt,minimum size=7pt,opacity=0.2] at (0t) {};
\node[circle,fill=orange,inner sep=0pt,minimum size=7pt,opacity=0.2] at (1t) {};
\end{scope}
\begin{scope}[yshift=0.3cm]
\node[opacity=0.4] at (-0.4,0.3){\scriptsize $\lambda_1$};
\qmatch{0t1}{0}{0.4}{};
\dmatch{1t1}{0}{0}{$1$}{left};
\node[circle,fill=green,inner sep=0pt,minimum size=7pt,opacity=0.2] at (0t1) {};
\node[circle,fill=black,inner sep=0pt,minimum size=7pt,opacity=0.2] at (1t1) {};
\end{scope}

\begin{scope}[xshift=0.65cm,yshift=0.3cm]
\primitive{0.8}{1.2}{0.5}{1}
\node at (1.05,1.5){\scriptsize $S$};\draw[densely dotted] (0.8,1.5) -- (0.95,1.5);\draw[densely dotted] (1.17,1.5) -- (1.3,1.5);
\node at (1.05,2.05){\scriptsize $\epsilon$};\draw[densely dotted] (0.8,2.05) -- (0.95,2.05);\draw[densely dotted] (1.15,2.05) -- (1.3,2.05);
\qin{i0s}{0.3}{2.05}{}
\din{i1s}{0.3}{1.5}{$1$};
\qout{o0s}{1.3}{2.05}{};
\dout{o1s}{1.3}{1.5}{$1$};
\node[opacity=0.4] at (1.05,2.35){\scriptsize $\lambda_3$};
\node[circle,fill=blue,inner sep=0pt,minimum size=7pt,opacity=0.2] at (i0s) {};
\node[circle,fill=orange,inner sep=0pt,minimum size=7pt,opacity=0.2] at (i1s) {};
\node[circle,fill=green,inner sep=0pt,minimum size=7pt,opacity=0.2] at (o0s) {};
\node[circle,fill=black,inner sep=0pt,minimum size=7pt,opacity=0.2] at (o1s) {};
\end{scope}

\begin{scope}[xshift=0.65cm]
\primitive{0.8}{0}{0.5}{1}
\node at (1.05,0.3){\scriptsize $!$};\draw[densely dotted] (0.8,0.3) -- (0.95,0.3);\draw[densely dotted] (1.17,0.3) -- (1.3,0.3);
\node at (1.05,0.85){\scriptsize $\epsilon$};\draw[densely dotted] (0.8,0.85) -- (0.95,0.85);\draw[densely dotted] (1.15,0.85) -- (1.3,0.85);
\qin{i0f}{0.3}{0.85}{}
\din{i1f}{0.3}{0.3}{$1$};
\qout{o0f}{1.3}{0.85}{};
\dout{o1f}{1.3}{0.3}{$1$};
\node[opacity=0.4] at (1.05,-0.15){\scriptsize $\lambda_5$};
\node[circle,fill=blue,inner sep=0pt,minimum size=7pt,opacity=0.2] at (i0f) {};
\node[circle,fill=orange,inner sep=0pt,minimum size=7pt,opacity=0.2] at (i1f) {};
\node[circle,fill=green,inner sep=0pt,minimum size=7pt,opacity=0.2] at (o0f) {};
\node[circle,fill=black,inner sep=0pt,minimum size=7pt,opacity=0.2] at (o1f) {};
\end{scope}

\begin{scope}[xshift=3.9cm]
\computonComposite{0.2}{-0.2}{1.1}{2.6};

\primitive{0.5}{1.2}{0.5}{1}
\node at (0.75,1.5){\scriptsize $S$};\draw[densely dotted] (0.5,1.5) -- (0.65,1.5);\draw[densely dotted] (0.87,1.5) -- (1,1.5);
\node at (0.75,2.05){\scriptsize $\epsilon$};\draw[densely dotted] (0.5,2.05) -- (0.65,2.05);\draw[densely dotted] (0.85,2.05) -- (1,2.05);

\primitive{0.5}{0}{0.5}{1}
\node at (0.75,0.3){\scriptsize $!$};\draw[densely dotted] (0.5,0.3) -- (0.65,0.3);\draw[densely dotted] (0.87,0.3) -- (1,0.3);
\node at (0.75,0.85){\scriptsize $\epsilon$};\draw[densely dotted] (0.5,0.85) -- (0.65,0.85);\draw[densely dotted] (0.85,0.85) -- (1,0.85);

\qinplain{i0}{0}{1.4}{};\flowdiag{i0}{{0.5,2.05}}{dashed}{}{pos=0.5,rotate=45};\flowdiag{i0}{{0.5,0.85}}{dashed}{}{pos=0.7,rotate=315};	
\dinplain{i1}{0}{1}{$1$}{left};\flowdiag{i1}{{0.5,1.5}}{}{}{pos=0.7,rotate=45};\flowdiag{i1}{{0.5,0.3}}{}{}{pos=0.5,rotate=315};
\qoutplain{o0}{1}{1.4}{};\flowdiag{{1,2.05}}{o0}{dashed}{}{pos=0.5,rotate=315};\flowdiag{{1,0.85}}{o0}{dashed}{}{pos=0.3,rotate=45};	
\doutplain{o1}{1}{1}{$1$}{right};\flowdiag{{1,1.5}}{o1}{}{}{pos=0.3,rotate=315};\flowdiag{{1,0.3}}{o1}{}{}{pos=0.6,rotate=45};

\node[circle,fill=blue,inner sep=0pt,minimum size=7pt,opacity=0.2] at (i0) {};
\node[circle,fill=orange,inner sep=0pt,minimum size=7pt,opacity=0.2] at (i1) {};
\node[circle,fill=green,inner sep=0pt,minimum size=7pt,opacity=0.2] at (o0) {};
\node[circle,fill=black,inner sep=0pt,minimum size=7pt,opacity=0.2] at (o1) {};	
\end{scope}
\end{tikzpicture}
}
{
\begin{center}
\begin{tikzpicture}
\begin{scope}
\draw[draw=black,fill=white] (0,0.4) rectangle ++(0.12,0.25);
\node[inner sep=0pt,minimum size=0pt,label=right:{\scriptsize Comp. unit}] at (0.2,0.5) {};
\draw[densely dashed] (2.1,0.5) to node[pos=0.5]{\arrowflow} (2.6,0.5);
\node[inner sep=0pt,minimum size=3pt,label=right:{\scriptsize Control flow}] at (2.5,0.5) {};
\node[draw=black,fill=white,inner sep=0pt,minimum size=3pt,label=right:{\scriptsize Control inport}] at (4.8,0.5) {};
\node[fill=black,inner sep=0pt,minimum size=3pt,label={right:{\scriptsize Control outport}}] at (7.3,0.5) {};
\node[draw,fill=white,fill fraction={black}{0.5},inner sep=0pt,minimum size=3pt,label=right:{\scriptsize Control inoutport}] at (9.8,0.5) {};
\end{scope}

\begin{scope}
\draw (0,0) to node[pos=0.5]{\arrowflow} (0.4,0);
\node[inner sep=0pt,minimum size=3pt,label=right:{\scriptsize Data flow}] at (0.3,0){};
\node[circle,draw=black,fill=white,inner sep=0pt,minimum size=3pt,label={right:{\scriptsize Data inport}}] at (2.1,0) {};
\node[circle,fill=black,inner sep=0pt,minimum size=3pt,label={right:{\scriptsize Data outport}}] at (4.2,0) {};
\node[circle,draw,fill=white,fill fraction={black}{0.5},inner sep=0pt,minimum size=3pt,label=right:{\scriptsize Data inoutport}] at (6.5,0) {};
\draw[draw=black,fill=black!2] (9.1,-0.05) rectangle ++(0.3,0.2);
\node[inner sep=0pt,minimum size=3pt,label=right:{\scriptsize Composite computon}] at (9.35,0){};
\end{scope}
\end{tikzpicture}
\end{center}
}
\vspace{-5pt}
\caption{Open and closed branching. The wavy arrow is not a morphism, but it expresses the operation ${\lambda_3+_{\lambda_0+\lambda_1}\lambda_5}$ to form the closed branching computon ${\lambda_3??_{\rho_2}\lambda_5}$ from the b-diagram $\rho_2$ shown on the left of (b). The open branching computon ${\lambda_3?_{\rho_1}\lambda_4}$ is the pushout of the span $\rho_1$ of in-markers ${\lambda_3^+}$ and ${\lambda_4^+}$.}
\label{fig:example-branching}
\end{figure}

Fig. \ref{fig:example-branching}a shows an example of an open branching composite able to choose between the successor and predecessor primitives in a non-deterministic manner, whereas Fig. \ref{fig:example-branching}b displays a closed branching composite for choosing either the successor or the factorial function. Fig. \ref{fig:example-branching}a particularly unfolds the flexibility unleashed by the operator described in Definition \ref{def:branched-open}, i.e., the operand outports do not have to fully match as in the case of closed branching. 
\end{example}

By Theorems \ref{th:open} and \ref{th:closed-assoc-comm}, both operators satisfy associativity and commutativity, but only open branching has left- and right-identities (see Theorems \ref{th:open2} and \ref{th:closed-identity}).

\begin{theorem}\label{th:open}
Open branching is associative and commutative up to isomorphism.
\end{theorem}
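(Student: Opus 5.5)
Commutativity is the easy half, and I would settle it first. Given a span $\rho:=\lambda_2\xleftarrow{\lambda_2^+}\lambda_0\xrightarrow{\lambda_3^+}\lambda_3$ of in-markers, the reversed span $\rho':=\lambda_3\xleftarrow{\lambda_3^+}\lambda_0\xrightarrow{\lambda_2^+}\lambda_2$ is again a span of in-markers. By Definition \ref{def:pushout}, the pushout of $\rho'$ is built componentwise from the finite-set pushouts $X_3\sqcup_{X_0}X_2$ for $X\in\{U,P,I,O\}$. Each of these is canonically isomorphic to $X_2\sqcup_{X_0}X_3$, and the structure maps $\sigma,t,\tau,s,c,r,f$ are defined sourcewise, so the swap isomorphisms commute with them. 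This gives a computon isomorphism $\lambda_2?_{\rho}\lambda_3\cong\lambda_3?_{\rho'}\lambda_2$. Alternatively, one can note that both objects satisfy the same universal property, since a cocone on $\rho$ is the same thing as a cocone on $\rho'$, and pushouts are unique up to isomorphism.

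For associativity I would show that $(\lambda_1?_{\rho_1}\lambda_2)?_{\rho_3}\lambda_3$ and $\lambda_1?_{\rho_2}(\lambda_2?_{\rho_4}\lambda_3)$ are both colimits of the same diagram. That diagram is the ``fan'' consisting of three in-markers $\lambda_1^+,\lambda_2^+,\lambda_3^+$ out of a common trivial apex $\lambda_0$. I will proceed in three steps.

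\textbf{Step 1: the outer span must be made of in-markers.} For $(\lambda_1?_{\rho_1}\lambda_2)?_{\rho_3}\lambda_3$ to be defined, $\rho_3$ must consist of in-markers of $\mu:=\lambda_1?_{\rho_1}\lambda_2$ and of $\lambda_3$. So I first need to show that the composite $\beta_1\circ\lambda_1^+\colon\lambda_0\to\mu$ is an in-marker of $\mu$. Here $\beta_1$ is the pushout-induced morphism. There are three things to check.
\begin{enumerate*}[label=(\alphabetic*)]
\item It is mono. This follows because $\lambda_0$ is trivial and the pushout of a mono along a mono of finite sets is mono, as in the analogue of Lemma \ref{prop:sequencing-mono}.
\item Its image consists of inports. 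By Lemma \ref{prop:interface-preservation} applied to both $\beta_1$ and $\beta_2$, any inport of $\mu$ has preimages that are inports.
\item Its image covers all of $P_\mu^+$. A port of $\mu$ comes from $P_1$ or from $P_2$, and it receives an outflow if its preimage does. Hence every inport of $\mu$ comes from $P_1^+$ or $P_2^+$, and these are identified through $\lambda_0$ with $\beta_1(\lambda_1^+(P_0))$.
\end{enumerate*}

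\textbf{Step 2: the in-markers can be replaced by a common apex.} The apexes of $\rho_1$ and $\rho_3$ are trivial computons whose port sets are in bijection with the inports. By Step 1 they are isomorphic, and they carry the same typing. So I can replace all the in-markers by a single apex $\lambda_0$ and obtain the fan diagram.

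\textbf{Step 3: pasting.} I would apply the pasting law for pushouts to conclude that both bracketings are colimits of the fan. The pushouts involved exist by Theorem \ref{th:pushable-pushout-cat}, since markers give pushable spans. Uniqueness of colimits then gives the isomorphism. If a direct check is preferred instead, it reduces componentwise to associativity of pushouts of finite sets, $(X_1\sqcup_{X_0}X_2)\sqcup_{X_0}X_3\cong X_1\sqcup_{X_0}(X_2\sqcup_{X_0}X_3)$, together with sourcewise definition of the structure maps (Remark \ref{remark:union}).

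I expect Step 1 to be the main obstacle. Showing that the image of the induced composite is \emph{exactly} $P_\mu^+$ requires a careful argument: a port of $\mu$ with no incoming outflow must come from a port with no incoming outflow in $\lambda_1$ or $\lambda_2$. This uses the fact that $t$ in the pushout is defined sourcewise and that $O_\mu=O_1\sqcup O_2$, since $O_0=\emptyset$. Once Step 1 is in place, the rest is the standard categorical fact that iterated pushouts over a common object are colimits of the fan.
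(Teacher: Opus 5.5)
The paper states Theorem~\ref{th:open} without proof, so there is no argument of its own to compare with. Your outline is the natural one: commutativity from the symmetry of the pushout, and associativity by exhibiting both bracketings as colimits of one fan of in-markers, with the in-marker property of $\beta_1\circ\lambda_1^+$ as the crux. The commutativity part and Step~3 are fine, but two steps need repair.

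\textbf{Step~1(b).} Here you use Lemma~\ref{prop:interface-preservation} in the wrong direction. The lemma says $\beta_j^{-1}(P_\mu^+)\subseteq P_j^+$. That yields $P_\mu^+\subseteq\beta_1\lambda_1^+(P_0)$, which is your item (c), not the inclusion $\beta_1\lambda_1^+(P_0)\subseteq P_\mu^+$ that (b) asserts. Your closing paragraph also addresses only the (c) direction, so (b) is never shown. It is easy to supply:
\begin{enumerate*}[label=(\roman*)]
\item Since $O_0=\emptyset$, every outflow of $\mu$ has the form $\beta_j(o)$ with $j\in\{1,2\}$ and $o\in O_j$, and $t_\mu(\beta_j(o))=\beta_j(t_j(o))$.
\item Suppose this equals $\beta_1\lambda_1^+(x)=\beta_j\lambda_j^+(x)$. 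Injectivity of the port component of $\beta_j$ gives $t_j(o)=\lambda_j^+(x)$; step (a) supplies this injectivity, and you need it for both $j$.
\item That contradicts $\lambda_j^+(x)\in P_j^+$.
\end{enumerate*}

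\textbf{Step~2.} This is the more substantive issue. The isomorphism of apexes only lets you rewrite each bracketing separately as the colimit of some fan. It does not make the identifications $P_1^+\leftrightarrow P_2^+\leftrightarrow P_3^+$ induced by $(\rho_1,\rho_3)$ and by $(\rho_4,\rho_2)$ coincide, and for independently chosen spans the statement is false. For a counterexample, let $\lambda_1=\lambda_2=\lambda_3=\pi_X+\pi_Y$, where $\pi_X,\pi_Y$ are primitives with one control inport and one control outport whose outflows carry devices $X\neq Y$ in $B$.
\begin{enumerate*}[label=(\roman*)]
\item With the straight identifications, the left bracketing has an inport feeding three $X$-units.
\item If $\rho_4$ crosses the two control inports, then whatever $\rho_2$ is, every inport of the right bracketing feeds units carrying both $X$ and $Y$.
\item The $B$-component of an isomorphism is a bijection, so the two composites are not isomorphic.
\end{enumerate*}

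Compatibility must therefore be part of the statement rather than a derived step. Fix in-markers $\lambda_1^+,\lambda_2^+,\lambda_3^+$ out of one trivial $\lambda_0$, and take
\begin{enumerate*}[label=(\roman*)]
\item $\rho_1=(\lambda_1^+,\lambda_2^+)$,
\item $\rho_3=(\beta_1\lambda_1^+,\lambda_3^+)$,
\item $\rho_4=(\lambda_2^+,\lambda_3^+)$,
\item $\rho_2=(\lambda_1^+,\beta_1'\lambda_2^+)$,
\end{enumerate*}
where $\beta_1'\colon\lambda_2\to\lambda_2?_{\rho_4}\lambda_3$ is the induced morphism. Equivalently, for all $\rho_1,\rho_3$ there exist $\rho_4,\rho_2$ giving isomorphic results. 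Under that reading, your pasting argument completes the proof.
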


\begin{theorem}\label{th:open2}
If ${\rho}$ is the span ${\lambda_2\xleftarrow{\lambda_2^+}\lambda_0\xrightarrow{\lambda_3^+}\lambda_3}$ of in-markers with ${\lambda_0\cong\lambda_2}$, then ${\lambda_2?_{\rho}\lambda_3\cong\lambda_2+_{\lambda_0}\lambda_3\cong\lambda_3}$.
\end{theorem}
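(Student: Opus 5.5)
The plan is to show that when $\lambda_0\cong\lambda_2$, the in-marker $\lambda_2^+$ is an isomorphism. Once that holds, the pushout along an isomorphism reduces to the other leg, so the composite is isomorphic to $\lambda_3$.

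\textbf{Step 1: $\lambda_2$ is trivial with $P_2=P_2^+$.} By Definition \ref{def:markers}, $\lambda_0$ is trivial, so $U_0=I_0=O_0=\emptyset$. Since $\lambda_0\cong\lambda_2$, the computon $\lambda_2$ is trivial as well: $U_2=I_2=O_2=\emptyset$. Hence $t_2(O_2)=\emptyset$, and Definition \ref{def:interface} gives $P_2^+=P_2$.

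\textbf{Step 2: $\lambda_2^+$ is an isomorphism.} The in-marker $\lambda_2^+$ is a monomorphism with $\lambda_2^+(P_0)=P_2^+=P_2$. Its $P$-component is therefore injective and surjective, hence a bijection. Its $U$-, $I$- and $O$-components are the empty maps between empty sets, which are trivially bijective. The components $\alpha_{\underline{n}}$ and $\alpha_B$ are injective self-maps of the finite sets $\underline{n}$ and $B$, so they are bijections.

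\textbf{Step 3: the inverse is a computon morphism.} The componentwise inverses satisfy the commuting squares of Definition \ref{def:morphism} automatically, since inverting every vertical arrow of a commuting square preserves commutativity. For the interface condition, $\lambda_2$ has no flows, so $\bullet p=p\bullet=\emptyset$ for every $p\in P_2$. Consequently $\vec{i}$ and $\vec{o}$ of the inverse are empty, as $\lambda_0$ also has no flows. This step is the main technical check, and it is where the trivial structure of both computons is used.

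\textbf{Step 4: conclude via the pushout.} It is a standard fact that in any category, the pushout of a span one of whose legs is an isomorphism exists. It is given by the other object with cospan $(\lambda_3^+\circ(\lambda_2^+)^{-1},\ \mathrm{id}_{\lambda_3})$, and the universal property is verified directly. One can also argue concretely at the level of finite sets using Definition \ref{def:pushout}. There, $P_2\sqcup_{P_0}P_3\cong P_3$ because every element of $P_2$ is identified with its image in $P_3$ through $\lambda_0$. The same identification gives $U_2\sqcup_{U_0}U_3\cong U_3$, and likewise for $I$ and $O$, since the $\lambda_2$-parts are empty. The induced structure maps coincide with those of $\lambda_3$ by Remark \ref{remark:union}.

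Theorem \ref{th:pushable-pushout-cat} guarantees that the pushout exists, because the span consists of markers. Uniqueness of pushouts up to isomorphism then yields $\lambda_2?_\rho\lambda_3\cong\lambda_2+_{\lambda_0}\lambda_3\cong\lambda_3$. The first isomorphism is just Definition \ref{def:branched-open}.
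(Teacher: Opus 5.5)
Your proof is correct, but it argues differently from the paper. The paper stays at the level of the concrete set-based pushout of Definition~\ref{def:pushout}. Having noted that $\lambda_2$ is trivial, it counts the components of $\lambda_2?_\rho\lambda_3$: $|U_3|$ units, $|I_3|$ inflows, $|O_3|$ outflows and $|P_2|+|P_3|-|P_0|=|P_3|$ ports (using $|P_0|=|P_2|$). It then concludes the isomorphism from these matching cardinalities together with the shared $\underline{n}$ and $B$. You instead show that the in-marker $\lambda_2^+$ is itself an isomorphism in the computon category. That means checking every component is bijective, and that the componentwise inverse satisfies the squares and the interface condition of Definition~\ref{def:morphism}. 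You then use the general fact that a pushout along an isomorphism is the opposite object, combined with the universal property of the constructed pushout.

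The paper's route is shorter and mirrors how pushouts are actually computed and implemented. However, equal cardinalities of $U,P,I,O$ do not by themselves give a computon isomorphism: one still has to see that $s,t,\sigma,\tau,c,r,f$ are transported, which the paper leaves implicit. Your route produces an explicit structure-preserving isomorphism, with cospan legs $\lambda_3^+\circ(\lambda_2^+)^{-1}$ and $\mathrm{id}_{\lambda_3}$, and does not depend on the details of the construction. The price is the verification in your Step~3, plus the use of the fact that monomorphisms have injective components; the paper itself relies on that fact elsewhere, e.g.\ in the proof of Theorem~\ref{th:sequencing-identity}. Your concrete alternative at the end of Step~4 is essentially the paper's computation, with Remark~\ref{remark:union} supplying the structure maps that the paper's counting argument omits.
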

\begin{proof}
As $\lambda_2$ must evidently be a trivial computon because ${\lambda_0\cong\lambda_2}$, we use Definition \ref{def:pushout} to deduce that ${\lambda_2?_{\rho}\lambda_3}$ has ${|U_3|}$ units, ${|P_2|+|P_3|-|P_0|}$ ports, ${|I_3|}$ inflows and ${|O_3|}$ outflows. Particularly, ${\lambda_2?_{\rho}\lambda_3}$ has ${|P_3|}$ ports because ${|P_0|=|P_2|}$ by ${\lambda_0\cong\lambda_2}$. As the sets of types and devices is the same for every computon in the same category, we conclude ${\lambda_2?_{\rho}\lambda_3\cong\lambda_2+_{\lambda_0}\lambda_3\cong\lambda_3}$.
\end{proof}

\begin{theorem}[\cite{arellanescompositional2026}]\label{th:closed-assoc-comm}
Closed branching is associative and commutative up to isomorphism.
\end{theorem}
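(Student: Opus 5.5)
The plan is to reduce closed branching to a single pushout of in/out-marker spans and then transfer the standard associativity and commutativity of colimits to it. Following Definition \ref{def:branched-closed}, $\lambda_2??_{\rho}\lambda_3$ is computed as $\lambda_2+_{\lambda_0+\lambda_1}\lambda_3$. The span here is $\lambda_2\xleftarrow{[\lambda_2^+,\lambda_2^-]}\lambda_0+\lambda_1\xrightarrow{[\lambda_3^+,\lambda_3^-]}\lambda_3$, where each leg is the copairing of the in-marker and the out-marker. First I would check that this span is pushable (Definition \ref{def:pushable-span}), so that Theorem \ref{th:pushable-pushout-cat} guarantees the colimit exists. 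Since all apices are trivial computons and the marker images lie in $P_j^+\cup P_j^-$, the sets $\vec{i}$ and $\vec{o}$ of each leg are sent into interface ports, and pushability follows as in Theorem \ref{th:pushable-pushout-cat}.

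\emph{Commutativity.} This is immediate from the symmetry of the colimit. Swapping the two spans of $\rho$ yields a b-diagram $\rho'$ whose colimit satisfies the same universal property, so $\lambda_2??_{\rho}\lambda_3\cong\lambda_3??_{\rho'}\lambda_2$. Concretely, each component set $X_2\sqcup_{X_0+X_1}X_3$ is canonically isomorphic to $X_3\sqcup_{X_0+X_1}X_2$, compatibly with $\sigma,t,\tau,s,c,r,f$, because these are computed sourcewise (Remark \ref{remark:union}).

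\emph{Associativity.} Consider $(\lambda_2??_{\rho}\lambda_3)??_{\rho''}\lambda_4$. The key step is to show that the composite $\lambda_{23}:=\lambda_2??_\rho\lambda_3$ is again a connected computon. I would also show that its interface is exactly the image of the common inports and outports, namely $P_{23}^+=\beta(P_2^+)$ and $P_{23}^-=\beta(P_2^-)$; this makes in- and out-markers of $\lambda_{23}$ available from the trivials $\lambda_0,\lambda_1$ themselves. Both facts rest on two observations. Gluing only identifies inports with inports and outports with outports, and Lemma \ref{prop:interface-preservation} gives that images of interior ports stay interior. Connectivity, in turn, is inherited along the monomorphic pushout injections, since paths in $\lambda_2$ or $\lambda_3$ map to paths ending in outports.

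With these facts in hand, both bracketings become colimits of the same diagram: three computons $\lambda_2,\lambda_3,\lambda_4$, each receiving in-markers from a common trivial and out-markers from another common trivial. The ``pasting'' of pushouts (colimits commute with colimits) then gives the isomorphism $(\lambda_2??\lambda_3)??\lambda_4\cong\lambda_2??(\lambda_3??\lambda_4)$. At the level of finite sets, this is the familiar associativity of iterated amalgamated sums, checked componentwise.

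The main obstacle is the interface and connectivity claim for the intermediate composite. If the marker apex for the second step were not isomorphic to the original trivials, the iterated diagram would not match the flat one. I would therefore argue it carefully: ports of $\lambda_{23}$ are classes in $P_2\sqcup_{P_0+P_1}P_3$. A class is an inport iff no outflow targets it, and since markers cover all of $P_j^{\pm}$, the classes of inports are exactly the images of $P_0$.
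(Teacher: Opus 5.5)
The paper does not prove this theorem; it quotes it from \cite{arellanescompositional2026}, so there is no in-text argument to compare yours against. Judged on its own, your route is sound. You reduce $\lambda_2??_\rho\lambda_3$ to one pushout over $\lambda_0+\lambda_1$ along the copairings of the markers. Commutativity then comes from the symmetry of that pushout. For associativity you show that the intermediate composite is connected and that its interface is exactly the image of $P_0$ and $P_1$, so both bracketings are colimits of one flat diagram. One wording slip: for commutativity you swap the two legs of each span, not the in-span with the out-span. The latter would not give a b-diagram.

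Three points need to be made explicit.

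\textbf{First}, ``gluing only identifies inports with inports and outports with outports'' fails for arbitrary operands. If $p\in P_2^+\cap P_2^-$, say $p=\lambda_2^+(x)=\lambda_2^-(y)$, then its class also contains $\lambda_3^-(y)$. That port may be the target of an outflow of $\lambda_3$, so the class of an inport need not be an inport of the composite. Your final paragraph's ``classes of inports are exactly the images of $P_0$'' would then be false. Connectivity rules this out. A port in $P^+\cap P^-$ lies outside $s(I)$, so no path of length $\geq 2$ starts at it, contradicting Definition~\ref{def:connected}. Hence $P_j^+\cap P_j^-=\emptyset$ and the copairings $[\lambda_j^+,\lambda_j^-]$ are injective on ports. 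Every glued class is then a pair $\{\lambda_2^{\pm}(z),\lambda_3^{\pm}(z)\}$, the induced $\beta_1,\beta_2$ are mono, and your interface and path-transport arguments go through.

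\textbf{Second}, the outer b-diagram $\rho''$ comes with its own trivial apices $\lambda_0',\lambda_1'$ and its own bijections. You need the bijection $\theta\colon P_0'\to P_0$ defined by $\lambda_{23}^+=\beta_1\circ\lambda_2^+\circ\theta$. It is type-preserving because the $\underline{n}$-components are inclusions. You then build the right-hand b-diagrams from $\lambda_4^{\pm}\circ\theta^{-1}$, and likewise for the out-side. So what you actually prove is: for every choice of left bracketing there exist b-diagrams giving an isomorphic right bracketing. This is the appropriate reading of ``associative up to isomorphism'' here, but it should be stated.

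\textbf{Third}, the computon category only has pushouts of pushable spans, so do not appeal to a general ``colimits commute with colimits'' principle. Instead, check directly that each iterated pushout is a colimit of the flat diagram $\lambda_0,\lambda_1\to\lambda_2,\lambda_3,\lambda_4$. This takes two applications of the pushout universal property, using only the pushouts you have shown to exist. Uniqueness of colimits then gives the isomorphism.
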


\begin{theorem}\label{th:closed-identity}
Closed branching does not satisfy the identity law.
\end{theorem}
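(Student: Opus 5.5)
We argue by contradiction, mirroring the proof of Theorem \ref{th:async-identity}. Suppose some computon $\lambda$ is a left-identity for closed branching: for every admissible $\lambda_3$ there is a b-diagram $\rho$ with $\lambda??_{\rho}\lambda_3\cong\lambda_3$. By Definition \ref{def:branched-closed}, both operands of a b-diagram must be connected computons. So $\lambda$ is connected, and Proposition \ref{prop:connected-units} gives $U\neq\emptyset$. The right-identity case is symmetric.

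The core step is counting units in the colimit $\lambda+_{\lambda_0+\lambda_1}\lambda_3$. Here $\lambda_0$ and $\lambda_1$ are trivial, being the domains of markers (Definition \ref{def:markers}), so $\lambda_0+\lambda_1$ is trivial by Definition \ref{def:coproduct}, with $U_0\sqcup U_1=\emptyset$. By Definition \ref{def:pushout}, the unit set of the colimit is $U\sqcup_{\emptyset}U_3\cong U\sqcup U_3$, which has $|U|+|U_3|$ elements. If $\lambda??_{\rho}\lambda_3\cong\lambda_3$, the isomorphism forces $|U|+|U_3|=|U_3|$, hence $U=\emptyset$. This contradicts connectedness of $\lambda$.

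The expected obstacle is making sure the colimit is genuinely computed as the pushout over the trivial apex $\lambda_0+\lambda_1$. That is, the morphisms $\lambda_0+\lambda_1\to\lambda$ and $\lambda_0+\lambda_1\to\lambda_3$ are the copairings of the in- and out-markers, and the unit component of the pushout is the set-level pushout over an empty set. I would take this directly from Definition \ref{def:branched-closed} together with the componentwise description in Definition \ref{def:pushout}. The copairings need not be mono, since in- and out-markers may overlap on inoutports, but this only affects ports and does not matter for units.

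Finally, to be safe, I would note that the claim needs at least one connected $\lambda_3$ for which a b-diagram with $\lambda$ exists. Otherwise the identity law already fails, either vacuously or because the operation is undefined. The unit-counting argument applies to any such pair, so no identity can exist.
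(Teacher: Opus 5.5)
Your proof is correct and follows essentially the same route as the paper: both count units in $\lambda+_{\lambda_0+\lambda_1}\lambda_3$, using that the apex $\lambda_0+\lambda_1$ (a coproduct of trivial computons) has no units, and that connectedness forces $|U|\geq 1$ via Proposition \ref{prop:connected-units}. The only difference is that the paper splits into two cases (whether or not some unit of $\lambda$ is identified with a unit of $\lambda_3$), whereas your observation that $U\sqcup_{\emptyset}U_3\cong U\sqcup U_3$ settles both cases at once.
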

\begin{proof}
Suppose for contradiction that a computon $\lambda_2$ is the left-identity of closed branching so $\lambda_2??_{\rho}\lambda_3\cong\lambda_3$ for some computon $\lambda_3$ and some b-diagram $\rho$ formed by spans ${\lambda_2\xleftarrow{\lambda_2^+}\lambda_0\xrightarrow{\lambda_3^+}\lambda_3}$ and ${\lambda_2\xleftarrow{\lambda_2^-}\lambda_1\xrightarrow{\lambda_3^-}\lambda_3}$. As $\lambda_2$ and $\lambda_3$ must be connected as per Definition \ref{def:branched-closed}, Proposition \ref{prop:connected-units} says that units ${u_2\in U_2}$ and ${u_3\in U_3}$ must exist. If we assume that ${\beta_1\colon\lambda_2\to\lambda_2??_{\rho}\lambda_3}$ and ${\beta_2\colon\lambda_3\to\lambda_2??_{\rho}\lambda_3}$ are the pushout-induced morphisms by ${\lambda_2+_{\lambda_0+\lambda_1}\lambda_3}$, we have two possible scenarios:
\begin{enumerate}
\item There is some unit $u$ in ${\lambda_2??_{\rho}\lambda_3}$ where ${\beta_1(u_2)=u=\beta_2(u_3)}$. So, there must also be a unit in ${\lambda_0+\lambda_1}$ identified with $u$. Thus, contradicting that ${\lambda_0+\lambda_1}$ has no units because ${U_0\sqcup U_1=\emptyset\sqcup\emptyset=\emptyset}$ by the fact that $\lambda_0$ and $\lambda_1$ are trivial computons (see Definitions \ref{def:markers} and \ref{def:coproduct}).
\item There is no unit in $\lambda_3$ identified with $\beta_1(u_2)$ through $\beta_2$. In this case, ${\lambda_2+_{\lambda_0+\lambda_1}\lambda_3}$ has ${|U_2|+|U_3|}$ units because ${|U_0|+|U_1|=0}$. As ${|U_2|\geq 1}$ because $\lambda_2$ is connected, ${|U_2|+|U_3|>|U_3|}$ so that ${\lambda_2+_{\lambda_0+\lambda_1}\lambda_3\cong\lambda_2??_{\rho}\lambda_3\cong\lambda_3}$ cannot hold.
\end{enumerate}
Proving that there is no right-identity is completely symmetric. Hence, the theorem holds.
\end{proof}

\section{Operational Semantics}
\label{sec:operation}

In this section, we describe execution semantics for arbitrary computons over a fixed set $\underline{n}$ of types and a fixed set $B$ of devices. For the sake of simplicity, we assume that a type is just a set of values, as in the formalisation of the operational semantics of coloured Petri nets \cite{jensencoloured1996}. As we are dealing with natural numbers to represent types, we require a deterministic way of mapping numbers to concrete types. This is done through the typing function described in Definition \ref{def:typing-function}.

\begin{definition}[Typing Function]\label{def:typing-function}
The typing function $T$ for a computon $\lambda$ has the signature ${\underline{n}\to\mathcal{U}}$ where ${\mathcal{U}}$ is the universe of types of some fixed type system. We assume ${\mathcal{C}}$ is in $\mathcal{U}$, which is the control type containing a single value $*$ that represents a control signal. 
\end{definition}

A typing function specifies the type of values a port can buffer. Definition \ref{def:computon-state} establishes that the collection of values associated to each port at time $j$ gives the state of a computon at $j$. 

\begin{definition}[Computon State] \label{def:computon-state}
The state of a computon $\lambda$ at $j$ is a function ${\delta^j\colon P\to(\bigcup_{x\in\underline{n}}T(x))\cup\{\bot\}}$ where, for all ${p\in P}$, ${\delta^j(p)}$ is of type ${T(c(p))}$. We say ${\delta^j}$ is initial if ${j=0}$, ${\delta^j(p)\neq\bot}$ for every ${p\in P^+}$ and ${\delta^j(q)=\bot}$ for all ${q\notin P^+}$. Here, we use $\bot$ to express value absence.
\end{definition}

In each state, units can be enabled or idle. Definition \ref{def:status} states that a unit is enabled at $j$ if all the ports connected to it have values assigned by ${\delta^j}$; otherwise, it is idle. When all units are idle, a final state is reached (see Definition \ref{def:computon-state-final}). 

\begin{definition}[Computation Unit Status]\label{def:status}
A unit ${u \in U}$ of a computon $\lambda$ is enabled at time $j$ if ${\delta^j(p)\neq\bot}$ for all ${p\in\bullet u}$; otherwise, $u$ is idle under ${\delta^j}$. 
\end{definition}

\begin{definition}[Termination] \label{def:computon-state-final}
A state ${\delta^j}$ of a computon $\lambda$ is final if each unit ${u\in U}$ is idle under ${\delta^j}$.
\end{definition}

When a parallel or a branching composite are in an initial state, multiple units can be enabled simultaneously. Particularly, in parallel computons, all enabled units are triggered simultaneously. In the case of branching, only one unit is chosen for activation at a time. To deal with such non-determinism, units are partitioned according to the source ports they share. For example, if $u_1$ and $u_2$ are two units with ${\bullet u_1=\bullet u_2}$, then they belong to the same partition. To select a concrete unit from each partition, we simply invoke the axiom of choice which non-deterministically chooses a representative unit. The collection of representatives yields the set of units ready for evaluation (see Definition \ref{def:ready}).

\begin{definition}[Ready Computation Units] \label{def:ready}
Given a computon $\lambda$, let ${E^j}$ be the finite set of computation units enabled under ${\delta^j}$ and $\sim$ be the equivalence relation on ${E^j}$ given by ${u_1\sim u_2\iff\bullet u_1=\bullet u_2}$ for all ${u_1,u_2\in E^j}$. If $A$ is the partition induced by $\sim$ and $h$ is the random choice function on $A$, ${\{h(E)\mid E\in A\}}$ is the set ${R^j}$ of computation units that are ready to be evaluated under ${\delta^j}$. 
\end{definition}

After forming the set ${R^j}$ under a state ${\delta^j}$, the computing devices of each ${R^j}$-unit are invoked to yield a new state at time ${j+1}$ (see Definitions \ref{def:evaluation} and \ref{def:transition}).

\begin{definition}[Computation Unit Evaluation]\label{def:evaluation}
Given an outflow ${o\in O}$ of a computon $\lambda$ and a state ${\delta^j}$, the result ${\llbracket f(o) \rrbracket^j}$ of evaluating a computing device ${f(o)}$ under ${\delta^j}$ is $f(o)(\delta^j(p_1),\ldots,\delta^j(p_k))$ such that there is a bijection ${\phi\colon\{1,\ldots,k\}\to s(r^{-1}(o))}$ satisfying ${p_m=\phi(m)}$ for all ${m\in\{1,\ldots,k\}}$. We say that the value ${\llbracket f(o) \rrbracket^j}$ is well-typed if and only if it is an element of ${(T\circ c \circ t)(o)}$.
\end{definition}

\begin{definition}[State Transition]\label{def:transition}
Given the state ${\delta^j}$ of a computon $\lambda$ at time ${j\geq 0}$, ${\delta^{j+1}}$ is given as follows for all ${p\in P}$:
\[\delta^{j+1}(p) = 
    \begin{cases}
       \ast & (\exists u \in R^j)[p\in u\bullet\text{ and }T(c(p))=\mathcal{C}]\\
       \llbracket f(o) \rrbracket^j & (\exists u \in R^j)(\exists o \in O)[\sigma(o)=u\text{ and }t(o)=p]\\       
       \delta^j(p) & (\nexists u \in R^j)[p\in\bullet u\cup u\bullet]\\ 
       \bot & \text{otherwise}
    \end{cases}
\]
\end{definition} 

At time ${j+1}$, a new state is constructed from ${\delta^j}$ according to the four cases considered by Definition \ref{def:transition}. The first case serves to store a control signal in each control port connected from each computation unit in ${R^j}$. The second one assigns the result of each computing device from each ${R^j}$-unit. The third case serves the role of a memory to keep untouched the values of those ports attached to idle units under ${\delta^j}$. The last case simply assigns the value $\bot$ to the ports connected to each unit in ${R^j}$, indicating that those (input) values have been consumed by ready units. These four cases collectively define a state transition during a computon's execution, a process that continues until reaching a state in which all units are idle. In other words, termination occurs when there is a finite orbit of states from the computon's initial state to a computon's final state. Although termination is not guaranteed in general, it is expected for composites.

\section{Implementation}
\label{sec:implementation}

We implemented the semantics of the computon model in Idris 2 \cite{bradyidris2021}, a functional programming language that treats (dependent) types as first-class entities, which allowed an almost direct realisation of the key semantic constructs. The first step in this process was to select a suitable representation for finite sets and total functions. For this, we found $\code{Fin m}$ adequate, which is a type with exactly $m$ inhabitants corresponding to the natural numbers ${0,\ldots,m-1}$. Totality of $\code{Fin}$ functions is a built-in property enforced by the Idris compiler. After deciding such representations, we implemented a suite of Idris functions to perform basic set operations such as fiber and image. With this, we subsequently implemented functions to compute disjoint union, union and pushout over $\code{Fin}$ types. 

Our purpose is to support automated composition via control-driven composition operators, so we are interested in the computational angle of colimit constructions. Accordingly, both union and disjoint union are characterised as records with four fields each: the cardinality of two operands and two canonical injections from the operands to the colimit construction being built. Cardinalities play an important role in our implementation since they define actual finite sets.\footnote{This simplification comes from the fact that the isomorphism class of a finite set $A$ can be identified with $|A|$.} Accordingly, considering that $\code{Fin m}$ has exactly $m$ monotonically increasing inhabitants, the disjoint union and union of $\code{Fin x}$ with $\code{Fin y}$ is simply $\code{x+y}$ and $\code{maximum(x,y)}$, correspondingly. The pushout of $\code{Fin}$ functions is a record that contains the cardinality of the pushout object as well as the two pushout-induced functions into the pushout object. Automatically constructing such functions is done through a concrete algorithm.

The implementation of pushout and coproduct over $\code{Fin}$ forms the basis on which the notion of a category of computons is implemented upon. A computon object $\lambda$ is particularly a record with 15 fields, holding the cardinalities of $U$, $P$, $I$, $O$ and $\underline{n}$ as well as the total functions $s$, $t$, $\sigma$, $\tau$, $c$, $r$ and $f$. The cardinality of $B$ is not needed since we use the $\code{String}$ type to represent computing devices. Accordingly, ${f\colon O\to B}$ is an Idris function ${\code{Fin o}\to \code{String}}$, where $\code{o}$ is the cardinality of $O$. In addition to the components from Definition \ref{def:computon}, the \code{Computon} record includes fields to prove the non-emptiness of $P$ and $\underline{n}$. By the definition of a $\code{Fin n}$ type, having ${n>0}$ entails that the number $0$ is always included. Proofs of surjectivity for $\sigma$, $\tau$ and $r$ are not required within the \code{Computon} record since we only deal primitives and trivials explicitly from which composites are built. Thus, such proofs are just needed for the record representing primitive computons, which in addition requires proofs of injectivity for $s$ and $t$ and proofs that $|U|=1$ and ${P\cong I\sqcup O}$.\footnote{If $\code{Fin p}$, $\code{Fin i}$ and $\code{Fin o}$ are the types representing the finite sets $P$, $I$ and $O$ of a primitive computon $\lambda$, $P\cong I\sqcup O$ reduces to verifying $\code{p=i+o}$.} The record representing trivial computons does not require surjectivity proofs at all, since such a property holds directly from Definition \ref{def:trivial-computon}. In this case, it is sufficient to keep fields for the proofs of ${|U|=|I|=|O|=0}$.

Unlike computons, a computon morphism is specified as a record with two arguments for delimiting domain and codomain. Apart from the six components from Definition \ref{def:morphism}, a \code{Morphism} record requires a proof that the function between computing devices is an inclusion as well as a proof that both boundary conditions from Definition \ref{def:morphism} are met.\footnote{The proof of inclusion for the $\underline{n}$-component is constructed dynamically to reduce proof tasks for developers.} This record forms the basis of a computon monomorphism which also is a record but with proofs that all the components from Definition \ref{def:morphism} are injective. A monomorphism serves in turn as the basis for specifying in- and out-markers. The former requires a mono from a trivial computon into all the inports of the codomain computon. The latter is similar but requires a proof that the trivial computon can be inserted into all the outports of the codomain. 

To assist developers in forming trivials, primitives, (mono)morphisms and markers, we provide Idris functions that facilitate the construction of such objects. To specify colimits, we also define records for spans, coproducts and pushouts. Coproduct and pushout are simply computed as in Definitions \ref{def:coproduct} and \ref{def:pushout} through a concrete algorithm. By offering functions to compute coproducts and pushouts, we provide a basis for the implementation of the elementary composition operators described in Sect. \ref{sec:operators}, namely total/partial sequencing, asynchronous parallelising and closed/open branching. The operator for synchronous parallelising is not implemented since this is built out of partial sequencing and asynchronous parallelising. The source code of the implemented programming environment, together with examples, are publicly available at \url{https://github.com/damianarellanes/computons-idris}. 

As the environment was implemented in Idris, the universe of data types for the operational semantics corresponds to it. To avoid parsing the static structure of a computon at run-time, we defined a data type that embodies a structural simplification, using vectors for direct access. For a computon $\lambda$, such a simplification stores $P^-$, a vector that specifies which ports buffer control, vectors for mapping each ${u \in U}$ to ${\bullet u}$ and to ${u\bullet}$, a vector from each ${p\in t(O)}$ to the unit-indexed computing devices that $p$ reads from and a vector that specifies which units are enabled at a particular point in time. A state is simply a vector of size ${|P|}$ where each index represents the value of each port. Such values are of \code{IO} type because they have side effects, as a result of treating each computing device as a string that represents the network endpoint of a behaviour given in the form of a web service. That is, computing devices are implemented as web services. We treat devices in this way to enable interoperability while avoiding colimit computations over large strings or source code compilation.\footnote{By supporting interoperability through web services, one computing device can be written in Java, another in Haskell and another in Python, just to give a few examples. That is, our implementation enables a practical hybrid model of computation.} Instead, $B$-elements are simply IP addresses contacted upon evaluation as per Definition \ref{def:evaluation}. Evaluating a computation unit yields a new vector/state to be processed in the next time step, as prescribed by Definitions \ref{def:ready} and \ref{def:transition}. This process continues until reaching a final state of the computon being executed.

\section{Related Work}
\label{sec:related-work}

The computon model, originally introduced in \cite{arellanescompositional2026}, describes composition operators for total/partial sequencing, synchronous/asynchronous parallelising, closed branching and head- and tail-iteration. Unfortunately, such an original formulation is a generic reference rather a concrete realisation that dictates how to compute at the low-level. Accordingly, the original computon definition neither defines functional relations between inflows and outflows nor imposes particular flow order within computation units, leading to a situation in which a unit $u$ has to either replicate the same value to all the data ports in ${u\bullet}$ or enforce ${u\bullet}$ to have only one data port (for holding a single datum or a product type value). To remediate this structural inflexibility with operational consequences, Definition \ref{def:computon} equips computons with a set $B$ of computing devices, together with a function chain ${I\overset{\text{r}}{\twoheadrightarrow}O\xrightarrow{f}B}$ that satisfies ${\sigma\circ r=\tau}$ to encapsulate flows within units. 

Unlike \cite{arellanescompositional2026}, Sect. \ref{sec:operation} describes operational semantics to dictate how devices are triggered within units. In \cite{arellanescompositional2026}, there is no concrete implementation or programming environment, and execution is described generically at a higher-level of abstraction via P/T Petri nets. Although similar operational semantics to ours have been proposed in \cite{arellanesboolean2026}, such a work handles units homogeneously in the form of families of NAND operators to show that the computon model is able to perform any Boolean function when treated as a non-uniform model of computation. In our work, a computation unit is a collection of computing devices each represented as a finite sequence of symbols over a finite alphabet.

A crucial difference with respect to \cite{arellanesboolean2026} and \cite{arellanescompositional2026} is that they do not provide any operator for open branching, needed to enhance flexibility towards more expressive decision-making composites. In the present work, we found that there is no need to provide an explicit operator for synchronous parallelising since such a behaviour can be constructed compositionally out of partial sequencing and asynchronous parallelising (see Sect. \ref{sec:operators-parallel}). Although we do not offer operators for looping due to the lack of space, they can be easily integrated and implemented within our framework. Unlike \cite{arellanescompositional2026}, our work studies identity laws for composition operators and replaces the original formulation of sequencing to enable this property. In particular, connectivity in the sense of Definition \ref{def:connected} is not further required.

Beyond the original computon model, there have been other MHCs that treat control flow explicitly. Process algebras \cite{baetenprocess2009,middelburgimperative2024} form perhaps the most prominent family of MHCs which provide concrete operators to compose processes by control flow. Although data flow can separately be observable and analysable, there is no explicit support for partial sequencing. String diagrams \cite{piedeleuintroduction2025} are becoming increasingly popular to specify high-level processes formally since they are grounded in categorical semantics for reasoning about computation in a compositional manner. As syntax, they provide a graphical notation for monoidal categories by visually encoding composition rules: wires denote data flows coming into/from boxes which, in turn, represent processes. Although there are no special wires for representing control flow, string diagrams support sequential, asynchronous parallel and iterative composition through morphism composition, tensor product and traces \cite{selingersurvey2011}. Sequencing is only total and there is no support for synchronous parallelising. Recently, (probabilistic) branching has been introduced \cite{villoriaenriching2025} and there have been efforts to colouring wires, but not for separating concerns \cite{earnshawstring2023}. In any case, string diagrams assume that data follows control, as evidenced by \cite{bonchidiagrammatic2025}; consequently, they do not support situations in which data reaches computing devices independently from control signals.

Workflow nets \cite{vanderaalstsoundness2011} allow the formal specification of control-driven computing devices, but do not offer any separation of concerns and, unlike process algebras and string diagrams, they do not provide any formal operators to compositionally form control-driven composites. Exogenous Connectors \cite{lauexogenous2005} and Behaviour Trees \cite{colledanchisebehavior2018} alleviate this composition issue by providing separate operators for sequencing, branching, parallelising and looping but, unfortunately, the data dimension is left implicit. In the case of \cite{lauexogenous2005}, there have been attempts to separate data from control, albeit without any formal semantics \cite{arellanesdecentralized2023,laucomponent2011}. Prosave \cite{vulgarakisformal2009} and SCADE \cite{colacoscade2017} offer such a separation, but also in an informal manner.

\section{Conclusions and Future Work}
\label{sec:conclusions}
\vspace{-5pt}
In this paper, we introduced and extended a novel MHC, referred to as the computon model, in which (trivial and primitive) computons are the fundamental building blocks. A trivial computon is intuitively a collection of typed ports, whereas a primitive one has a unique computation unit that encapsulates a collection of (potentially interrelated) computing devices, only accessible through a well-defined port-based interface. Computon interfaces have ports to buffer control signals, whereas data ports are optional. This deliberate design facilitates inductive composition for forming explicit control flow structures for the invocation of computing devices in some order. In Sect. \ref{sec:operators}, we described finite colimit constructions for composition operators to allow the formation of total/partial sequential computons, synchronous/asynchronous parallel computons and open/closed branching ones. In that section, we showed that total sequencing satisfies associativity and identity, whereas partial sequencing only fulfils identity. Asynchronous parallelising and closed branching are both associative and commutative. Only open branching satisfies all the three properties. In the future, we plan to study further algebraic laws (e.g., inversibility) and provide extra colimit constructions to support other forms of composition such as conditional looping. Offering more complex forms of (compositional) concurrency is also a future direction.

In any theoretical extension, control flow must still be a first-class composition dimension because it is what gives rise to the notion of computation. Although control is always present in any low- or even high-level computing device \cite{arellanesmodels2025}, data flow must not be neglected but it must be treated as a secondary dimension always governed by control \cite{tripakismodular2013}. Governance does not imply that data follows control as, in some high-level devices, control signals may arrive before data values, leading to constituent computing devices waiting until receiving all the input data they need. This situation can particularly arise in partial sequencing or asynchronous parallelising. Despite that data does not necessarily follow control, computation order is still predictable in our proposal due to the synchronous evaluation of computation units, as described in Sect. \ref{sec:operation}. More precisely, Sect. \ref{sec:operation} stipulates that a computation unit remains idle until receiving information in all the ports connected to it. When all input information becomes available, an idle unit transitions to an enabled state. Given that non-determinism is introduced by branching composites, multiple units can be enabled simultaneously in which case only one is chosen for evaluation arbitrarily. In the future, we would like to extend the computon model with support for probabilistic choice and enhance its expressivity through conditional evaluation. Enabled units are currently evaluated by triggering their encapsulated computing devices on the inputs they receive from the external environment or from other computons. Upon evaluation, new values are stored in the ports connected from those units, in order to reach a new state. This process continues until reaching a state in which all units become idle.

The operational semantics described in Sect. \ref{sec:operation} can be interpreted in a categorical setting for a more rigorous study of computon execution. Operational semantics is not the focus of this paper but colimit-based composition is. In the future, we plan to formalise operational semantics in the language of symmetric monoidal categories by defining ports and computing devices as objects and morphisms of a category, respectively. Although we use categorical semantics to formalise our proposed model, it is important to mention that other formal languages can be used instead. That is, the proposed MHC is independent of its specification, just as it occurs with any other foundational model.

To bridge theory and practice, we implemented the categorical semantics of the computon model in Idris 2, including the formalisation of computons, computon morphisms and colimit constructions. The implementation yields a programming environment for forming control-driven composite computons with strong structural correctness by construction. We envision that primitives and composites could populate a repository of computational behaviours to be reused across several domains. For example, in a real e-commerce scenario, there could be primitive computons for payment processing and email delivery which can altogether be reused in both travel and healthcare applications. The computon model is general enough to be used not only in software engineering but in any domain requiring computable objects, e.g., it has already been applied in Artificial Intelligence for compositionally forming a Long Short-Term Memory \cite{arellanescompositional2026}, in biomedicine for modelling the oscillatory dynamics of a mdm2-p53 regulatory pathway \cite{arellanesboolean2026} and in electronic engineering for compositional circuit design \cite{arellanesboolean2026}. Given the generality of the computon model, several other application domains could benefit from it, e.g., quantum mechanics and linguistics for compositionally describing quantum teleportation and sentence meaning, respectively. 

The implemented programming environment currently provides a basis to support cross-domain applications. In our short-term vision, end-users just have to select computons from a global repository to compose more complex computons via the control-driven composition operators presented in Sect. \ref{sec:operators}. Although our proposal enables computons to be semantically correct by construction, this is not sufficient to achieve correct reusability. For this, we need guarantees of behaviour correctness with respect to some specification, so one can equip computons with correctness proofs before storing them in a global repository. Full computon certification is an aspect we plan to investigate in the future. We particularly believe that the bottom-up composition approach enforced by the computon model could facilitate \emph{a priori} certification \cite{costantiniensuring2022}, i.e., if a computon is guaranteed to meet its specification, it will remain correct even if it ever becomes part of a composite. For example, if computons $\lambda_1$ and $\lambda_2$ are certified, the specification of $\lambda_1\unrhd_\rho\lambda_2$ will be predictable prior composition, leading to predictable system assembly.

\bibliographystyle{eptcs}
\bibliography{refs}

\end{document}